\newtheorem{theorem}{Theorem}
\newtheorem{proposition}{Proposition}
\newtheorem{lemma}{Lemma}
\title{Repulsive mixtures via the sparsity-inducing partition prior}
\author[1, 4]{Alexander Mozdzen\thanks{Corresponding author at: Department of Paediatrics, National University of Singapore, NUHS Tower Block, Level 12, 1E Kent Ridge Road, Singapore 119228. 
Email: alex.mozdzen@gmail.com}}
\author[1]{Timothy Wertz}
\author[1,4]{Maria De Iorio}
\author[3]{Andrea Cremaschi}
\author[2]{Gregor Kastner}
\author[1,4]{Johan Eriksson}
\affil[1]{National University of Singapore, Singapore}
\affil[2]{University of Klagenfurt, Austria}
\affil[3]{IE University, Spain}
\affil[4]{Singapore Institute for Human Development and Potential, A*STAR, Singapore}
\date{\today}
\begin{document}

\renewcommand{\doitext}{DOI: }
\renewcommand{\url}[1]{}
\newcommand{\TW}[1]{{\color{teal}[TW: \textbf{#1}]}} 

\newcommand{\AM}[1]{{\color{purple}[AM: \textbf{#1}]}} 
\newcommand{\AMout}[1]{\textcolor{purple}{\sout{#1}}}

\newcommand{\MDI}[1]{\textcolor{magenta}{[MDI: \textbf{#1}]}}

\newcommand{\AC}[1]{\textcolor{blue}{[AC: \textbf{#1}]}}
\newcommand{\ACout}[1]{\textcolor{blue}{\sout{#1}}}

\newcommand{\GK}[1]{\textcolor{brown}{[GK: \textbf{#1}]}}
\newcommand{\GKout}[1]{\textcolor{brown}{\sout{#1}]}}

\newcommand{\iid}{\stackrel{\rm iid}{\sim}}
\newcommand{\ind}{\stackrel{\rm ind}{\sim}}

\newcommand{\Gammad}[1]{ \mathcal{G}\left(#1\right)}
\newcommand{\Gammainv}[1]{\mathcal{IG} \left(#1\right)}

\newcommand{\inwish}[1]{\mathcal{IW} \left(#1\right)}

\newcommand{\SDir}[1]{Selberg Dirichlet}
\newcommand{\Dir}[1]{Dirichlet}
\newcommand{\GSDir}[1]{generalized Selberg Dirichlet}

\newcommand{\lognorm}[1]{log\mathcal{N} \left(#1\right)}
\newcommand{\Mna}[1]{ M_{na}}

\newcommand{\phia}[1]{ \phi_{a}}
\newcommand{\phina}[1]{ \phi_{na}}

\newtheorem{definition}{Definition}


\newcommand{\alttext}[1]{}

\maketitle

\begin{abstract}

We introduce a novel prior distribution for modelling the weights in mixture models based on a generalisation of the  Dirichlet distribution, the Selberg Dirichlet distribution. This distribution contains a repulsive term, which naturally penalises values that lie close to each other on the simplex, thus encouraging few dominating clusters. The repulsive behaviour induces additional sparsity on the number of components. We refer to this construction as sparsity-inducing partition (SIP) prior. By highlighting differences with the conventional Dirichlet distribution, we present relevant properties of the SIP prior and demonstrate their implications across a variety of mixture models, including finite mixtures with a fixed or random number of components, as well as repulsive mixtures. We propose an efficient posterior sampling algorithm and validate our model through an extensive simulation study as well as an application to a biomedical dataset describing children's Body Mass Index and eating behaviour.

\end{abstract}
{\bf Keywords:} Bayesian nonparametrics, hierarchical modelling, overfitted mixtures, model-based clustering,  mixture models, repulsive prior

\section{Introduction}
Clustering is a fundamental statistical technique used to uncover underlying structures within datasets. Among the various approaches, mixture models have gained particular prominence owing to their flexibility in representing data as originating from multiple latent sub-populations \citep[see, for example,][]{ banfield_1993_modelbased_clust, bensmail_1997_modbased_clust,mclachlan_2000_fmm, frühwirth_schnatter_2006_fmm}. In a mixture model, observations are assumed to arise from one of
$M$ groups, each group being typically modelled by a distribution from a parametric family. The contribution of each group is referred to as a component of the mixture, and is weighted by the relative frequency of the group in the population. This provides a conceptually simple way of relaxing distributional assumptions and flexibly approximating distributions poorly captured by standard parametric families. Moreover, it provides a framework by which observations may be clustered together for purposes of discrimination or classification. The distribution of a set of $N$ observations $\bm{y} = (\bm y_1, \dots, \bm y_N)$ in a finite mixture model with $M$ components is:
\begin{align}
\label{eq:lik}
f_y( \bm{y} \mid M, \bm{w}, \bm{\theta} ) = \prod_{i=1}^N \sum_{m=1}^M w_m f\left( \bm y_i \mid \bm \theta_m \right)
\end{align}
where $\bm{w} = (w_1,\dots,w_M)$ denotes the weights, with $\sum_{m=1}^M w_m = 1$, $0 \leq w_m \leq 1$, while $\bm \theta = \left(\bm \theta_1, \dots, \bm \theta_M \right)$ denotes the array of component-specific parameter vectors.  In representing underlying clusters as components of the mixture, information about cluster sizes, shapes and locations can be determined through the estimation of $\bm{w}$ and $\bm \theta$.

In this work, we focus on the prior distribution for the mixture weights exploring an extension of the Dirichlet distribution, the \textit{Selberg Dirichlet} distribution. This prior influences the distribution of the number of clusters in the sample by incorporating a repulsive term to encourage dissimilarity among the weights, thereby inducing sparsity. We refer to this as the sparsity-inducing partition (SIP) prior. While a larger number of mixture components can potentially improve density estimation, uncovering underlying clusters is a nuanced problem without a ready answer \citep{fraley_1998_how_many_clust}. Different approaches exist, from fixing $M$ to a pre-specified value \citep{frühwirth_schnatter_2006_fmm,bensmail_1997_modbased_clust}, placing a prior on it \citep{miller_mfm_2018, Nobile_2007_truncpoisson, richardson_1997_Mrandom}, or setting the value deliberately high and letting small concentration parameters in the Dirichlet prior empty redundant components \citep{rousseau_2011_overfitted_mix, malsiner-wall_i2016_sparse_mix}. In the case of a random $M$, the novel method leads to a shrinkage of the number of clusters, eliminating redundant clusters. 

\section{Mixtures and repulsive mixtures}\label{sec:mix_repmix}
In the context of mixture modelling, it is important to highlight the distinction between the number of components and the number of clusters \citep{argiento_infinity_2022}. The number of components  $M$ refers to the number of possible clusters and corresponds to the data generating process, while the number of clusters is the number of allocated components, i.e., components to which at least an observation has been assigned. Such distinction is important, in particular when devising computational strategies. 
The mixture model in \eqref{eq:lik} can be expressed hierarchically through the allocation vector
$\bm c = (c_1, \dots, c_N)$, which indicates the component assignment for each observation:
\begin{align}\label{eq:mix_hierarch}
\bm y_i \mid c_i, \bm \theta_{c_i} &\ind f\left(\bm y_i \mid \bm \theta_{c_i}\right), \quad i=1, \ldots, N \\
\bm \theta_m &\iid P_{\theta}, \quad m=1, \ldots, M \nonumber \\
c_1, \dots, c_N \mid \bm{w} &\iid \mathcal{C}\left(1, \bm{w} \right)  \nonumber \\
\bm{w} &\sim Dir(\alpha) \nonumber \\
M &\sim p_M \nonumber
\end{align}
where $f$ is usually a parametric distribution, $P_{\theta}$ a prior measure for the location parameters $\bm \theta_m$ on $\Theta \in \mathcal{R}^p$, $\mathcal{C}\left(1, \bm{w} \right)$ the categorical distribution with probability vector $\bm w$, $Dir(\alpha)$ is the symmetric Dirichlet distribution with parameter $\alpha > 0$, and $p_M$ a probability mass function on $\{1, 2,...\}$. 
The Bayesian literature suggests various approaches for dealing with the number of components $M$.
The three  main approaches are: (i) setting $M=\infty$ leading to a nonparametric mixture, (ii) keeping $M$ fixed and choose the number of components according to a model choice criterion, (iii) treating $M$ as random and an object of posterior inference. Here we follow the last approach, allowing for a fully Bayesian treatment. 
Possible choices for $p_M$ include an improper uniform prior \citep{richardson_1997_Mrandom}, or a beta-negative-binomial distribution introduced by \citet{fruhwirth_2021_telescope}. Historically, sampling from the posterior distribution of such models requires the implementation of labour-intensive reversible jump Markov chain Monte Carlo (RJMCMC) methods \citep{green_1995_reversible, richardson_1997_Mrandom}, which adaptively alter the dimensionality of the parameter space to account for different values of $M$. Recent developments bridge the trans-dimensional gap using an augmented scheme involving unnormalised weights \citep{argiento_infinity_2022}, or a birth-and-death step \citep{cremaschi_2023_chaos}, which we employ for posterior inference (see  Section~\ref{sec:tiebreaker_mixture}.)

Due to its support on the simplex and conjugacy in a Bayesian framework, the Dirichlet distribution has been the standard choice as a prior for the component weights $\bm w$. It is parametrised by a vector of concentration parameters, with small values shifting the probability mass to the boundaries of the simplex and larger values favouring similar values for the weights by concentrating the probability mass to the centre. \cite{argiento_infinity_2022} extend the class of possible prior distributions by introducing a constructive definition of the mixture weights through the normalisation of a finite point process. Recently, \cite{page_2023_asymmetric_dir} proposed a novel asymmetric Dirichlet prior, which allows the introduction of prior information on the number of clusters in a direct way.
An important aspect of the model specification in~\eqref{eq:mix_hierarch} is the assumption of independent and identically distributed (i.i.d.) component parameters $\bm \theta_m$, motivated also by computational convenience. Such a prior poses no restrictions on the proximity of the cluster centres, which can lead to the creation of multiple overlapping clusters, possibly introducing redundancy. If the primary objective of the analysis is density estimation, such behaviour can be desirable and beneficial. If, on the other hand, the aim is to investigate the underlying cluster structure, the presence of redundant or overlapping clusters can hinder meaningful interpretation.

To mitigate this issue, repulsive mixtures have been proposed with the goal of favouring well-separated component locations. In the Bayesian paradigm, this can be achieved by incorporating a repulsive term into the prior for the component parameters. Motivated by models of interacting particles in statistical mechanics, repulsive mixture models have emerged as a promising solution to the challenges of overfitting and spurious cluster detection. 
Specifically, repulsive mixture models build on the well-established theory of Gibbs point processes (GPPs) and determinantal point processes (DPPs). GPPs, traditionally used in spatial statistics \citep{daley_2003_pointprocess}, describe  the total energy of a configuration of particles. The fundamental idea is that the total energy can be expressed as a sum of energy potentials, accounting for individual particles, pairs, triples, and higher-order interactions. Typically, only the first- and second-order terms are included, the latter of which represent the pairwise potential and thereby the attraction or repulsion between pairs of particles. This idea has been adapted in the context of mixture models by incorporating the pairwise potential into the prior for the component parameters, thereby enabling modelling of repulsion between locations. Notable contributions include the works by \cite{petralia_reomix_2012}, \cite{quinlan_repmix_2021} and \cite{xie_2020_repgmm}, who augment conventional priors by multiplying them with a function used to model the pairwise interaction. A major drawback of the approach based on Gibbs processes is the intractability of their normalising constant for a random number of components, resulting in \cite{petralia_reomix_2012, fuquene_2019_nlp, quinlan_repmix_2021} fixing $M$. \cite{cremaschi_2023_chaos} introduce a novel class of repulsive prior distributions, by setting the joint eigenvalue distributions of random matrices as prior distribution for the locations. They show that for such distributions the Large Deviation Principle holds, ensuring uniqueness and existence for $M\rightarrow\infty$. Moreover, the normalising constant is available in closed form, greatly simplifying computations. Finally, \cite{10.1093/jrsssb/qkaf027}  provide a unified framework for the construction  of random probability measures with interacting atoms, allowing for  both repulsive and attractive behaviours. 

Determinantal Point Processes (DPPs) offer an alternative framework for modelling repulsion in mixture models. Initially studied by \cite{macchi_1975_dpp} to model the distribution of atomic particles at equilibrium, DPPs assume that the distribution of the locations is proportional to the determinant of a positive semi-definite matrix. \cite{xu_repddp_2016}, \cite{bianchini_ddpremix_2020} and \cite{beraha_2021_repmix_mcmc} apply DPP priors in mixture models with a random number of components, at the price of elaborate and complex Markov chain Monte Carlo algorithms to sample from the intractable posterior.

The main contribution of this paper is the introduction of a mixture model that incorporates repulsion in both the component weights and locations, along with an investigation of its properties in terms of the number and size of clusters. The Large Deviation Principle (proved in the Supplement) ensures the existence and uniqueness of the infinite dimensional object from which the Selberg Dirichlet distribution is derived for $M$ finite. 

\section{The SIP mixture}
\label{sec:tiebreaker_mixture}
\subsection{The SIP prior}
As distribution for the weights of a mixture, we specify the \textit{Selberg Dirichlet} \citep{pham_selberg_2009} which is an extension of the well-known Dirichlet distribution. 
\begin{definition}[Selberg Dirichlet]
The M-dimensional Selberg Dirichlet distribution describes a random vector $\bm{w} = (w_1, \ldots, w_M)$ defined on the simplex, where $0 \leq w_m \leq 1$ for all $m$ and $\sum_{m = 1}^M w_m = 1$. The density function of this distribution is:
\begin{align}
\label{eq:sdir_prior}
&SDir(\bm{w}, \alpha, \gamma, M) =  
\frac{1}{D(\alpha, \gamma, M)} 
\left( \prod_{m = 1}^{M}  w_m^{\alpha - 1} \right)  |\triangle \bm{w}|^{2\gamma}\\
&D(\alpha, \gamma, M) = \frac{\Gamma(\alpha)}{\Gamma(M\alpha + \gamma (M - 1)(M-2))}
\prod_{j = 1}^{M - 1} \frac{\Gamma(\alpha + (j - 1)\gamma)\Gamma(1 + j \gamma) }{\Gamma(1 + \gamma)} \nonumber
\end{align}
where the parameters satisfy $\gamma \geq 0$ and $\alpha > 0$. The term $\triangle \bm{w} = \prod_{1 \leq i < j \leq M - 1} |w_i - w_j|$ represents the product of the absolute pairwise differences among $M - 1$ components of $\bm{w}$. 
\end{definition}

Similar to the repulsive priors in \cite{petralia_reomix_2012, quinlan_repmix_2021,xie_2020_repgmm} and \cite{cremaschi_2023_chaos}, the Selberg Dirichlet distribution contains a term corresponding to a standard distribution (the Dirichlet) and  a repulsive term.

Figure~\ref{fig:sdir_prior} shows the Selberg Dirichlet distribution for $M = 3$ as a ternary plot for varying values of $\gamma$ and $\alpha$. When $\gamma = 0$, the Selberg Dirichlet reduces to the Dirichlet, represented in the first column of  Figure~\ref{fig:sdir_prior}. For the remaining plots, the probability becomes 0 for combinations where the first two values are close, visually indicated by the dashed gray lines on the main axes. Similarly to the Dirichlet distribution, a higher $\alpha$ leads to more prior mass in the centre of the simplex, while for smaller values of $\alpha$ the probability mass concentrates near the boundary. Conversely, higher values of the repulsion parameter favour realisations near the boundary of the simplex,  highlighting the repulsive property of the distribution. 

For a vector $\bm{w} = (w_1, \ldots, w_M)$ distributed according to the M-dimensional Selberg Dirichlet distribution and defined on the simplex, where $0 \leq w_m \leq 1$ for all $m$ and $\sum_{m = 1}^M w_m = 1$, the following proposition holds.
\begin{proposition}
Expectations, higher order moments, as well as the variance of $w_j$,  are given by
\begin{align*}
    \mathbb{E} \left\{ \prod_{i=1}^{M} w_i \right\} &= \frac{D(\alpha+1,\gamma,M)}{D(\alpha,\gamma,M)}\\
    \mathbb{E} \left\{ w_j \right\} &= \frac{\alpha}{\alpha M + (M - 1)(M - 2)\gamma} \\
    \mathbb{E} \left\{ \prod_{i=1}^{M} w_i^{k} \right\} &= \frac{D(\alpha+k, \gamma, M)}{D(\alpha, \gamma, M)} \\
 \mathbb{E} \left\{ w_j^k \right\} &= \frac{\Gamma(\alpha+k)
 \Gamma(\alpha M + (M-1)(M-2)\gamma)
 }{\Gamma(\alpha)\Gamma(\alpha M + k + (M-1)(M-2)\gamma)}\\     
    \mathbb{E} \left\{ w_j^2 \right\} &= \frac{\alpha(\alpha+1)}{(\alpha M + 1 + (M-1)(M-2)\gamma)(\alpha M + (M-1)(M-2)\gamma)}\\
    \mathbb{V} \left\{ w_j \right\} &= \frac{\alpha}{k} \left( \frac{1 - \frac{\alpha}{\eta}}{\eta + 1} \right)
\end{align*}
where $ \eta = \alpha M + (M - 1)(M - 2)\gamma$.
\end{proposition}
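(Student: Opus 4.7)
The plan is to read off the product moments directly from the normalising identity of the Selberg Dirichlet, and to reduce the single-coordinate moments to those of a Beta distribution by integrating out the remaining coordinates.

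For any nonnegative integer $k$, I would compute
\[
\mathbb{E}\!\left\{\prod_{i=1}^M w_i^{k}\right\}
= \frac{1}{D(\alpha,\gamma,M)}\int \prod_{m=1}^M w_m^{\alpha+k-1}\,|\triangle\bm w|^{2\gamma}\,d\bm w
= \frac{D(\alpha+k,\gamma,M)}{D(\alpha,\gamma,M)},
\]
simply because the integrand is (up to the normalising constant) the Selberg Dirichlet density with $\alpha$ replaced by $\alpha+k$. Taking $k=1$ recovers the first displayed identity and the general case is the third; no further input is needed for these two formulas.

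For the marginal moments I would exploit the fact that $|\triangle\bm w|$ involves only $w_1,\ldots,w_{M-1}$, so $w_M$ is the distinguished coordinate. The idea is to marginalise over $(w_1,\ldots,w_{M-1})$ by the change of variables $w_i = (1-w_M)u_i$ for $i \leq M-1$, with $(u_1,\ldots,u_{M-1})$ on the $(M-2)$-simplex. The Jacobian would then contribute $(1-w_M)^{M-2}$; the Dirichlet factor $\prod_{i \leq M-1} w_i^{\alpha-1}$ would contribute $(1-w_M)^{(M-1)(\alpha-1)}$; and the repulsive factor, being homogeneous of degree $(M-1)(M-2)\gamma$ in $(w_1,\ldots,w_{M-1})$, would contribute $(1-w_M)^{(M-1)(M-2)\gamma}$. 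The remaining $\bm u$-integral is a Selberg Dirichlet normalisation for $M-1$ variables and is independent of $w_M$. Summing exponents identifies $w_M$ as a $\mathrm{Beta}\bigl(\alpha,\ (M-1)\alpha + (M-1)(M-2)\gamma\bigr)$ variate, with total shape parameter $\eta = \alpha M + (M-1)(M-2)\gamma$.

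From that marginal identification, the remaining formulas are the standard Beta moments, giving $\mathbb{E}\{w_j^k\}$ and its specialisations at $k=1,2$ directly, and the variance follows in a single algebraic step from $\mathbb{E}\{w_j^2\}-\mathbb{E}\{w_j\}^2 = \alpha(\eta-\alpha)/[\eta^2(\eta+1)] = (\alpha/\eta)(1-\alpha/\eta)/(\eta+1)$, which matches the displayed form once the $k$ in the denominator of the proposition is read as a typographical slip for $\eta$. The only substantive step will be the change of variables: one must track the three separate contributions of $(1-w_M)$---from the Jacobian, from the Dirichlet factor, and from the homogeneity of the Vandermonde-like factor---so that the $\bm u$-integral cleanly detaches from $w_M$. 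After that, everything collapses to elementary manipulation of the Gamma ratios appearing in $D(\alpha,\gamma,M)$.
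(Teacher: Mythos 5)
Your proof is correct, and for the marginal moments it takes a genuinely different route from the paper. For the product moments the two arguments coincide exactly: shift $\alpha$ to $\alpha+k$ and read off the ratio of normalising constants. For the single-coordinate moments the paper instead invokes the Mehta/Selberg integral in the form $A(\alpha,\beta,\gamma,M)$ --- the normalisation in which the distinguished coordinate carries exponent $\beta-1$ while the others carry $\alpha-1$ --- and writes $\mathbb{E}\{w_j^k\}=A(\alpha,\alpha+k,\gamma,M)/D(\alpha,\gamma,M)$ before simplifying the Gamma ratios. Your substitution $w_i=(1-w_M)u_i$ replaces that citation with an elementary computation: the three exponent contributions you track, namely $(M-2)$ from the Jacobian, $(M-1)(\alpha-1)$ from the Dirichlet factor, and $(M-1)(M-2)\gamma$ from the homogeneity of $|\triangle \bm w|^{2\gamma}$, sum to $(M-1)\alpha+(M-1)(M-2)\gamma-1$, so the marginal of $w_M$ is exactly $\mathrm{Beta}(\alpha,\eta-\alpha)$ and every remaining formula, including the variance (with the typographical $k\mapsto\eta$ you correctly flag), is a standard Beta moment. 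This buys self-containedness: you never need the closed form of the Selberg/Mehta integral, only that the residual $\bm u$-integral is a finite constant independent of $w_M$. Two small points. First, that residual integral is not literally ``a Selberg Dirichlet normalisation for $M-1$ variables'' --- $D(\alpha,\gamma,M-1)$ would put repulsion among only $M-2$ of the $u_i$'s, whereas here all $M-1$ interact --- but this is harmless since only finiteness and independence of $w_M$ are used. Second, a caveat you share with the paper: because $|\triangle\bm w|$ omits $w_M$, the distribution is not exchangeable, and both arguments establish the marginal formulas only for the distinguished component $j=M$; for $j<M$ the symmetry among $w_1,\dots,w_{M-1}$ and the sum constraint give $\mathbb{E}\{w_j\}=(\alpha+(M-2)\gamma)/\eta$, so the proposition's generic ``$w_j$'' should be read as referring to that distinguished component. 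Your change-of-variables derivation actually makes this asymmetry transparent, which the paper's appeal to the Mehta integral obscures.
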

\renewcommand{\qedsymbol}{}
\begin{proof}
    See supplementary material.
\end{proof}
\renewcommand{\qedsymbol}{$\square$}
The variance of each component of the Selberg Dirichlet random variable is decreasing in $\gamma$ (proof provided in the supplementary material), which implies that the variance is always higher for a standard Dirichlet (i.e., when $\gamma = 0$). Beyond standard moments, we also consider a measure of internal dispersion \citep{pham_selberg_2009}:
\[
\theta_\tau(\bm{w}) = \left| \prod_{1 \leq i < j \leq M-1} (w_i - w_j) \right|^\tau
\]
whose expectation is
\[
\mathbb{E} \left\{ \theta_\tau(\bm w) \right\} = \frac{D(\alpha, \gamma + \frac{\tau}{2}, M)}{D(\alpha, \gamma, M)}
\]
for $\tau \geq 0$. Analogous to the determinant of the covariance matrix, this measure summarises the overall dispersion of a random vector into a single scalar value. Figure~\ref{fig:exp_intern_disper} displays the expected internal dispersion across different values of $\alpha$, $M$, $\gamma$, and $\tau$. As the dispersion parameter $\gamma$ increases, the internal dispersion rises, whereas higher values of $\alpha$, $M$, or $\tau$ lead to a reduction in the measure. 

\begin{figure}[tp]
\centering
     \includegraphics[width=.22\linewidth]{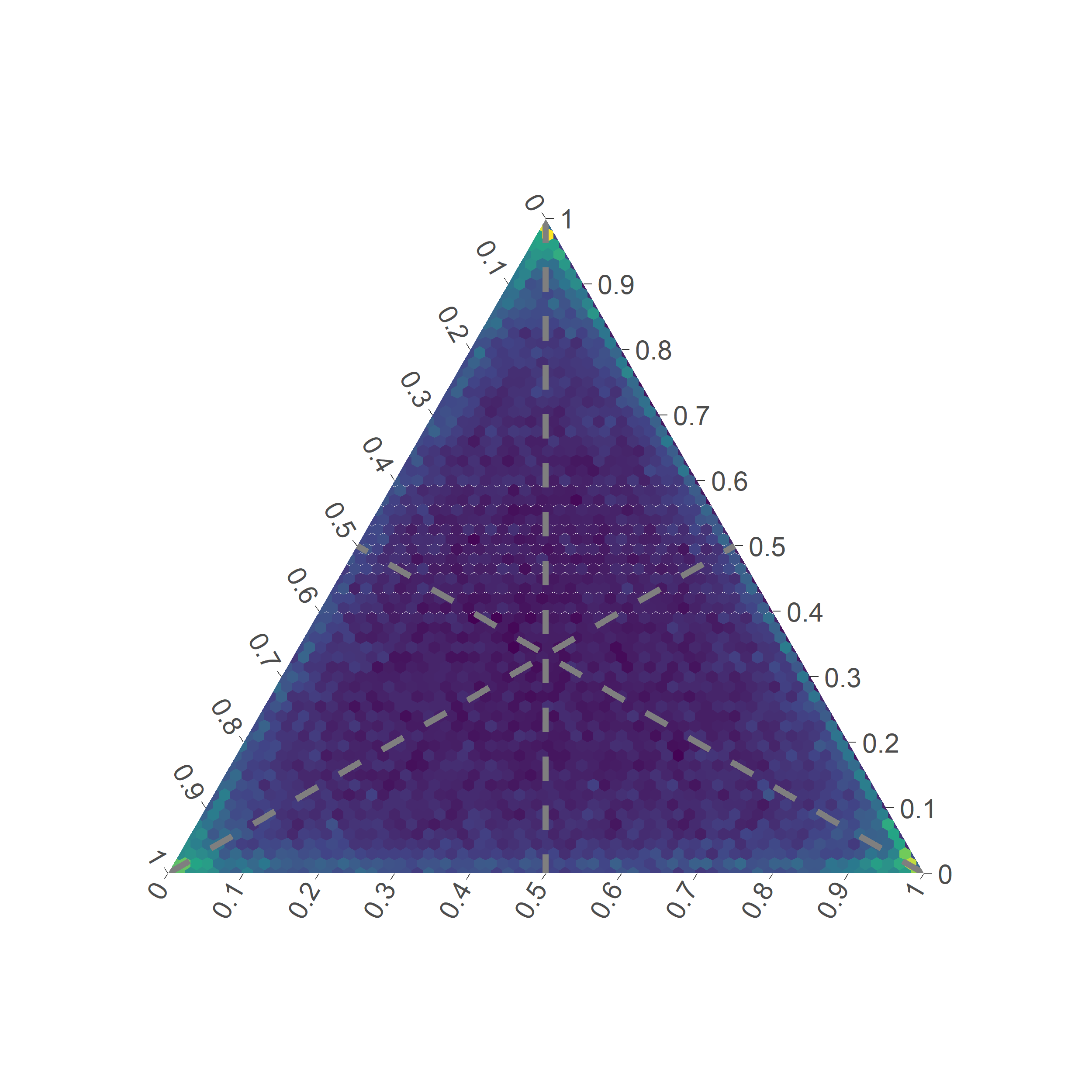}
    \includegraphics[width=.22\linewidth]{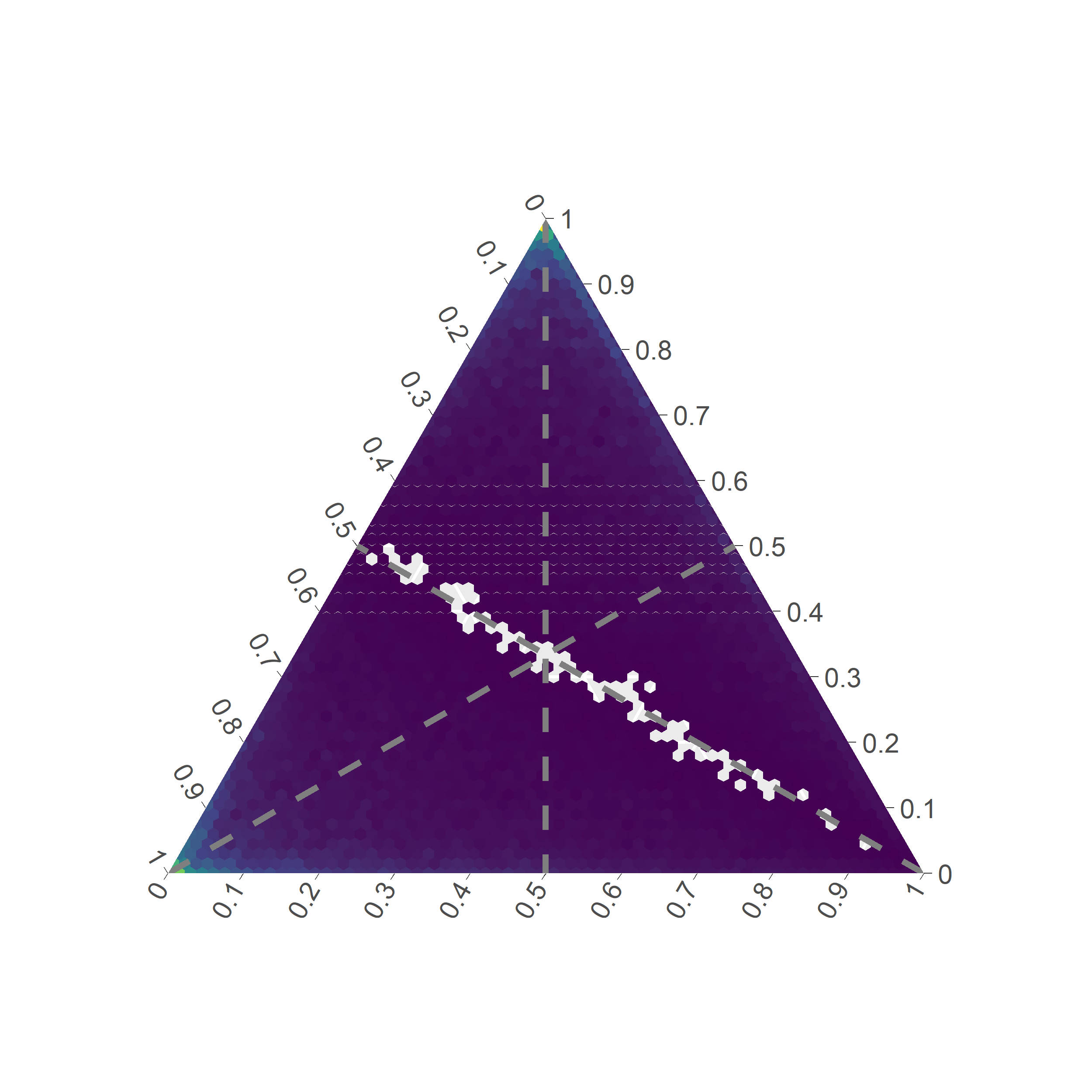}
     \includegraphics[width=.22\linewidth]{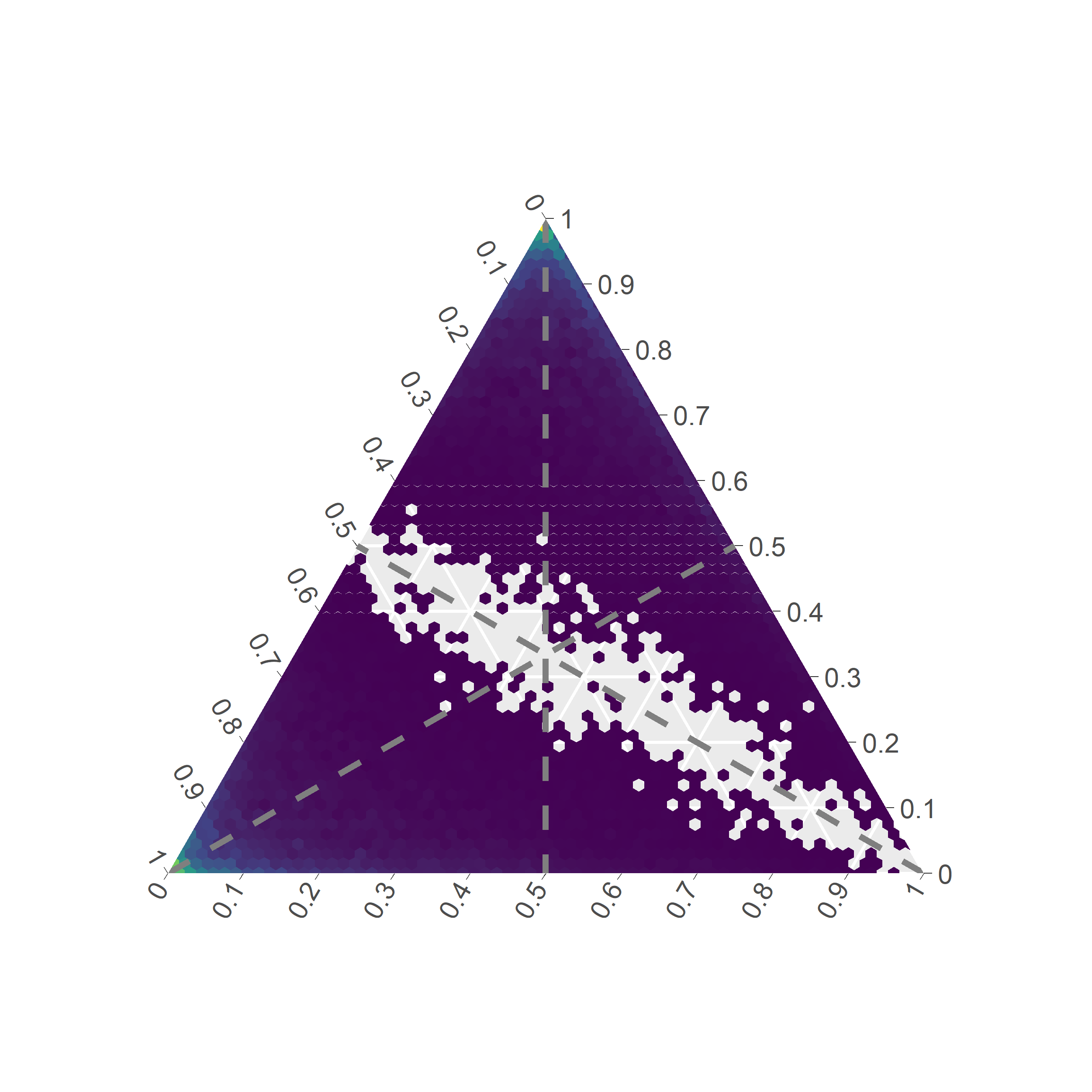}
    \includegraphics[width=.22\linewidth]{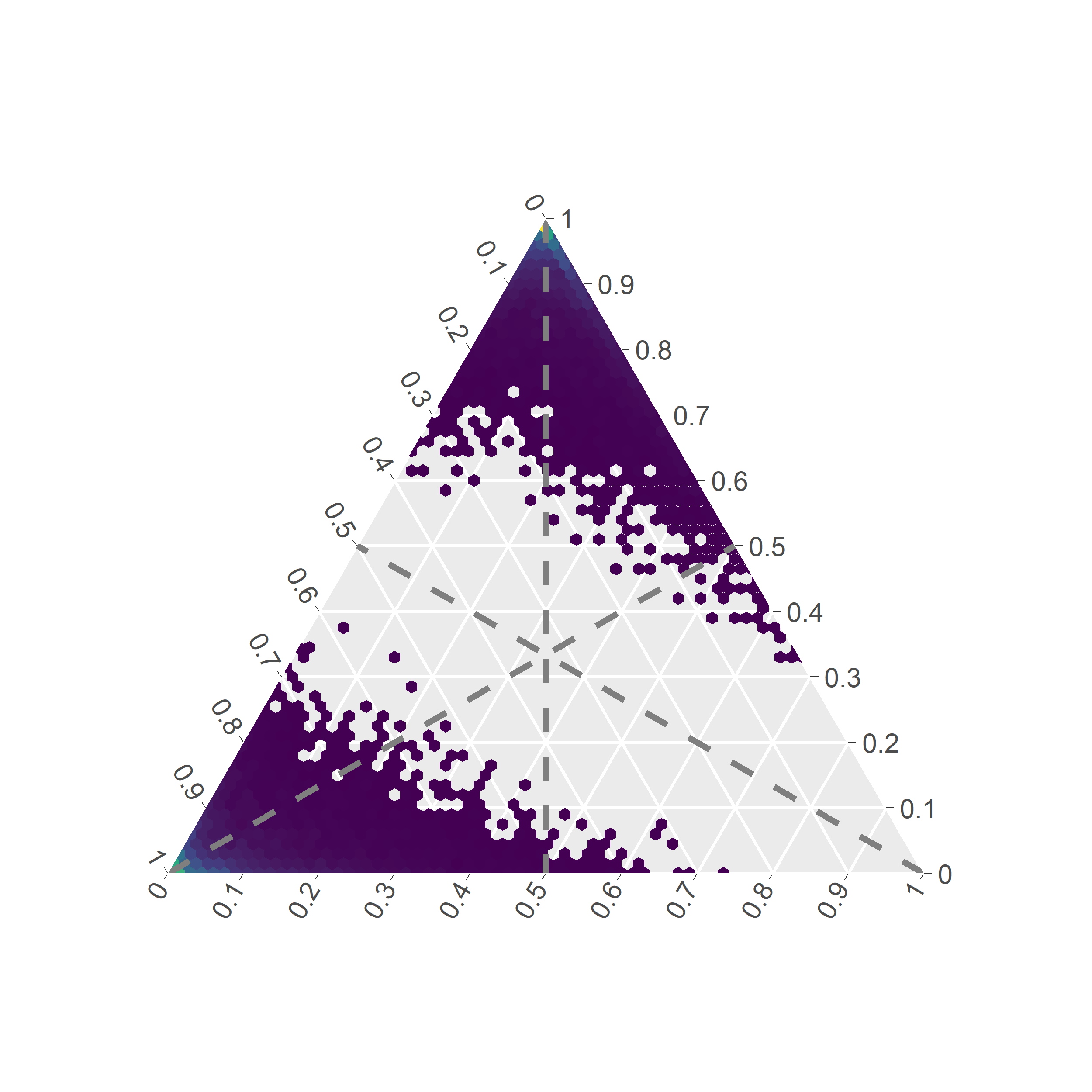}\\
    \includegraphics[width=.22\linewidth]{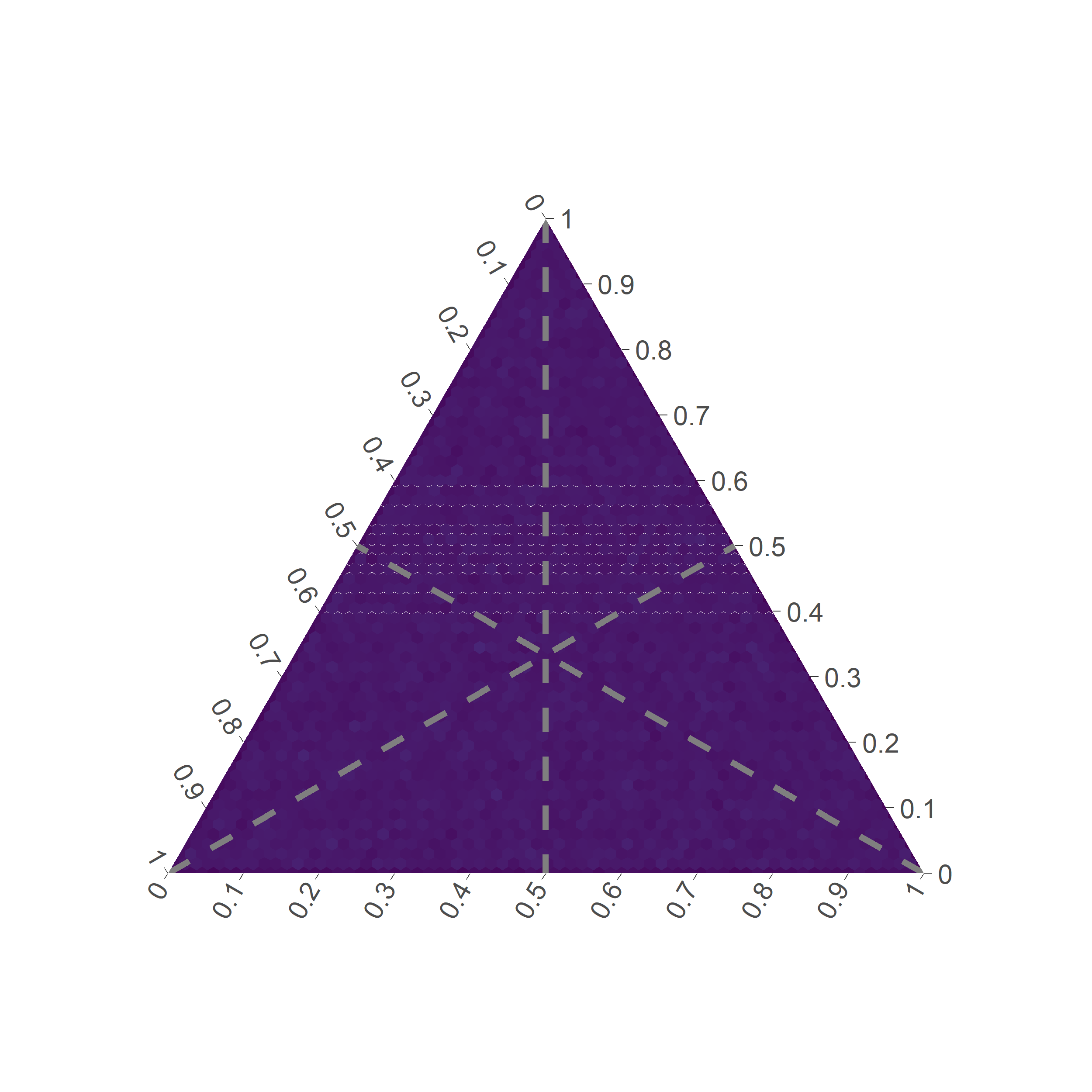}
    \includegraphics[width=.22\linewidth]{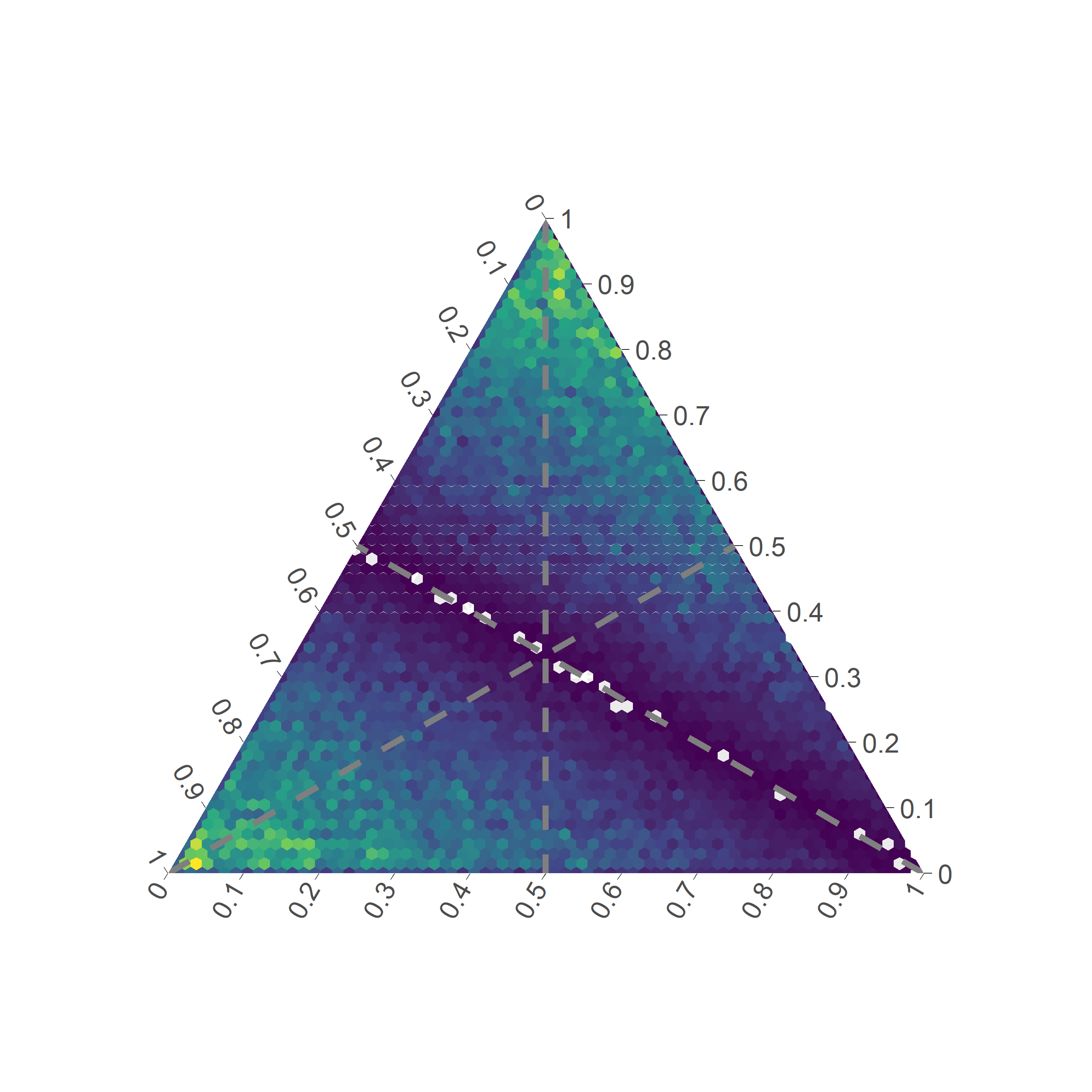}
    \includegraphics[width=.22\linewidth]{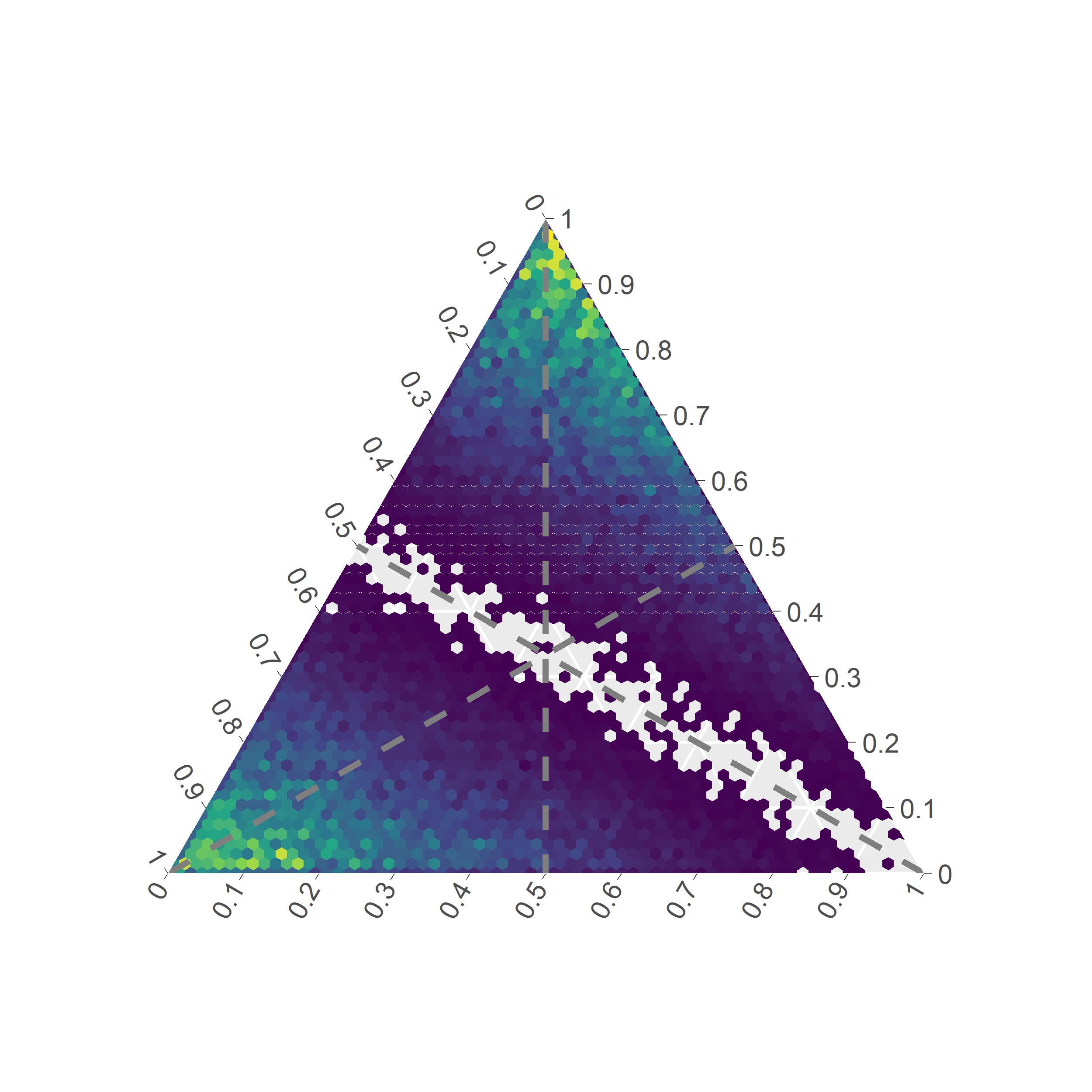}
    \includegraphics[width=.22\linewidth]{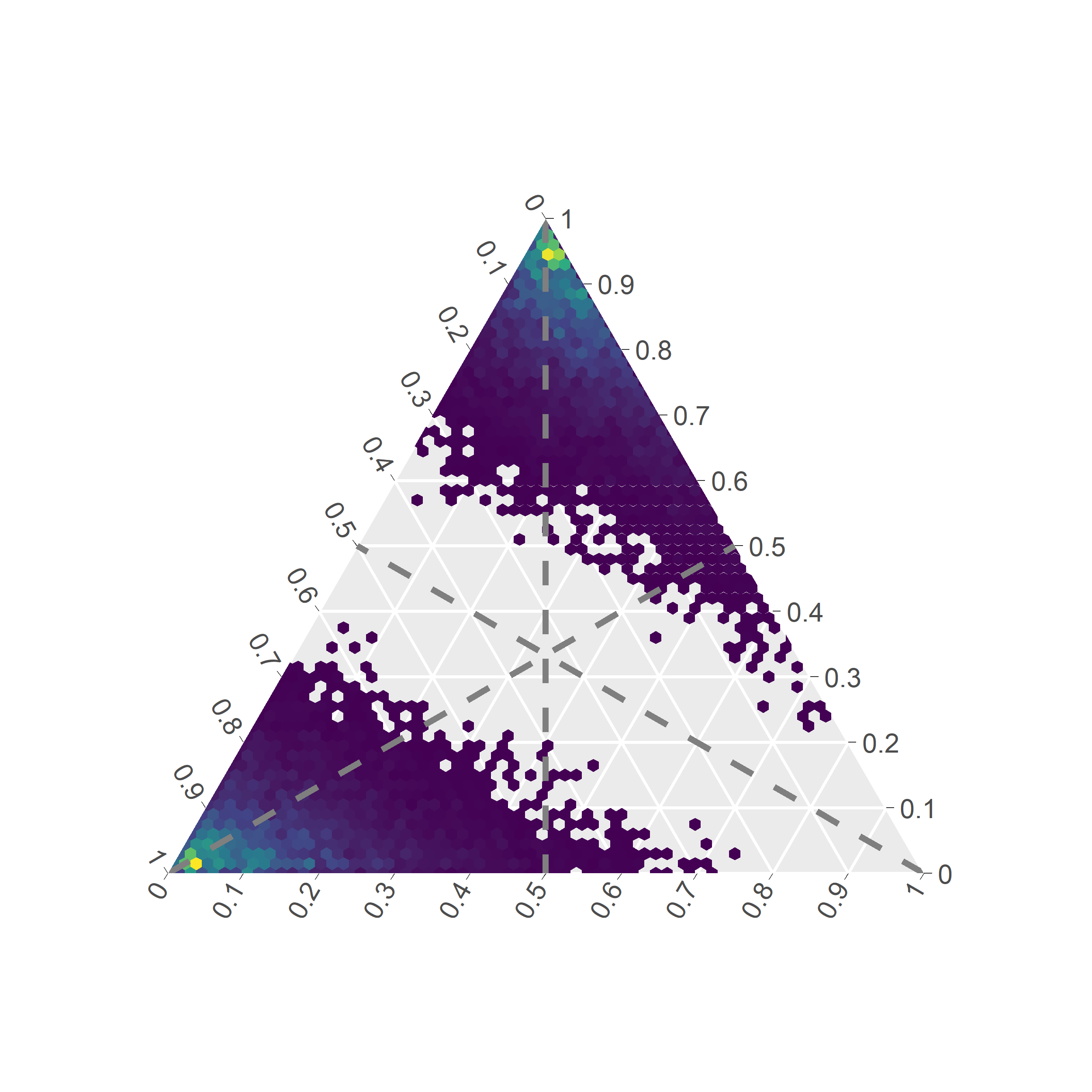}
\caption{Simulation-based densities of the Selberg Dirichlet distribution for varying values of the repulsion parameter $\gamma$ and concentration parameter $\alpha$. From left to right: $\gamma = (0, 0.5, 1, 3)$. From top to bottom: $\alpha = (0.5, 1)$. For illustration purposes, the colours are scaled separately for each plot.}
\alttext{A series of eight ternary plots arranged on a 2 by 4 grid presenting the density of the Selberg Dirichlet distribution. Rows show concentration parameter alpha = 0.5 (top) and alpha = 1 (bottom). Columns show repulsion parameter gamma increasing from 0 (Dirichlet) to 0.5, 1 and 3. As gamma increases, the probability density concentrates near the vertices and edges, avoiding the corresponding axis, visually demonstrating the repulsive property.}
\label{fig:sdir_prior}
\end{figure}

\begin{figure}[tp]
\centering
 \includegraphics[width=0.9\textwidth]{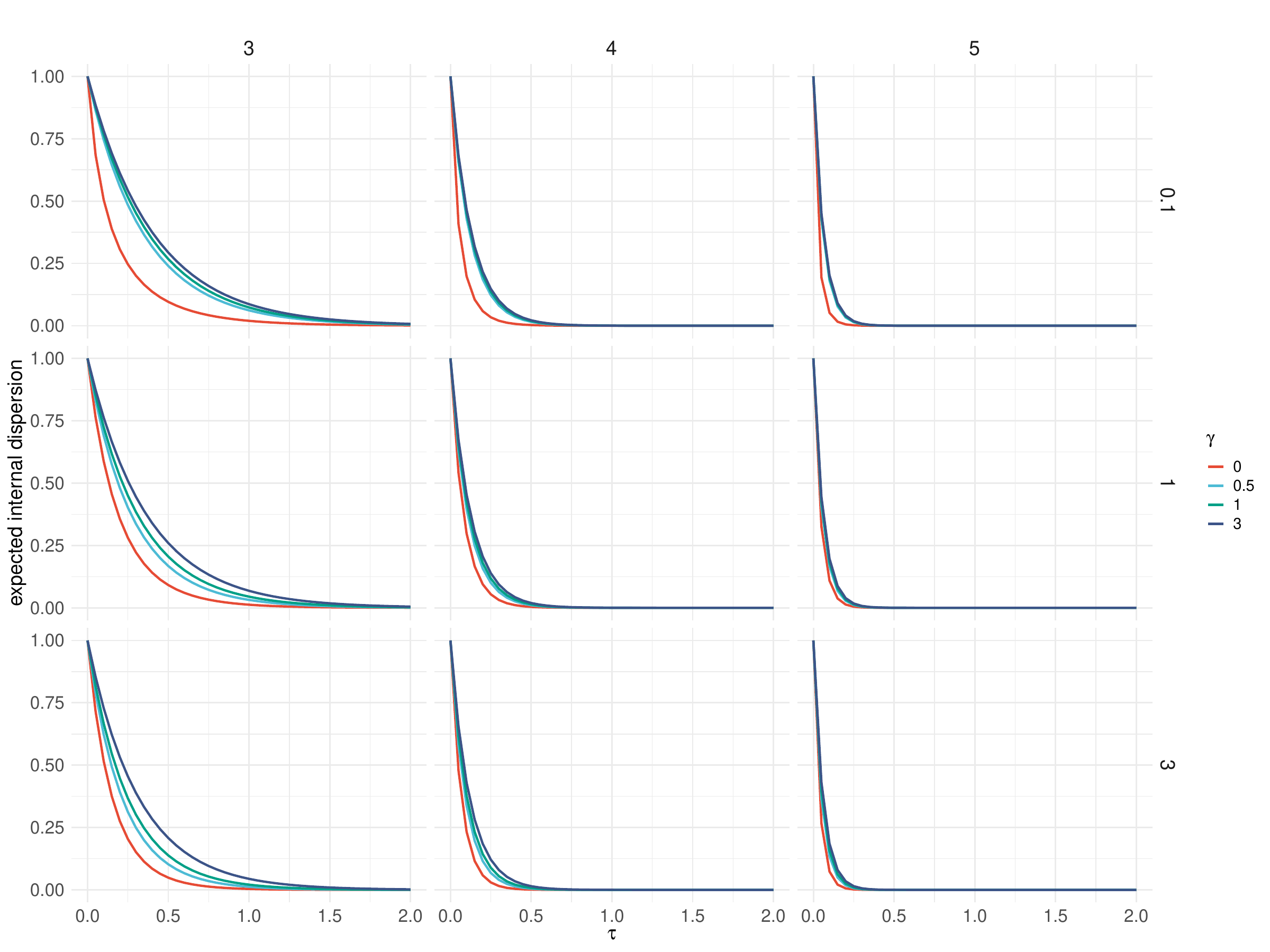}
\caption{Expected internal dispersion for varying parameters: columns correspond to \( M = 3, 4, 5 \) (left to right), and rows to \( \alpha = 0.1, 1, 3 \) (top to bottom).}
\alttext{Nine line graphs on a 3 by 3 grid. Each graph shows the expected internal dispersion (vertical axis) against the parameter tau (horizontal axis) for different values of the repulsion parameter gamma (colored lines). Dispersion increases with gamma but decreases as the number of components M, concentration parameter alpha, or tau increases.}
\label{fig:exp_intern_disper}
\end{figure}

\subsection{A Large Deviation Principle}

\label{sec:LDP}
In this section, we present a Large Deviation Principle (LDP) for the generalised Selberg Dirichlet. LDPs are useful tools in statistical physics, where they can be used to deduce properties such as the existence and uniqueness of the Gibbs measures \citep{georgii2011gibbs}, and in random matrix theory, where they can be used to show that the empirical spectral distribution of a random matrix converges to some deterministic measure \citep{agz2009rmt}. In a more elementary context, an LDP essentially says that the probability that a histogram of i.i.d. data looks like a different distribution than the one it was generated from decays exponentially. In our case, we use the LDP to justify allowing an arbitrary number of components, which is related to its application in showing the existence an uniqueness of Gibbs measures.

Given a sequence of probability measures \(\{\mu_n\}\) on a topological space \(T\), a lower semicontinuous function \(I: T \to [0,\infty]\), and a sequence of positive real numbers \(\{r_n\}\), we say that \(\{\mu_n\}\) satisfies a large deviation principle (LDP) with rate function \(I\) and speed \(r_n\) if the following two conditions hold:
\begin{enumerate}
    \item For every open set \(G \subset T\),
    \[
    \liminf_{n \to \infty} \frac{1}{r_n} \log \mu_n(G) \geq -\inf_{x \in G} I(x)
    \]
    \item For every closed set \(F \subset T\),
    \[
    \limsup_{n \to \infty} \frac{1}{r_n} \log \mu_n(F) \leq -\inf_{x \in F} I(x)
    \]
\end{enumerate}
If, in addition, the level sets \(\{x \in T : I(x) \leq c\}\) are compact for all \(c < \infty\), then the rate function \(I\) is called a \emph{good} rate function.

If $\mathcal{A}$ is a base for the topology, then the LDP is equivalent to the conditions \citep{dembo2009large}
\begin{align}
-I(x)  & = \inf \left\{ \limsup_{n \to \infty} \frac{1}{r_n} \log \mu_n(G) : G \in \mathcal{A}, x \in G \right\} \\ &  = \inf \left\{ \liminf_{n \to \infty} \frac{1}{r_n} \log \mu_n(G) : G \in \mathcal{A}, x \in G \right\}
\end{align}
Intuitively, we can think of these as conditions on the tails of the distribution. Now, to obtain our theorem, we must show that
\begin{equation}\label{LDP1}
-I(\mu)  \geq \inf \left\{ \limsup_{n \to \infty} \frac{1}{r_n} \log \mu_n(G) : G \right\}
\end{equation}
and
\begin{equation}\label{LDP2}
-I(\mu)  \leq \inf \left\{ \liminf_{n \to \infty} \frac{1}{r_n} \log \mu_n(G) : G \right\}
\end{equation}
where $G$ runs over neighbourhoods of $\mu$.

In fact, the conditions above are equivalent only to a weak LDP, in which case another property of the sequence of measures, exponential tightness, is required to establish the full LDP \citep{RAS2015LDPgibbs}. In our case, becasue the underlying space is compact, we do not need this additional property for the full LDP. 
\begin{theorem}
\label{thm:ldp}
Suppose $\bm w = (w_1,\dots,w_M) \sim SDir$ and let $\mu_{E_M}$ be the associated empirical measure. Suppose that $\lim_{M \to \infty} \frac{M}{\alpha_k -1} \to a$.
Then, the limit
\[ \lim_{M \to \infty} \frac{1}{M^2} \log B_M := B < \infty \]
exists and $\mu_{E_M}$ satisfies an LDP in the scale $M^{-2}$ with rate function
\[  I(\mu):= \iint F(x,y) d\mu(x)d\mu(y) = -2a^2\gamma\iint \log|x-y| d\mu(x)d\mu(y) - 2a\int \log x d\mu(x) + B \]
for $\mu \in \mathcal{M}[0,1]$. Moreover, there exists a unique $\mu_0$ such that $I(\mu_0) = 0$.
\end{theorem}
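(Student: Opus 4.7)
The plan is to adapt the classical Coulomb-gas LDP strategy, as in Ben Arous and Guionnet or Hiai and Petz, to the Selberg Dirichlet density on the simplex. The first step is to rewrite the unnormalised log density in terms of the empirical measure $\mu_{E_M} = \frac{1}{M}\sum_m \delta_{w_m}$:
\[
\log SDir(\bm w) + \log D(\alpha,\gamma,M) \;=\; (\alpha-1)\sum_{m=1}^M \log w_m \;+\; 2\gamma \sum_{1\le i<j\le M-1}\log|w_i-w_j|.
\]
Under the scaling $\alpha/M \to 1/a$, and absorbing the asymptotically negligible discrepancy between summing over $M$ and $M-1$ indices, both terms are of order $M^2$ and equal $M^2$ times a functional of $\mu_{E_M}$ plus $o(M^2)$. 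This puts the problem in the standard Coulomb-gas form on the weakly compact space $\mathcal{M}[0,1]$, on which $M^{-2}$-exponential tightness is automatic.

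For the LDP upper bound on a closed set $F$, I would replace the singular kernels by the truncations $\log_\epsilon(t) := \log\max(t,\epsilon)$; since $\log t\le \log_\epsilon t$, the truncated integrand dominates the original. The functional $\mu\mapsto \gamma\iint \log_\epsilon|x-y|\,d\mu\,d\mu + a^{-1}\int \log_\epsilon x\,d\mu$ is bounded and weakly continuous, so a standard Varadhan-type argument gives the upper bound with the truncated rate function $I_\epsilon$, while the diagonal contribution to the double integral is only $O(M\log\epsilon) = o(M^2)$. Monotone convergence as $\epsilon\to 0$ replaces $I_\epsilon$ by $I$. For the lower bound on an open neighbourhood of a target $\mu$, I would place the $w_m$ on a near-regular grid consistent with $\mu$ with minimum separation $\eta_M\to 0$ slowly enough that the pair-log sum is a Riemann-sum approximation of $\iint \log|x-y|\,d\mu\,d\mu$ up to $o(M^2)$; integrating over a small ball around this configuration produces the required $\exp(-M^2(I(\mu)+\delta))$ lower bound.

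The asymptotic $\frac{1}{M^2}\log D(\alpha,\gamma,M)\to B$ follows from the closed-form product of Gamma functions in Definition~1 via Stirling, $\log\Gamma(x) = x\log x - x + O(\log x)$: applying this to each $\log\Gamma(\alpha+(j-1)\gamma)$, $\log\Gamma(1+j\gamma)$ and $\log\Gamma(1+\gamma)$ and summing over $j = 1,\dots,M-1$ converts the sum to a Riemann integral and yields the finite constant $B$. Existence and uniqueness of $\mu_0$ are then potential-theoretic: $I$ is lower semicontinuous on the weakly compact $\mathcal{M}[0,1]$, so the infimum is attained; and the logarithmic energy $\mu\mapsto -\iint \log|x-y|\,d\mu\,d\mu$ is strictly convex on probability measures supported in $[0,1]$, a classical fact from its positivity as a quadratic form on signed measures of zero total mass, so the minimiser is unique. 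The main obstacle will be bookkeeping the two logarithmic singularities under regularisation and matching the Stirling constant $B$ to the value of the Euler--Lagrange minimum so that $I(\mu_0)=0$ exactly.
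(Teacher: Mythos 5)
Your plan follows essentially the same route as the paper's proof: rewrite the density as a pairwise-interaction functional of the empirical measure, prove the upper bound by truncating the logarithmic singularities (with the diagonal contributing only $o(M^2)$) and passing to the limit by monotone convergence, prove the lower bound by a quantile-grid discretisation of the target measure, invoke compactness of $\mathcal{M}([0,1])$ for exponential tightness, and settle uniqueness of $\mu_0$ by strict convexity of the logarithmic energy (the paper cites weighted potential theory for this). The only minor divergence is that you obtain $B$ by Stirling asymptotics of the explicit normalising constant, whereas the paper sandwiches $\frac{1}{M^2}\log B_M$ between the matching upper and lower energy bounds; both are valid.
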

\renewcommand{\qedsymbol}{}
\begin{proof}
    See supplementary material.
\end{proof}
\renewcommand{\qedsymbol}{$\square$}

\subsection[Implied prior on the number of clusters]{Implied prior on the number of clusters $M^a$}
\label{sec:implied_prior}
We now discuss the prior distribution of the number of clusters induced by the proposed prior, represented by the number of allocated components $M^a$ \citep{argiento_infinity_2022, frühwirth_schnatter_infinity_2019, quinlan_repmix_2021, nobile_mixpost_2004, miller_mfm_2018}. 

We can obtain a visual representation of the prior on $M_a$ induced by the Selberg Dirichlet by simulating from the SIP mixture with a fixed number of components. For \( M \in \{3, \dots, 8\} \) and a fixed concentration parameter \( \alpha_0 = 1 \), we vary the repulsion parameter \( \gamma \in \{0, 1, 3\} \). In the simulation, we fix the number of observations to $n=100$. Figure~\ref{fig:prior_pred_M_bar} clearly shows that increasing repulsion leads to a substantially lower number of expected clusters compared to the standard Dirichlet prior case (\( \gamma = 0 \)).


\begin{figure}[t!]
\centering
\includegraphics[width=.9\linewidth]{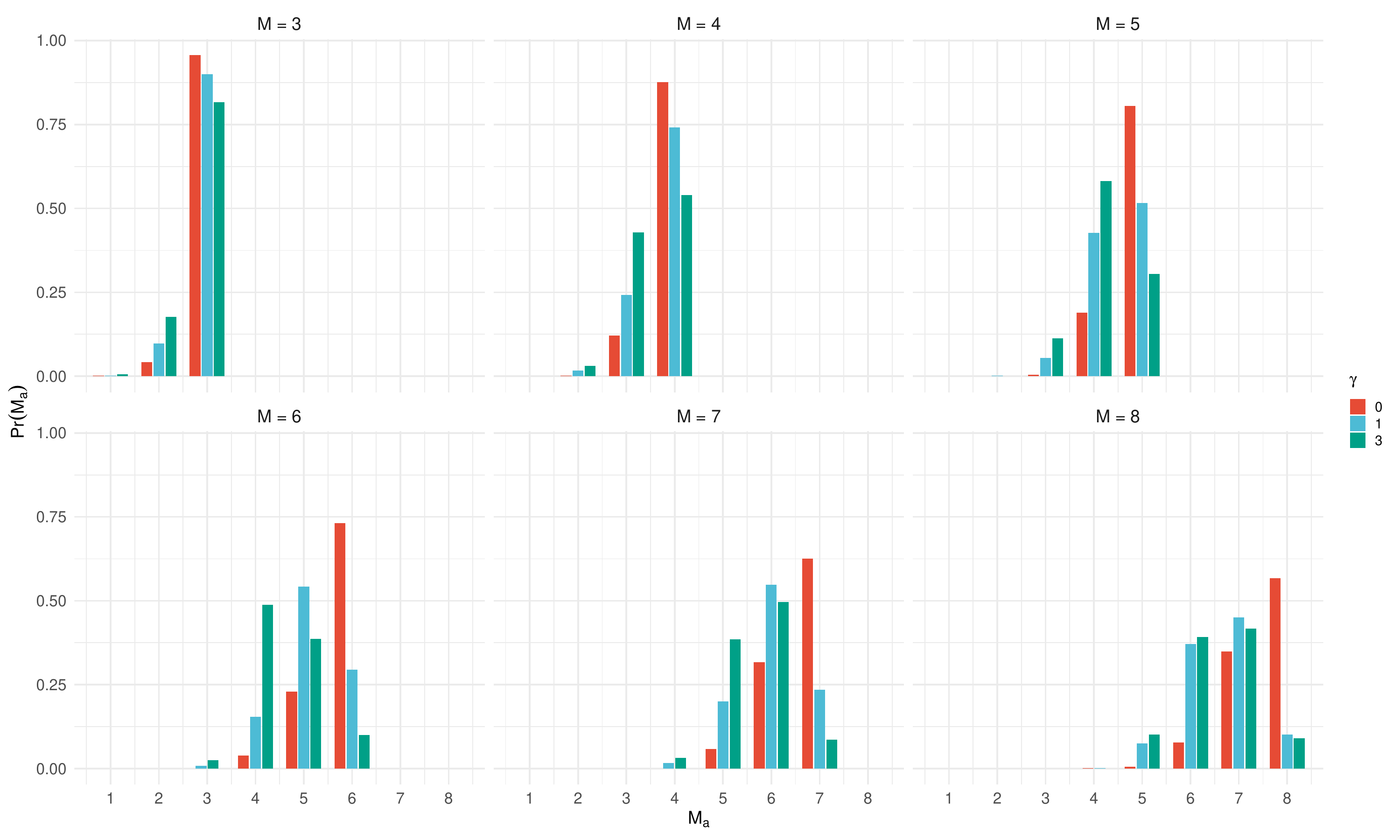}
\caption{Simulation-based, implied prior on $M_a$ for $\alpha_0 = 1$ and varying values of $M$ and $\gamma$.}
\alttext{Barplots on a 2 by 3 grid showing the distribution of the number of allocated clusters. The six panels correspond to different numbers of components (M = 3 to 8). Within each panel, the color of the bars varies with the repulsion parameter (gamma = 0, 1, 3). For a fixed M, the bars shift left (fewer clusters) as gamma increases, illustrating the sparsity-inducing effect of the SIP prior.}
\label{fig:prior_pred_M_bar}
\end{figure}

For the Selberg Dirichlet distribution, shrinkage on the  number of clusters can be introduced via both the repulsion parameters $\gamma$ as well as $\alpha_0$. \cite{rousseau_2011_overfitted_mix}, show that if $\alpha_0 < \frac{d}{2}$, where $d$ represents the dimension of the cluster-specific parameters, the posterior expectation of mixture weights associated with redundant clusters asymptotically approaches zero, guaranteeing the automatic emptying of superfluous components. Moreover, \cite{malsiner-wall_i2016_sparse_mix} show that, in practice, a very small value of $\alpha_0$ is necessary to accurately identify the correct number of clusters, in the case of overfitted mixture with shrinkage priors on the component locations. Our results suggest that the SIP prior can further enhance the sparsity even for moderate to large values of $\alpha_0$, for a general class of priors on mixture components.
\subsection{Posterior inference}\label{sec:post_inf}
We consider a mixture model with random number of components, which is tractable thanks to the fact that the normalising constant of the Selberg Dirichlet distribution is available in closed form. For the remainder of the manuscript, we will focus on $D$-dimensional Gaussian component densities, each defined by a mean parameter $\bm \mu_m = (\mu_{m, 1},\dots, \mu_{m, D})$ and covariance matrix $\bm \Sigma_m$, for $m = 1, \dots, M$. We denote with $\bm \mu$ and $\bm \Sigma$ all mean parameters and covariance matrices. We opt for inverse-Wishart priors $\inwish{\bm V_0, \nu_0}$ for $\bm \Sigma$ and apply a product of $D$ independent Gaussian ensemble (GE) distributions as priors for $\bm \mu$ to induce repulsion among the component locations \citep{cremaschi_2023_chaos, forrester2010log}. The Gaussian ensemble distribution has density function defined by
\begin{align*}
& GE(\bm x,  M, \zeta) = \frac{1}{\mathcal{G}(M, \zeta)} \prod_{m=1}^M e^{-\frac{\zeta}{2} x_m^2}  \left| \triangle \bm x \right|^\zeta\\
& \mathcal{G}(M, \zeta)=\zeta^{-\frac{M}{2}-\zeta M(M-1) / 4}(2 \pi)^{\frac{M}{2}} \prod_{j=0}^{M-1} \frac{\Gamma\left(1+(j+1) \frac{\zeta}{2}\right)}{\Gamma\left(1+\frac{\zeta}{2}\right)}
\end{align*}
where $\zeta > 0$ and $\triangle \bm{x} = \prod_{1 \leq i < j \leq M} x_i - x_j$. Using this distribution as a joint prior for the component locations, we can favour centres that are more distant to each other, depending on the magnitude of the repulsion parameter $\zeta$. The tractable form of the normalising constant makes this distribution particularly convenient  in mixture models, especially in scenarios involving a prior on $M$, as it is essential in the trans-dimensional birth-and-death step when creating or deleting a component. For the prior on the number of components, we use a Poisson distribution shifted to have support on the positive integers \citep{Nobile_2007_truncpoisson} with rate parameter $\lambda$, denoted as $Poi_1(\lambda)$. An advantage of our model is that it allows for posterior inference on the repulsion parameters \( \gamma \) and \( \zeta \). To enable this, we specify suitable hyperpriors and consider two alternatives. In the first approach, we assign independent Gamma priors to both \( \gamma \) and \( \zeta \). In the second approach, we place a Gamma prior on \( \gamma \) alone and define \( \zeta \) through a fixed ratio \( \rho = \zeta / \gamma \). In Section~\ref{sec:simstudy} we additionally present a data-driven strategy for setting the repulsion parameters.

The hierarchical representation now becomes
\begin{align*}\label{eq:mix_hierarch}
\bm y_i \mid c_i, \bm \mu_{c_i}, \Sigma_{c_i} 
&\ind \mathcal{N}( \bm \mu_{c_i}, \Sigma_{c_i}), 
\quad i = 1, \ldots, N \\
 \mu_{1,d}, \dots,  \mu_{M,d} &\sim GE( \zeta), 
\quad d = 1, \ldots, D \nonumber \\
\bm \Sigma_m &\iid \mathcal{IW}(\bm V_0, \nu_0),
\quad m = 1, \ldots, M \nonumber \\
c_1, \dots, c_N \mid \bm w 
&\iid \mathcal{C}(1, \bm w) \nonumber \\
\bm w &\sim \text{SDir}(\alpha, \gamma, M) \nonumber \\
M &\sim \text{Poi}_1(\lambda) \nonumber \\
\gamma &\sim \Gammad{\alpha_{\gamma, 0}, \beta_{\gamma, 0}}  \nonumber \\
\zeta &\sim \Gammad{\alpha_{\zeta, 0}, \beta_{\zeta, 0}}  \nonumber
\end{align*}
To update the number of components, we employ a birth-and-death step that consists of either adding (birth) or removing (death) a non-allocated component with probabilities $q$ and $1-q$, respectively \citep{geyer_birthdeath_1994}. We develop a tailored MCMC algorithm for posterior inference, whose details can be found in the Supplement.

\section{Simulation study}
\label{sec:simstudy}
After exploring the induced prior on the number of clusters in Section~\ref{sec:implied_prior}, we shift our focus to examining its posterior distribution, in particular the effect of the repulsive parameters on the posterior distribution of $M_a$.
We generate $n = 300$ observations from a mixture of $5$ bivariate Gaussian components, with weights given by $\bm w = (0.2,0.2,0.2,0.3,0.1)$ and cluster means defined as $\bm \mu_{1:5, 1}=(-3,-3,3,3,-1)$ and $\bm \mu_{1:5, 2}=(-2.5,3,-3,3,0)$. The covariance matrices of the first 4 components have entries equal to $3$ on their diagonal and to $1$ on their off-diagonal entries. The covariance matrix of the last component is a diagonal matrix with entries equal to $0.25$. 

Figure~\ref{fig:simstudy_2_data_tsm}(a) shows the simulated dataset, with data associated to each component enveloped in a convex hull of different colour, as well as the corresponding true similarity matrix in Figure~\ref{fig:simstudy_2_data_tsm}(b), ordered using hierarchical clustering with complete linkage for visualisation purposes. The clusters were purposefully designed to overlap, and the fifth cluster added to mimic a redundant cluster to create a problem where the ground truth is extremely hard to recover.  

\begin{figure}[t!]
\centering
\subfloat[]{\includegraphics[width=.45\linewidth]{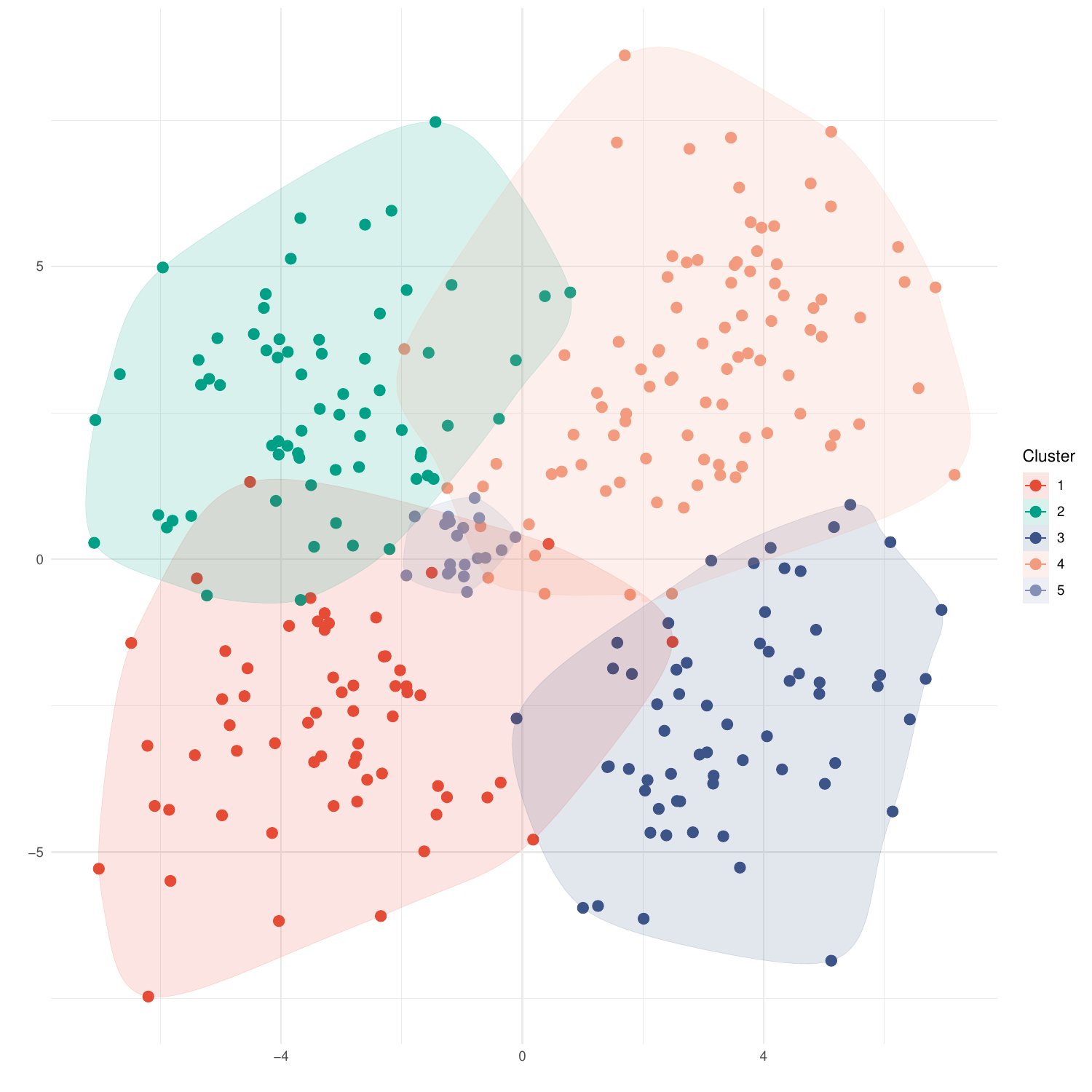}}
\subfloat[]{\includegraphics[width=.45\linewidth]{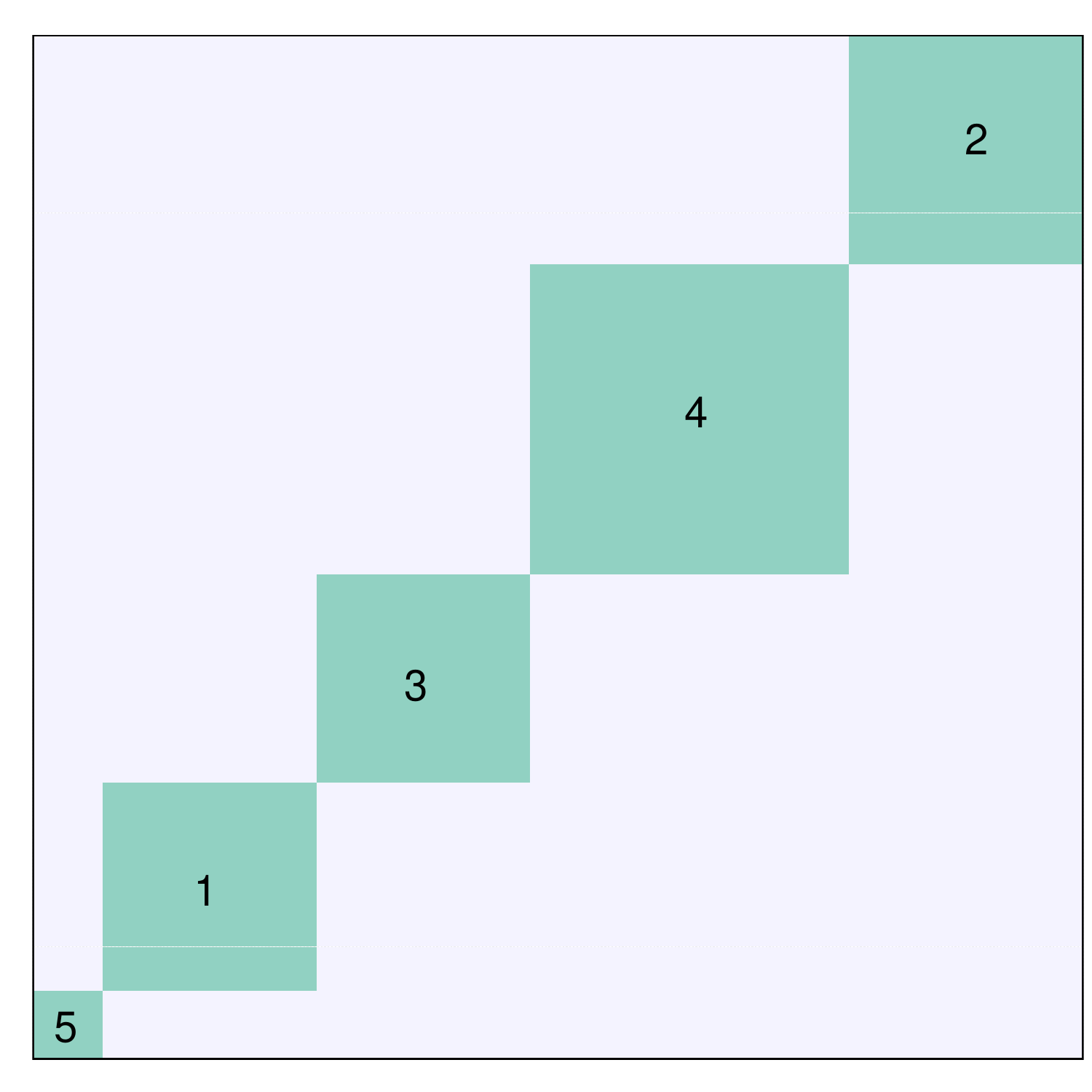}}
\caption{Simulated dataset consisting of $5$ different clusters and the corresponding similarity matrix.}
\alttext{(a) Scatterplot of 300 points from 5 partially overlapping Gaussian clusters, each enclosed in a coloured convex hull. (b) A 300 by 300 similarity matrix where a green square indicates two points originate from the same true cluster, and white indicates different clusters. The matrix is ordered to group points from the same cluster together.}
\label{fig:simstudy_2_data_tsm}
\end{figure}

We run the MCMC algorithm introduced in Section \ref{sec:post_inf}, and sample 5000 posterior draws, after discarding 5000 as burn-in period and retaining every $10$-th draw. We employ two independent Gaussian ensemble distributions as priors for $\bm \mu_{1:M, 1}$ and $\bm \mu_{1:M, 2}$ and $\Sigma_1, \dots, \Sigma_5 \sim \inwish{\mathbb{I}_2, 2}$, where $\mathbb{I}_2$ denotes the identity matrix of dimension $2$. We set $\lambda = 3$, so that $M$ is centred around the target value, $\alpha_0 = 1$, and the probability of a birth move $q$ is set to $0.5$.
%
%
%
Figure~\ref{fig:Ma_bar_psm_fixed} shows the posterior number of clusters and the posterior similarity matrix for different combinations of hyperparameters and hyperpriors in the model.

To guide the choice of $\zeta$, we first run the $k$-means algorithm on the dataset, setting the number of clusters to the true value $K = 5$. We then compute the average distance between cluster centres along each dimension. Next, we simulate values from the Gaussian ensemble prior across a grid of $\zeta$ values and calculate the corresponding average distances. The value of $\zeta$ that produces the closest match to the cluster centre distances obtained from $k$-means is found to be $\zeta =0.1$, shown in the first column. Regarding the hyperprior on $\gamma$ shown in the last row, we set $\gamma \sim \Gammad{3,2}$, so that a-priori mean and variance are equal to one.

We can observe that an increase in $\gamma$, corresponding to lower rows in the figure, leads to both a decrease in the posterior mean and variance of $M_a$. On the other hand, an increase in $\zeta$, from left to right in the figure, leads to a higher number of clusters. When $\gamma \sim \Gammad{3,2}$, its posterior concentrates around $0.2$, and the posterior number of clusters lies between the outcomes obtained under fixed values $\gamma = 0$ and $\gamma = 0.25$.

The variability in the posterior distribution of the partition can be appreciated by looking at the posterior similarity matrices. All plots in the third row, corresponding to $\gamma = 1$, display mostly two or three clusters with almost no variability. However, while the first two plots show clearly separated clusters, the last exhibits significant variability in the co-clustering probability. This suggests that when both $\gamma$ and $\zeta$ are large, their joint effect increases variability in cluster assignments. 



\begin{figure}[t!]
\centering
\subfloat[]{\includegraphics[width=.45\linewidth]{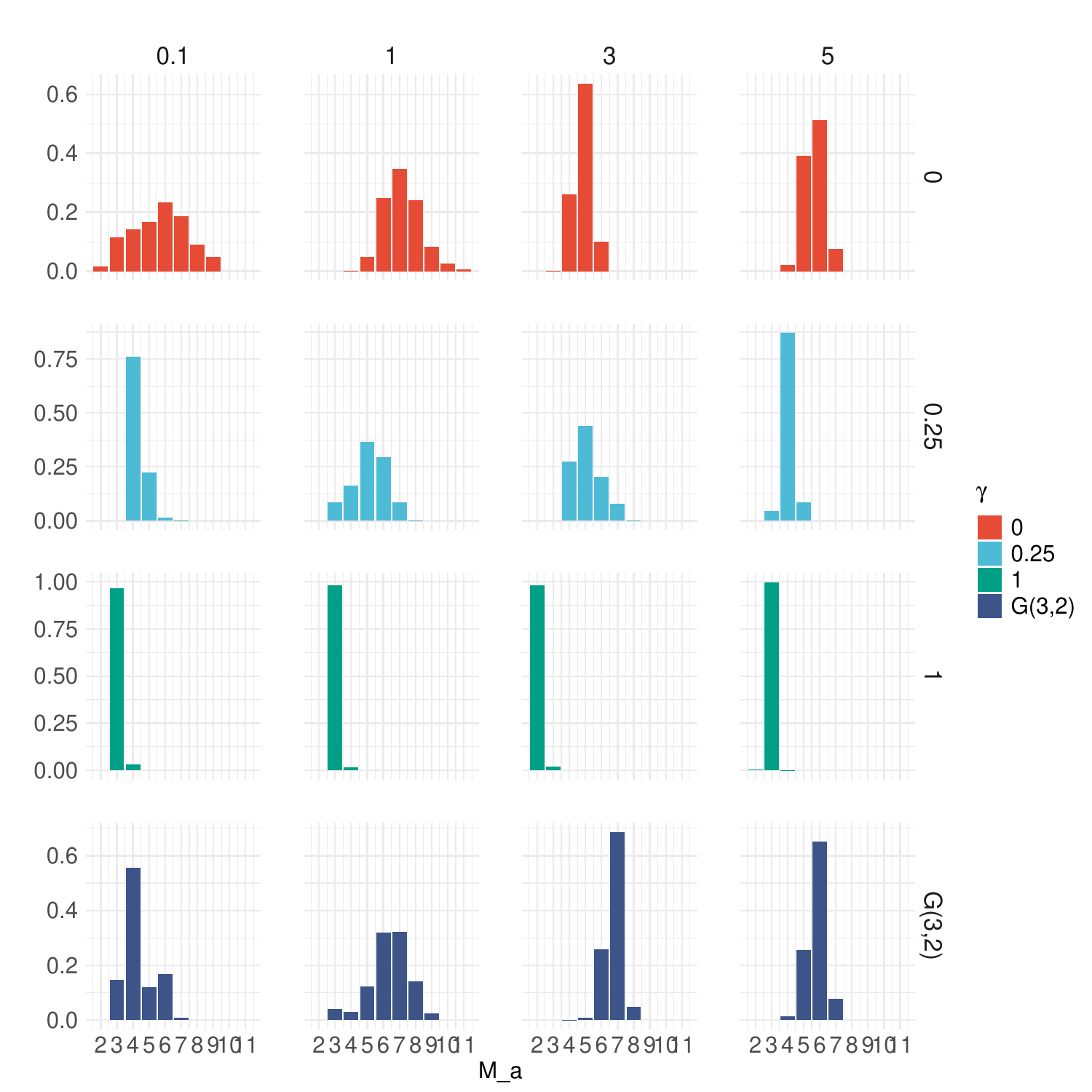}}
\subfloat[]{\includegraphics[width=.45\linewidth]{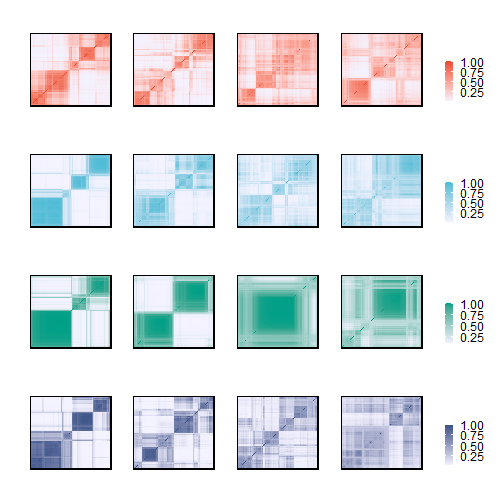}}
\caption{(a) Posterior distribution of $M_a$ and (b) posterior similarity matrices for varying combinations of $\gamma$ and $\zeta$.}
\alttext{(a) Barplots on a 4 by 4 grid showing the posterior distribution of the number of clusters $M_a$ for different fixed values and a hyperprior of the repulsion parameter gamma (rows) and fixed values of zeta (columns). Higher gamma reduces cluster count, higher zeta increases it. (b) 4 by 4 grid of posterior similarity matrices. Darker squares indicate higher probability that two observations belong to the same cluster.}
\label{fig:Ma_bar_psm_fixed}
\end{figure}


\begin{figure}[t]
\centering
 \includegraphics[width=\linewidth]
 {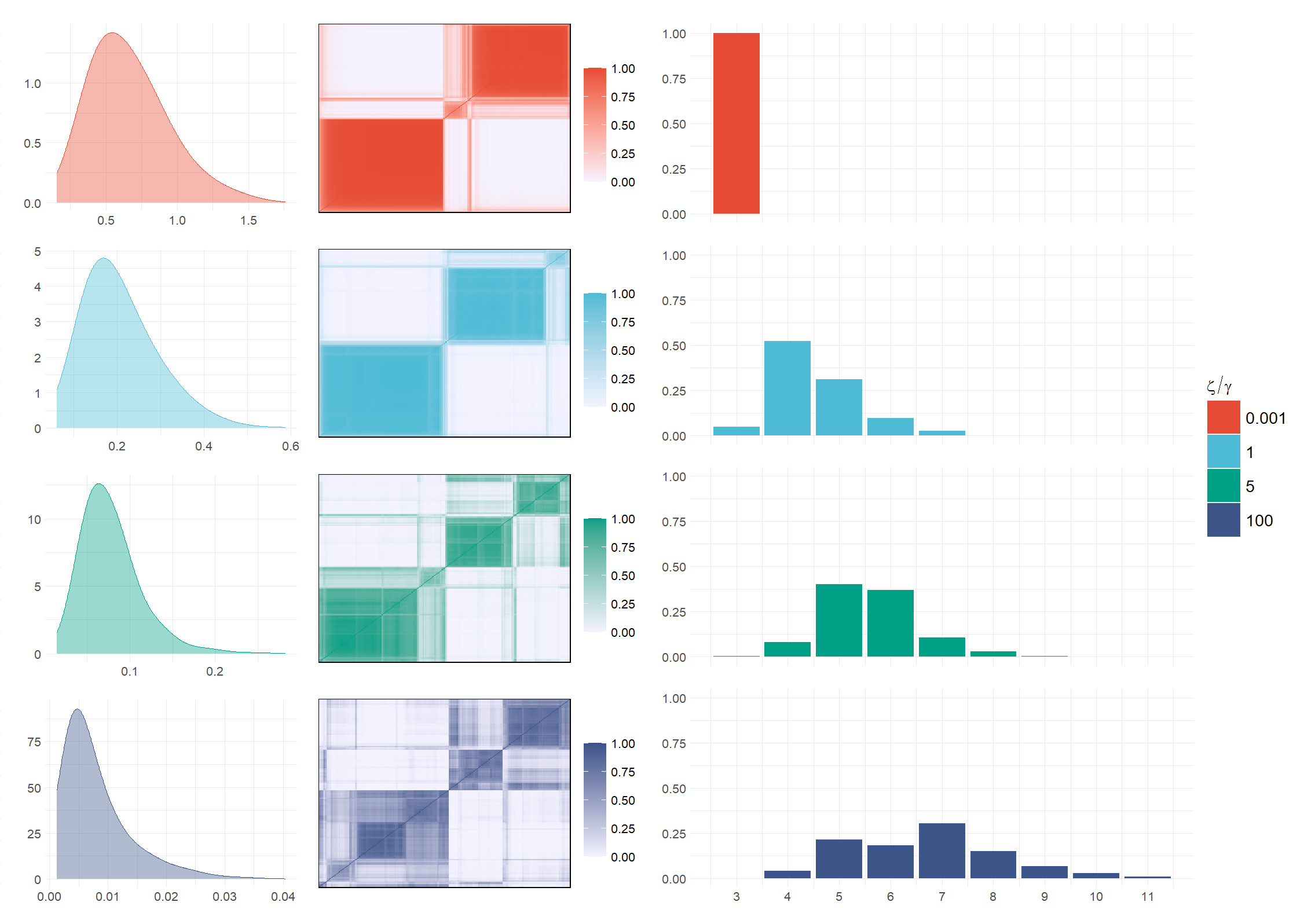}
\caption{From left to right: posterior density plots of $\gamma$, posterior similarity matrices and posterior number of clusters $M_a$. The ratio of $\zeta$ and $\gamma$ is fixed at 0.001, 1, 5 and 100 (from top to bottom).}
\alttext{Figure with four rows, each corresponding to a different fixed ratio rho = zeta/gamma and three columns. Within each row, the first column shows the density plots of the posterior for gamma, the second column shows a posterior similarity matrix, and the third column shows the posterior distribution of the number of clusters. As rho increases from 0.001 to 100, the posterior distribution of the number of clusters shifts toward higher values, the similarity matrices display more variability, and the estimated gamma decreases.}
\label{fig:ratio_posterior_4col}
\end{figure}

In Figure~\ref{fig:ratio_posterior_4col} we present results corresponding to the SIP mixture for the same hyperprior on $\gamma$ and with $\zeta$ defined via the ratio $\rho = \frac{\zeta}{\gamma}$, where $\rho \in \{0.001, 1, 5, 100\}$. On the leftmost column, we show the kernel density estimates of the posterior of $\gamma$, in the second column the posterior similarity matrix and in the last column the posterior distribution of the number of clusters $M_a$.

An increase in $\rho$, warranting an increase in $\zeta$, increases the number of clusters, which aligns with our previous findings. This can be observed in both the posterior of $M_a$ and the column depicting the posterior similarity matrices. The plots showing the posterior of $\gamma$, placed in the first two columns, indicate that $\gamma$ is inversely related to $\zeta$. This is especially visible for $\rho = 100$, where $\gamma$ is pushed towards the value $0.01$. 

The simulated dataset mimics a real-world scenario where the definition of clusters should be guided by the research question and expert judgment. We demonstrate that the SIP mixture model can recover different shapes and structures by appropriately adjusting the repulsive parameters. For problems requiring a more non-informative prior stance, modelling the ratio of repulsive parameters offers a simplified approach, allowing modellers to express their prior beliefs through a single parameter, $\rho$.
\section{Application to data on children's BMI and eating behaviour}
\label{sec:empirical}
We apply the SIP repulsive mixture model to a dataset including the standardised body mass index (Z-BMI) and eating behaviour of $n = 537$ children from the Singaporean cohort GUSTO (Growing Up in Singapore Towards healthy Outcomes) \citep[][]{soh_2013_gusto}. GUSTO is a highly phenotyped prospective cohort started in 2009 and still ongoing, collecting a plethora of information on more than a thousand mother-child dyads. The Z-BMI is defined as the standardised ratio of a person's weight in kilograms and their height in meters squared, and the eating behaviour was assessed using the children's eating behaviour questionnaire \citep{wardle_2001_cebq_origin}. The questionnaire comprises 35 items relating to one of 8 subscales, which can be classified as either describing the food-approach (food responsiveness, enjoyment of food, emotional overeating, and desire to drink) or food-avoidance (slowness in eating, satiety responsiveness, food fussiness, and emotional undereating) of children. Instead of modelling the questionnaire answers directly, which are of ordinal nature, we use the partial credit model \citep{masters_1982_pcm} to recover the latent traits, or person parameters, quantifying their approach to food and beverage intake. Confirming results found in existing research, which relates food-approach traits to a propensity to eat more, we find a positive correlation of $0.24$ between its latent variable and the Z-BMI \citep{fogel_2017_bmi}.


We fit the SIP mixture with random number of components to the Z-BMI and eating behaviour latent traits data. We fix $\lambda = 2$ and $\alpha_0 = 0.5$. The prior distributions employed for the component locations and covariance matrices are the same as for the simulation study in~\ref{sec:simstudy}, namely two independent Gaussian ensemble distributions and an $\inwish{\mathbb{I}_2, 2}$ distribution. Regarding the repulsive parameters $\gamma$ and $\zeta$, we present results for three different combinations below. The birth and death probabilities are equal to $0.5$. We analyse the posterior results after discarding 5000 samples as burn-in period and retaining every 10-th value for a final sample of 5000. We compute posterior estimates of the random partition by minimising the Binder loss function, a popular choice in Bayesian clustering analysis \citep{binder_1978_binder,wade_2023_bayclust_analysis}.

Figure~\ref{fig:gusto}(a) shows the results of a model configuration with moderate repulsion on the component locations and negligible repulsion on the weights, corresponding to $\gamma = 0.1$ and $\zeta = 1$. The model identifies four clusters, with one (light blue) overlapping another (red), highlighting a common limitation of standard mixture models and motivating the use of repulsive priors. By increasing the repulsion on the weights ($\gamma = 2$) and on the locations ($\zeta = 3$), we can decrease the estimated number of clusters to $K = 2$. By adjusting the repulsion on both parameters to $\gamma = 2$ and $\zeta = 1$, we are able to ameliorate this shortcoming and find three interpretable clusters. Focusing on the result depicted in Figure~\ref{fig:gusto}(c), the first cluster (red) contains children with average Z-BMI and food-approach traits parameters, indicating a healthy approach to food which is reflected in their Z-BMI. The second cluster (light blue) is characterised by children with a Z-BMI above 1 and a slightly larger dispersion across their food-approach traits. The third cluster (green) shows children that are both associated with lower Z-BMI and food-approach. This group involves children with leaner body types, potentially linked to a diminished enjoyment of food and lower instances of emotional overeating.
Significant differences in maternal BMI before pregnancy across the three clusters were detected by a one-way ANOVA (p-value = $1.01 \times 10^{-7}$), validating existing findings that relate maternal pre-pregnancy BMI and offspring growth \citep{michael2023longitudinal}. In contrast, a multinomial Cochran-Armitage trend test \citep{Szabo_2019_multiom_cochrane} found no significant differences in maternal education levels between the three estimated clusters (p-value = 0.6932). Corresponding boxplots and barplots are provided in the supplementary material.

\begin{figure}[tp]
  \centering
  \subfloat[$\gamma = 0.1$, $\zeta = 0.1$]{\includegraphics[width=.33\linewidth]{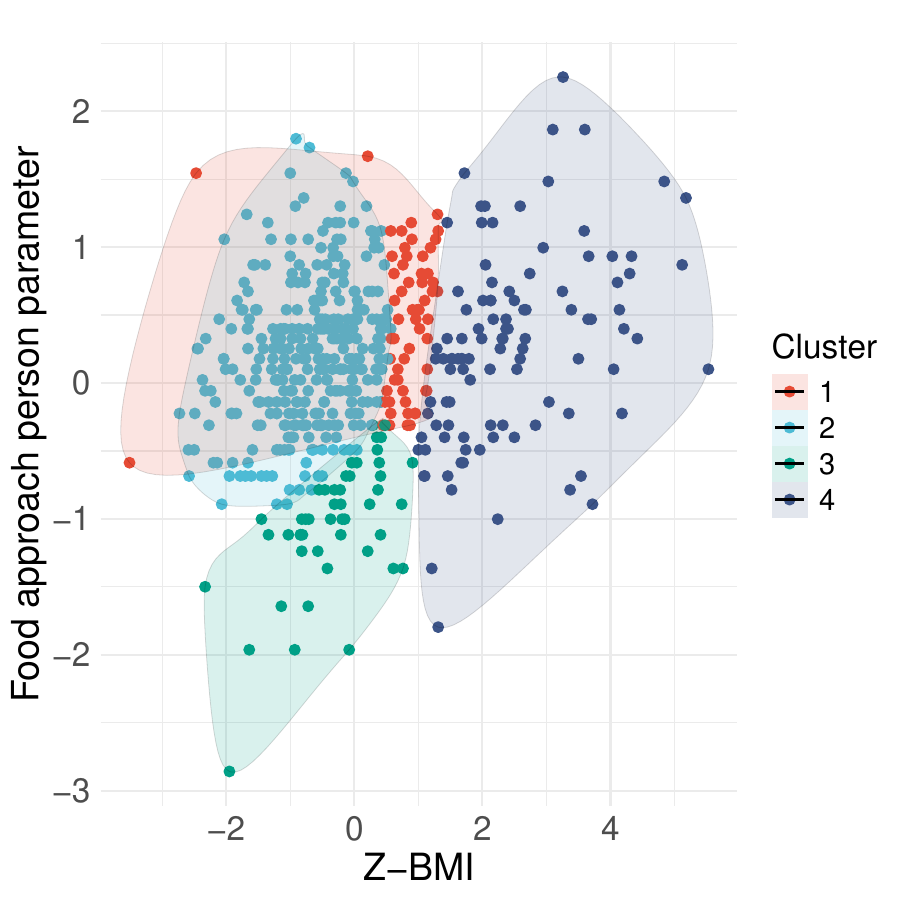}}
  \centering
  \subfloat[$\gamma = 2$, $\zeta = 3$]{\includegraphics[width=.33\linewidth]{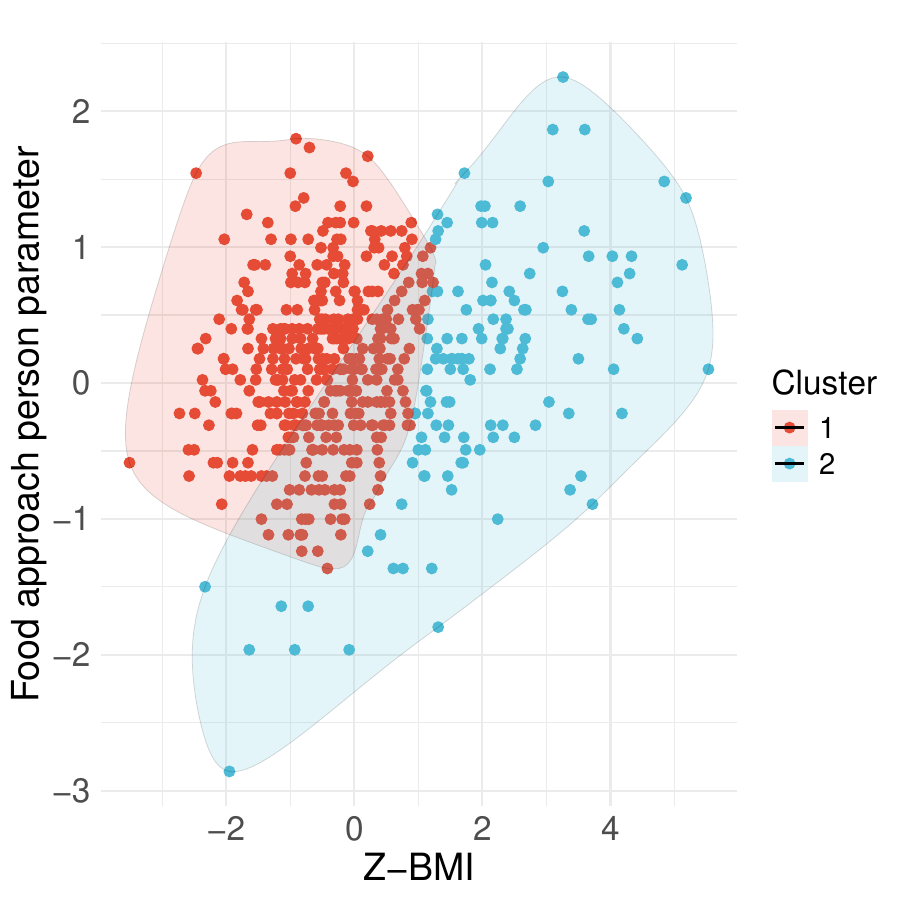}}
  \centering
  \subfloat[$\gamma = 1$, $\zeta = 3$]{\includegraphics[width=.33\linewidth]{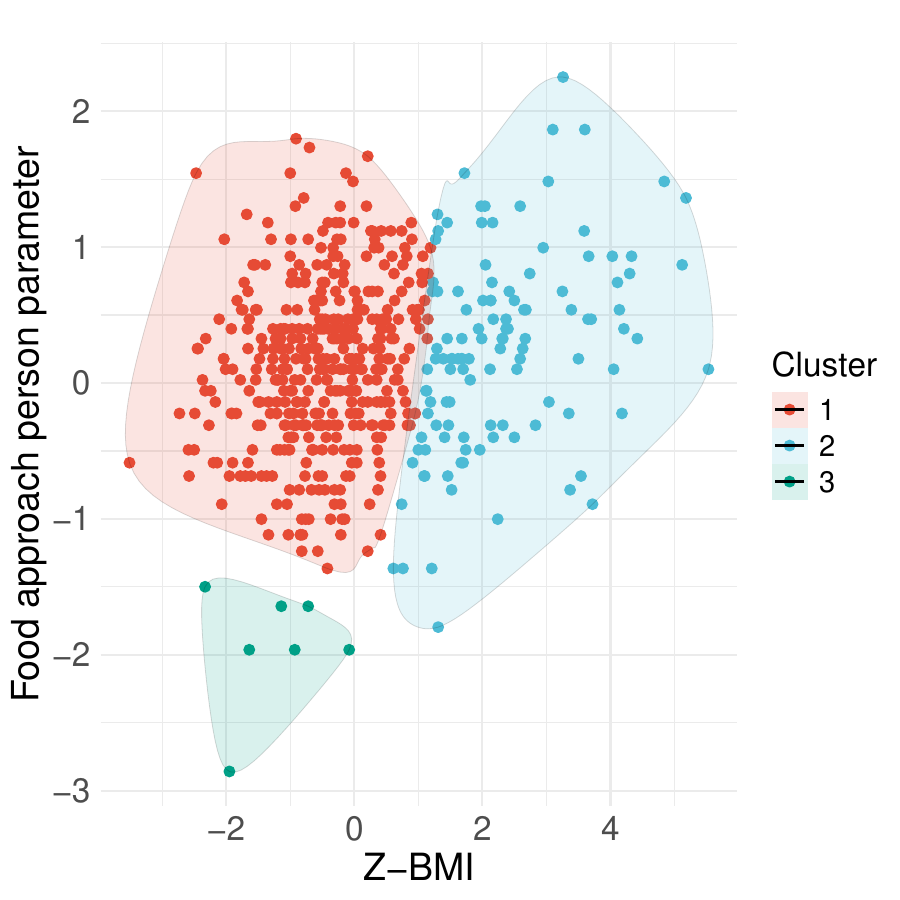}}
\caption{Point estimates of the posterior clustering based on the Binder loss function for the SIP mixture.}
\alttext{Three scatterplots showing children's standardized body mass index (Z-BMI) versus a latent food-approach trait. Each point is colored by its assigned cluster from the SIP mixture. Panel (a): four clusters with substantial overlap. Panel (b): two clusters with slight overlap. Panel (c): three distinct and well-separated clusters — one average cluster (red), one cluster with high Z-BMI (blue), and one cluster with low Z-BMI and low food-approach (green).}
\label{fig:gusto}

\end{figure}
\section{Conclusion}
\label{sec:conclusion}
This paper introduces the SIP prior, a novel method that incorporates repulsion directly into the component weights of a mixture model. The approach is built upon an extension of the Dirichlet distribution known as the Selberg Dirichlet, which introduces pairwise repulsive potentials between particles, a concept inspired by statistical physics that underpins the effectiveness of repulsive mixtures. By promoting dissimilarity among weights, increased repulsion under the Selberg Dirichlet prior naturally encourages solutions with fewer, more distinct clusters.

We also include in the model a repulsive prior on the component locations and a prior on the number of mixture components. Leveraging recent results for MCMC methods for repulsive mixtures \citep{beraha_2021_repmix_mcmc, cremaschi_2023_chaos}, we devise an MCMC algorithm that performs a trans-dimensional move via the use of a birth-and-death step. We find that the novel approach has an important implication on the implied prior on the number of clusters, inducing additional sparsity through an increase in the repulsive parameter. We explore the complex relationship between the repulsive parameters of the component locations and the component weights in a simulation study. While an increase in the former tends to increase the number of clusters, the latter induces a higher degree of sparsity. Thus, we leverage this inverse relationship by modelling the ratio of the two repulsive parameters, providing also a data-driven approach to prior elicitation. 

We apply the repulsive mixture model with the SIP prior to a biomedical dataset from the GUSTO cohort study, focusing on children's Z-BMI and eating behaviour. Our results demonstrate that the SIP prior, through its effect on mixture weights, can counterbalance the strong separating tendency introduced by repulsion on component locations. The joint use of repulsion in both locations and weights leads to well-separated, non-redundant clusters, thereby enhancing interpretability.

Future research includes leveraging the enhanced sparsity of the proposed prior in empirical applications, and further exploring its integration within repulsive mixture models. An additional avenue of investigation is the relationship between the repulsion parameter and the entropy of the prior, which may offer valuable insights into its capacity to balance cluster separation with component sparsity.

\clearpage

\bibliographystyle{plainnat}
\bibliography{main}






\section*{Supplementary Material}
\renewcommand{\thesubsection}{\Alph{subsection}}  
\subsection{Properties of the Selberg Dirichlet Distribution}
\label{supp:sdir_props}
Leveraging known results from the theory of Mehta integrals, we can derive several key properties of the Selberg Dirichlet distribution, including closed-form expressions for marginal and joint moments.

For a vector $\bm{w} = (w_1, \ldots, w_M)$ distributed according to the M-dimensional Selberg Dirichlet distribution and defined on the simplex, where $0 \leq w_m \leq 1$ for all $m$ and $\sum_{m = 1}^M w_m = 1$, the expectation of the product can be expressed as a ratio of normalising constants:
\[
\mathbb{E} \left\{ \prod_{i=1}^{M} w_i \right\} = \frac{D(\alpha+1,\gamma,M)}{D(\alpha,\gamma,M)}
\]
This simplifies to
\begin{align*}
\frac{D(\alpha+1, \gamma, M)}{D(\alpha, \gamma, M)} 
&= \frac{\Gamma(\alpha+1)}{\Gamma(\alpha)} 
\frac{\Gamma(M\alpha + \gamma (M - 1)(M - 2))}{\Gamma(M\alpha + M + \gamma (M - 1)(M - 2))}
 \prod_{j = 1}^{M - 1} \frac{\Gamma(\alpha + 1 + (j - 1)\gamma)}{\Gamma(\alpha + (j - 1)\gamma)} \\
&= \alpha 
\frac{\Gamma(M\alpha + \gamma (M - 1)(M - 2))}{\Gamma(M\alpha + M + \gamma (M - 1)(M - 2))} 
 \prod_{j = 1}^{M - 1} (\alpha + (j - 1)\gamma)
\end{align*}
The marginal expectation is given by the following ratio:
\[
\mathbb{E} \left\{ w_j \right\} = \frac{A(\alpha, \alpha+1, \gamma, M)}{D(\alpha,\gamma,M)}
\]
where \( A(\alpha, \beta, \gamma, M) \) is defined by the Mehta integral as
\[
A(\alpha, \beta, \gamma, M) = \frac{\Gamma(\beta)}{\Gamma(\alpha(M-1)+\beta+(M-1)(M-2)\gamma)}
\prod_{j=1}^{M-1} \frac{\Gamma(\alpha + (j-1)\gamma) \Gamma(1+j\gamma)}{\Gamma(1+\gamma)}
\]
After simplification, the marginal expectation reduces to
\begin{align*}
\mathbb{E} \left\{ w_j \right\} 
&= \frac{\Gamma(\alpha+1)}{\Gamma(\alpha(M-1)+\alpha + 1 +(M-1)(M-2)\gamma)}  
      \frac{\Gamma(M\alpha + \gamma (M - 1)(M-2)) }{ \Gamma(\alpha)} \\
&= \frac{\alpha}{\alpha M + (M - 1)(M - 2)\gamma}
\end{align*}
highlighting the influence of the repulsion parameter \( \gamma \) and the number of components \( M \) on the mean.
The expectation of higher-order joint moments admits the following expression:
\[
\mathbb{E} \left\{ \prod_{i=1}^{M} w_i^{k} \right\} = \frac{D(\alpha+k, \gamma, M)}{D(\alpha, \gamma, M)}
\]
while the second moment of any marginal \( w_j \) is given by
\[
\mathbb{E} \left\{ w_j^2 \right\} = \frac{A(\alpha, \alpha+2, \gamma, M)}{D(\alpha, \gamma, M)}
\]
which simplifies to
 \begin{align*}
     E\{w_i^2\} = 
\frac{\Gamma(\alpha+2)}{\Gamma(\alpha M + 2 +(M-1)(M-2)\gamma)} 
    \frac{\Gamma(M\alpha + \gamma (M - 1)(M-2)) }{ \Gamma(\alpha)} = \\
    \frac{\alpha(\alpha+1)}{(\alpha M + 1 +(M-1)(M-2)\gamma)(\alpha M + (M-1)(M-2)\gamma)}
\end{align*}
Furthermore, we can derive the variance of \( w_j \) using the expressions for the first and second moments. Define
\[
\eta = \alpha M + (M - 1)(M - 2)\gamma
\]
then
\begin{align*}
\mathbb{V} \left\{ w_j \right\} &= \mathbb{E} \left\{ w_j^2 \right\} - \left( \mathbb{E} \left\{ w_j \right\} \right)^2 \\
&= \frac{(\alpha+1)\alpha}{\eta(\eta+1)} - \left( \frac{\alpha}{\eta} \right)^2 \\
&= \frac{\alpha}{\eta} \left( \frac{1 - \frac{\alpha}{\eta}}{\eta + 1} \right)
\end{align*}

The derivative of the variance with respect to the repulsion parameter \( \gamma \) is negative for $M\geq 0$, implying that the variance is a decreasing function of $\gamma$: 
\begin{align*}
&\frac{d}{d\gamma} \mathbb{V} \left\{ w_j \right\} =\\
&-\frac{(M - 2)(M - 1) \alpha \left[ (M - 2)(M - 1) \gamma \left(2 (M - 2)(M - 1) \gamma + (4M - 3)\alpha + 1\right) + \alpha \left(M ((2M - 3)\alpha + 1) - 2\right) \right]}{(\eta)^3 (\eta + 1)^2}
\end{align*}
In general,
\[
\mathbb{E} \left\{ w_j^k \right\} = \frac{A(\alpha, \alpha+k, \gamma, M)}{D(\alpha, \gamma, M)}
\]
\subsection{Algorithm}
\label{supp:algorithm}
The following section provides details of the MCMC algorithm introduced in Section~\ref{sec:tiebreaker_mixture}. We follow recent literature \citep{argiento_infinity_2022} in differentiating between sets of parameters associated with allocated and non-allocated components, denoted by the subscripts $a$ and $na$, respectively. The total number of components becomes the sum of allocated and non-allocated components $M = M_a + M_{na}$.

In this section, the superscripts ``\(+\)'' and ``\(-\)'' are used to indicate proposed and current values, respectively, whenever a Metropolis--Hastings step is involved.  All components and dimensions are indexed by $1:M$ or $1:D$, respectively.

\subsubsection*{1 Updating the cluster allocations $\bm c$}
For every observation and all allocated and non-allocated components, we update the allocation variable $c_i$ with probability given by 
\begin{align*}
    P(c_i = m \mid \bm y, w_m, \bm \mu_{m,1:D}, \bm \Sigma_{m}) \propto w_m \mathcal{N}\left( \bm y_i \mid \bm \mu_{m,1:D}, \bm \Sigma_{m} \right) , \quad m=1, \ldots, M
\end{align*}
After all observations have been assigned to a component, the number of allocated and non-allocated components $M_a$ and $M_{na}$ are updated accordingly.
\subsubsection*{2 Updating the cluster-specific means $ \mu_{m, d}$ and covariance matrices $\bm \Sigma_m$}

For each dimension $d$ of all component means, we use independent $GE$ priors with equal repulsive parameters $\zeta$. The parameters \( \mu_{m,d} \) associated with allocated components are updated using their full conditional distributions via Metropolis--Hastings steps with Gaussian proposals and tuning parameter \( \sigma^2_{\mu} \):

\begin{align*}
r_{ \mu_{m, d}} &=  \frac{p(\mu^+_{m,d} \mid \bm{\mu}_{-m,d}, M, \zeta)}{p(\mu^-_{m,d} \mid \bm{\mu}_{-m,d}, M, \zeta)} 
\frac{\prod_{i:c_i = m}\mathcal{N}(\bm y_i \mid 
\bm \mu_{m, 1:D}^+ ,\bm \Sigma_m)}
{\prod_{i:c_i = m}\mathcal{N}(\bm y_i \mid 
\bm \mu_{m, 1:D}^- ,\bm \Sigma_m)} 
\frac{\mathcal{N}(\mu^+_{m,d} \mid \mu^-_{m,d}, \sigma^2_{\mu})}{\mathcal{N}(\mu^-_{m,d} \mid \mu^+_{m,d}, \sigma^2_{\mu})} \\
& \propto \frac{GE(\bm{\mu}_{1:M, d}^+ \mid  M, \zeta)}{GE(\bm{\mu}_{1:M, d}^- \mid  M, \zeta)}
\frac{\prod_{i:c_i = m}\mathcal{N}(\bm y_i \mid 
\bm \mu_{m, 1:D}^+,\bm \Sigma_m)}
{\prod_{i:c_i = m}\mathcal{N}(\bm y_i \mid 
\bm \mu_{m, 1:D}^-, \bm \Sigma_m)}\\
&= \frac{ e^{-\frac{\zeta}{2} \mu_{m,d}^{+,2}} |\triangle \bm \mu^+_{1:M,d}|^{\zeta} }{ e^{-\frac{\zeta}{2} \mu_{m,d}^{-,2}} |\triangle \bm \mu^-_{1:M,d}|^{\zeta} }
\frac{ \exp \left( -\frac{1}{2} \sum_{i:c_i = m} (\bm{y}_i - \bm{\mu}^+_{m,1:D})^\top \bm{\Sigma}_m^{-1} (\bm{y}_i - \bm{\mu}^+_{m,1:D}) \right) }{ \exp \left( -\frac{1}{2} \sum_{i:c_i = m} (\bm{y}_i - \bm{\mu}^-_{m,1:D})^\top \bm{\Sigma}_m^{-1} (\bm{y}_i - \bm{\mu}^-_{m,1:D}) \right)} \nonumber
\end{align*}
where we denote with $ \bm{\mu}_{-m,d}$ the vector of mean parameters of dimension $d$ excluding the $m$-th element.

Due to the conjugacy of the inverse-Wishart distribution, the posterior of $\bm \Sigma_m$ is an updated inverse-Wishart distribution:
\begin{align*}
&p(\bm{\Sigma}_m \mid \bm y,  \bm V_{0}, \nu_{0}) \sim IW( \bm V_{post}, \nu_{post})\\
&\bm V_{post} = n_m  \frac{\sum_{i:c_i = m}^N \bm y_i \bm y_i^T}{n_m} +\mathbf{V}_0, \quad \nu_{post} = n_m + \nu_0
\end{align*}

The parameters associated with non-allocated components are updated using their prior distributions.

\subsubsection*{3 Updating the repulsive weights $\bm w$}
Our version of the Selberg Dirichlet distribution, as defined in Equation~\ref{eq:sdir_prior}, requires equal concentration parameters. This not only makes it a more restrictive prior than the Dirichlet but prevents conjugacy with the categorical or multinomial distribution. One of the main appeals of using the Dirichlet distribution as prior for mixture weights is that its posterior is again a Dirichlet with concentration parameters incremented by the number of observations assigned to the respective cluster. A generalisation of the Selberg Dirichlet distribution, enabling varying concentration parameters, would therefore form a conjugate pair in a mixture model and enable a clear sampling strategy. We designate the resulting distribution as the \textit{generalized Selberg Dirichlet} (GSDir), which is defined as
\begin{definition}[generalized Selberg Dirichlet]
The M-dimensional generalized Selberg Dirichlet has a density given by:
\begin{align}
&GSDir(\bm{w} , \bm{\alpha}, \gamma, M) = 
\frac{ 1}{GD(\bm{\alpha}, \gamma, M)} 
\left( \prod_{m = 1}^M  w_m^{\alpha_m -1} \right) |\triangle \bm w|^{2\gamma} \label{eq:gsdir_1}\\
&GD(\bm{\alpha}, \gamma, M) = \int_0^{1} \dots \int_0^{1} \left( \prod_{m = 1}^M  w_m^{\alpha_m -1} \right) |\triangle \bm w|^{2\gamma} dw_1 \dots dw_M \label{eq:gsdir_2}
\end{align}
where $\bm \alpha > 0$, $\gamma \geq 0$ and $\sum_{m = 1}^M w_m = 1$.
\end{definition}
To the best of our knowledge, a normalising constant is known in closed form only when all concentration parameters $\alpha_m$ are equal. However, it is easy to see that Equation~\ref{eq:gsdir_2} is finite, by noting that both the product term and the pairwise differences are bounded between $0$ and $1$, thereby ensuring the existence of the distribution. 
\begin{figure}[tp]
\centering
     \includegraphics[width=.22\linewidth]{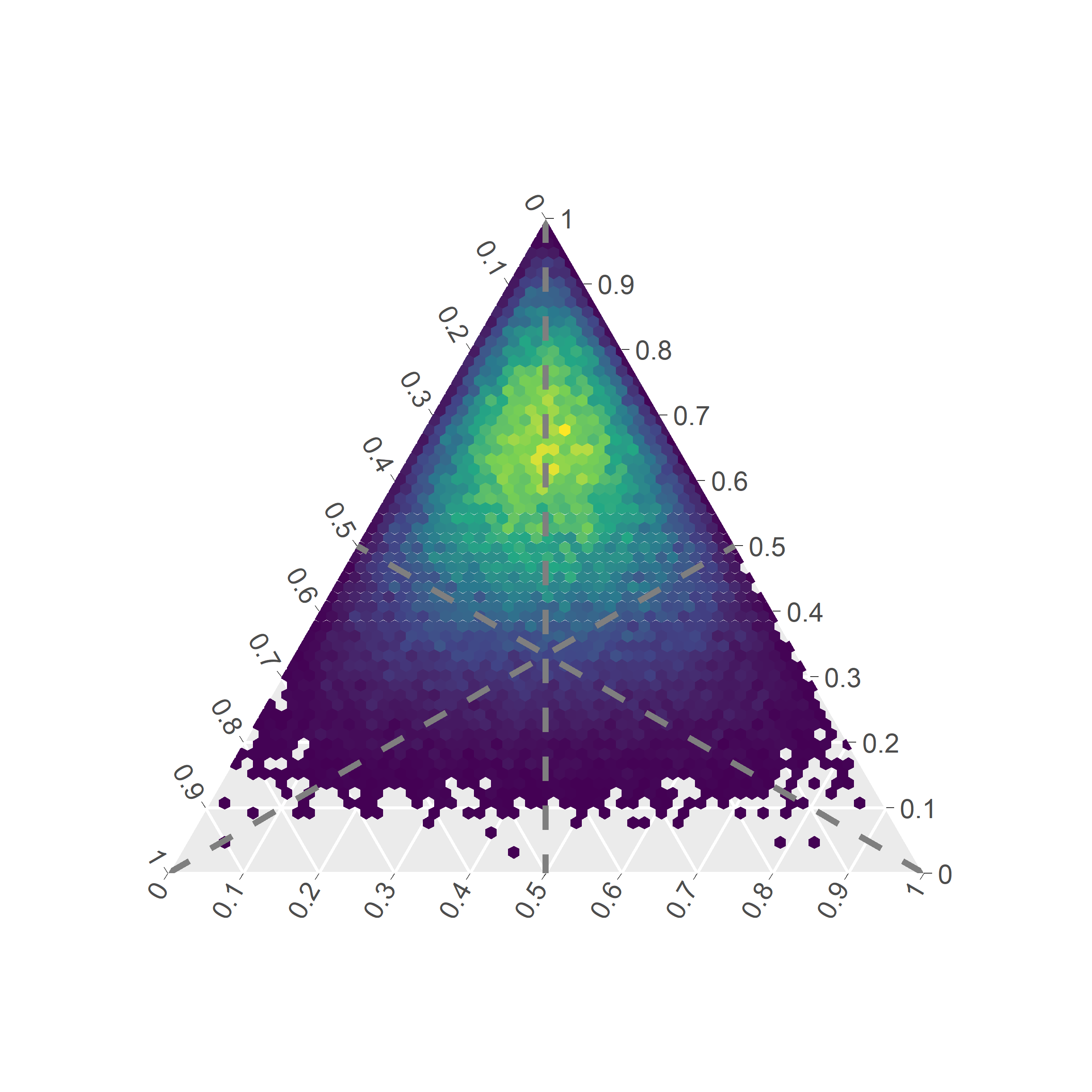}
    \includegraphics[width=.22\linewidth]{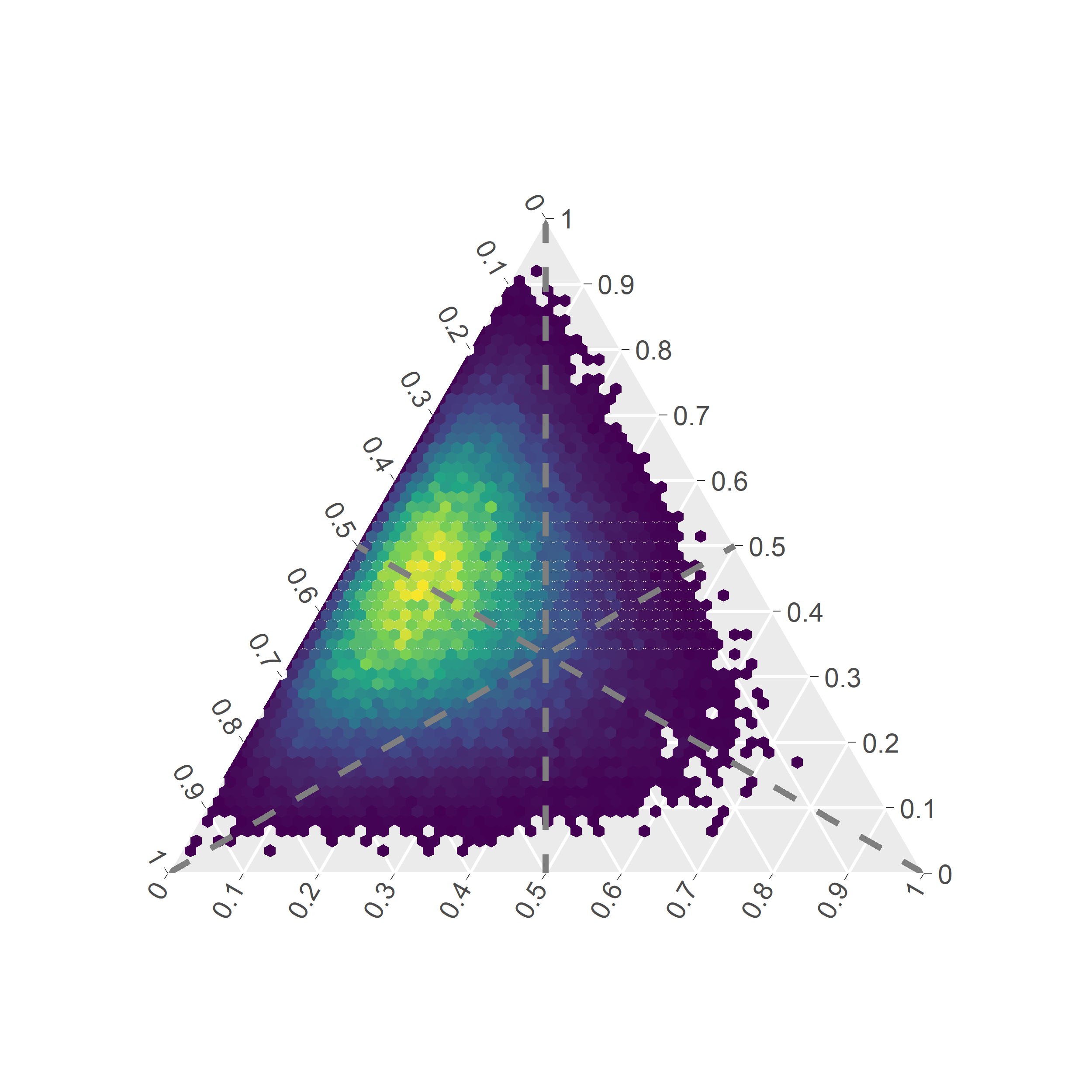}
    \includegraphics[width=.22\linewidth]{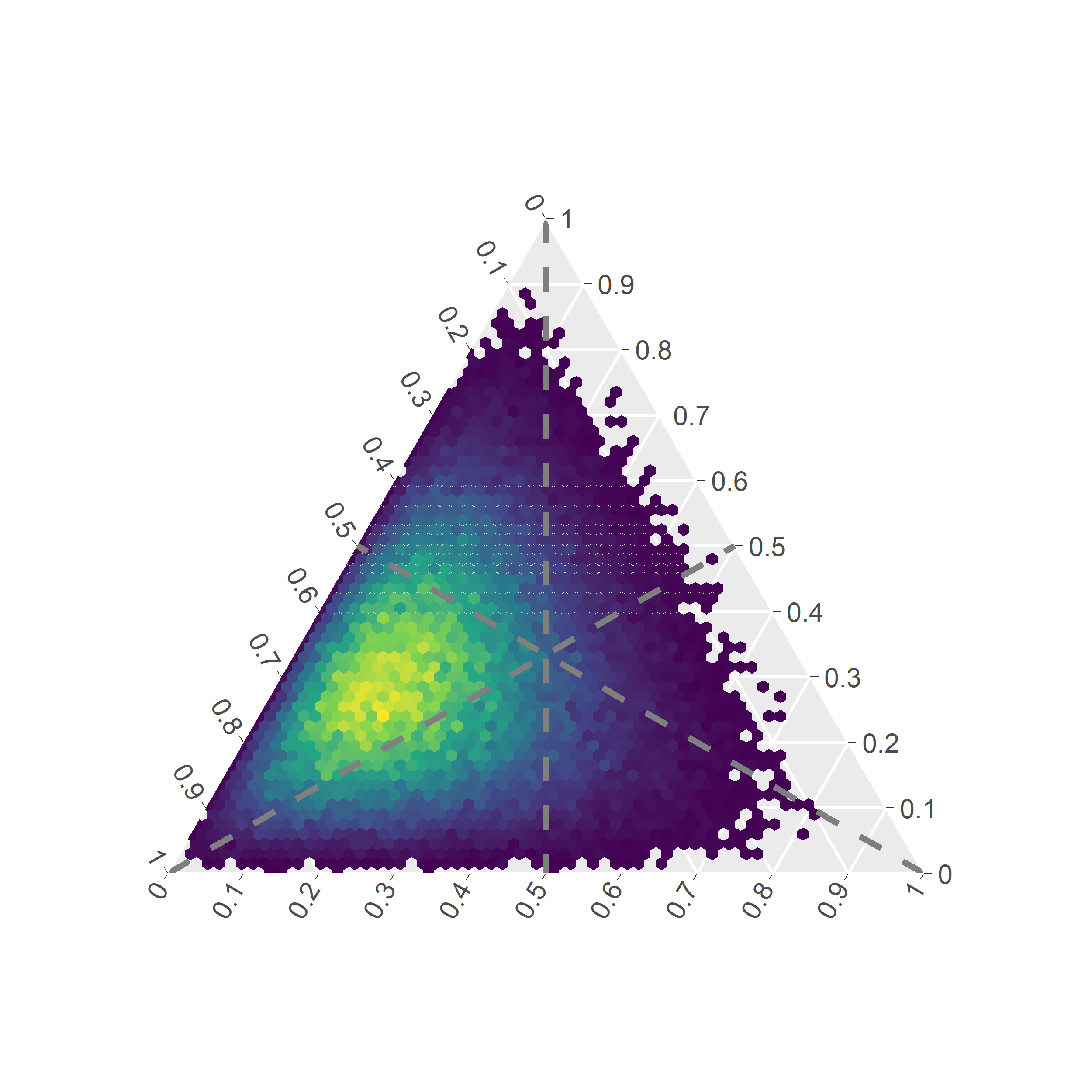}
    \includegraphics[width=.22\linewidth]{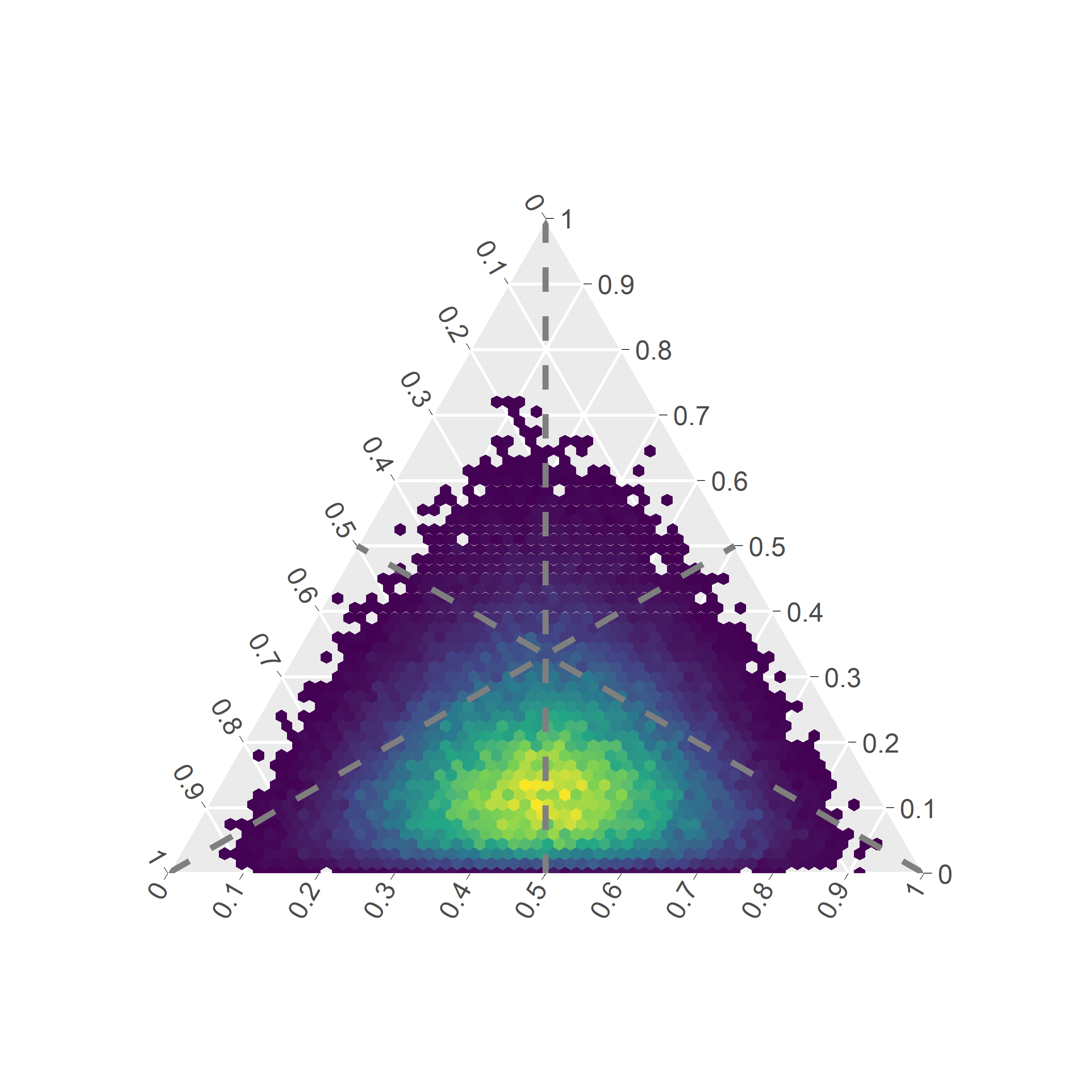}\\
   \includegraphics[width=.22\linewidth]{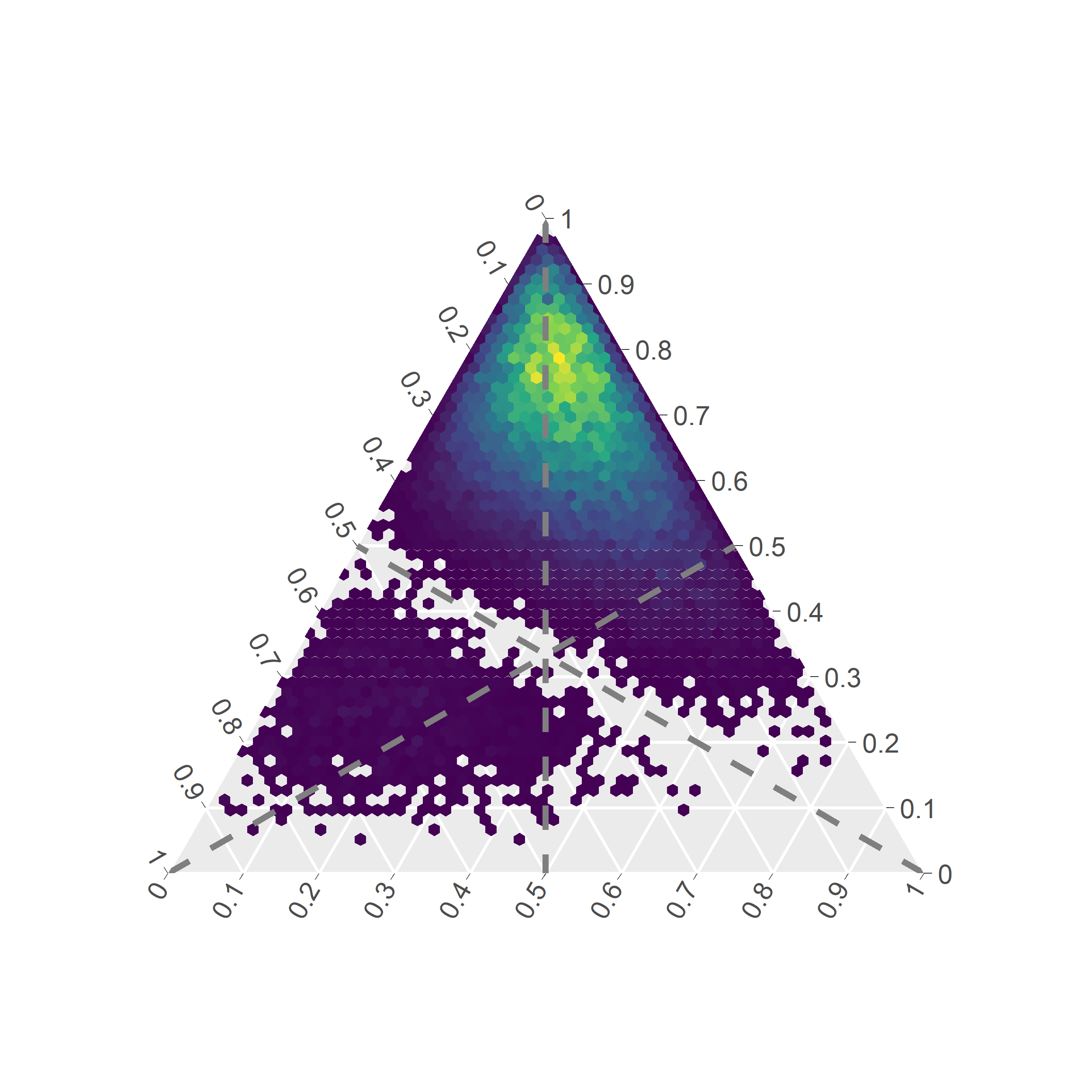}
   \includegraphics[width=.22\linewidth]{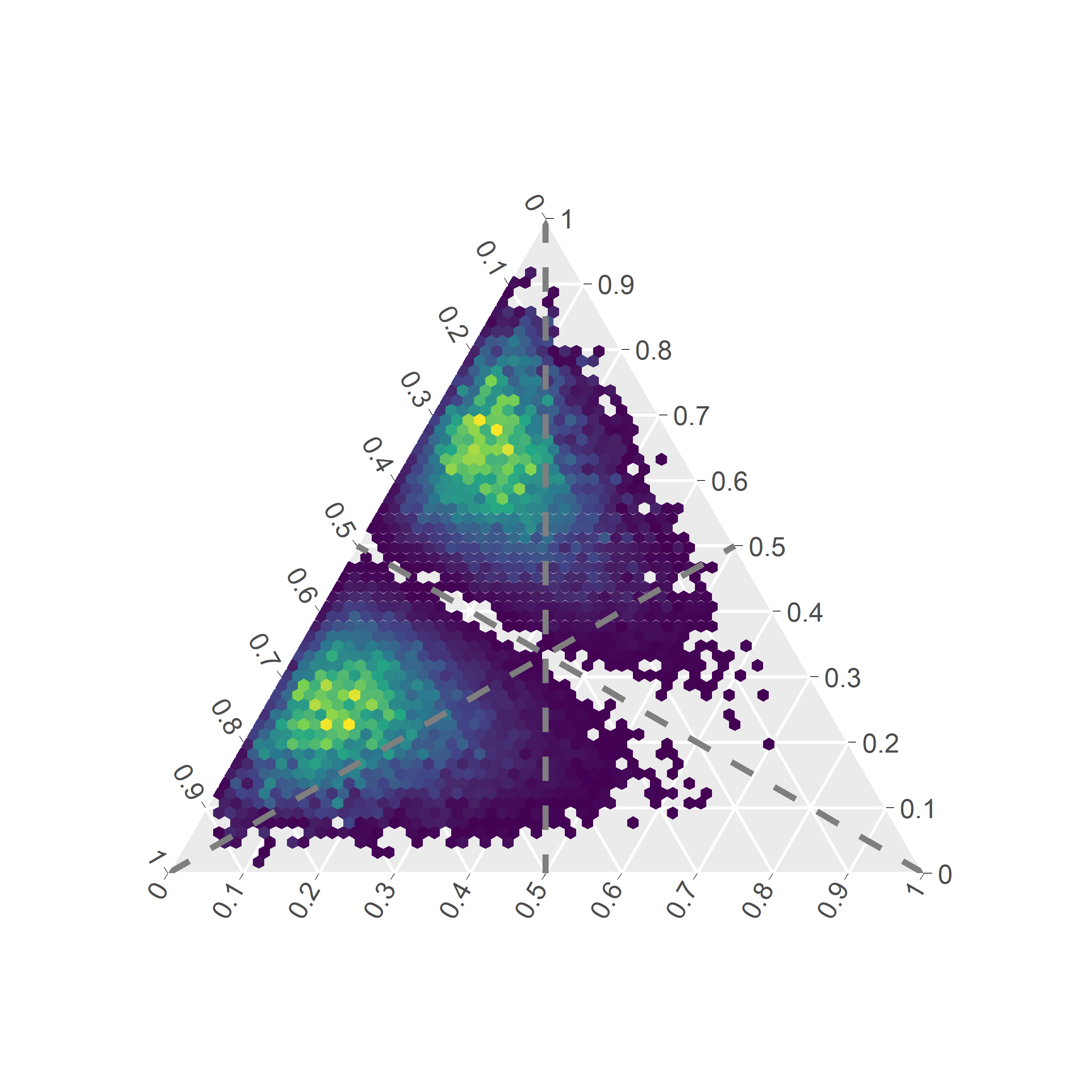}
   \includegraphics[width=.22\linewidth]{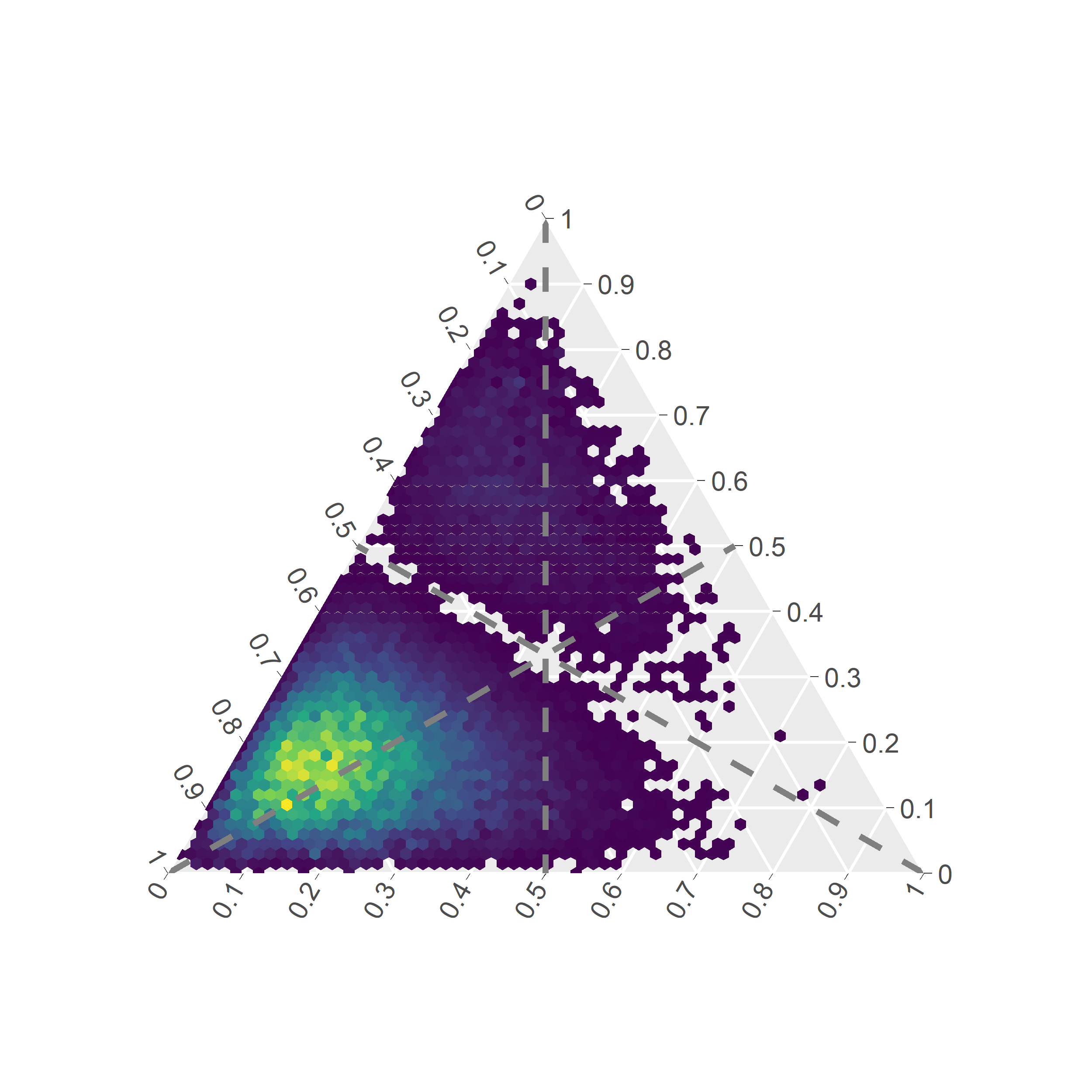}
    \includegraphics[width=.22\linewidth]{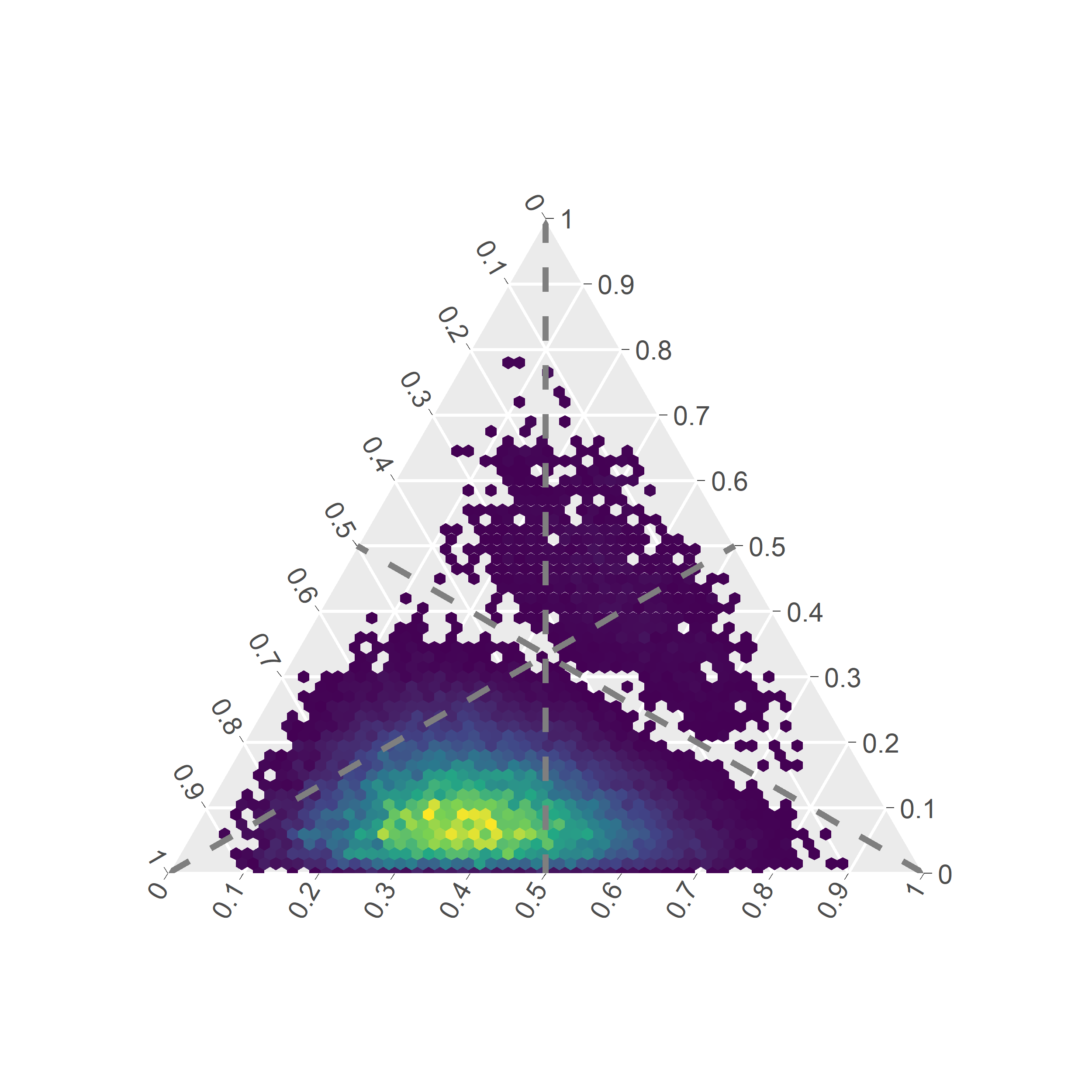}
\caption{Simulation-based densities of the Dirichlet and Selberg Dirichlet distribution for varying values of the concentration parameter $\alpha$ and $\gamma = 1$. From left to right: $\bm \alpha = (2,5,2),(5,5,2),(5,5,2),(5,3,2)$. The first row shows the Dirichlet distribution and the second the generalized Selberg Dirichlet with $\gamma = 1$. For illustration purposes, the colors are scaled separately for each plot.}
\label{fig:gsdir_prior}
\end{figure}

Figure~\ref{fig:gsdir_prior} presents the generalized Selberg Dirichlet as well as the Dirichlet distribution for $M = 3$, unequal $\alpha_m$ and $\gamma = 1$. We can observe that the repulsive property clearly divides the density into multiple modes.

The varying combinations of alpha parameters can be viewed as either an informed prior choice or a posterior update that directs preference toward a particular corner or edge of the simplex. For the standard Dirichlet distribution, this typically results in one distinct value and two similar ones. In contrast, the repulsive parameter of the Selberg Dirichlet further separates values from one another.

Employing the SIP prior for $\bm w$ with concentration parameter $\alpha_0$ results in the generalized Selberg Dirichlet in Equation~\ref{eq:gsdir_1} as full conditional distribution of the component weights, with concentration parameters incremented by the number of data points allocated to the corresponding cluster $ \alpha_{m, post} = \alpha_0 + n_m$. To circumvent the necessity of a known and closed-form normalising constant, we update $\bm w$ using an MH-step, as the normalising constants cancel out in the acceptance probabilities $r_{w}$. We use a standard Dirichlet distribution with concentration parameters $\bm \alpha_{ post} = (\alpha_{m, post}, \dots, \alpha_{M, post})$ as proposal density:
\begin{align*}
r_{w} &= \frac{GSDir(\bm{w}^+\mid \bm{\alpha}_{post}, \gamma, M ) }{GSDir(\bm{w}^-\mid \bm{\alpha}_{post}, \gamma, M ) } \frac{Dir(\bm{w}^- \mid \bm{\alpha}_{post})}{Dir(\bm{w}^+\mid \bm{\alpha}_{post})} \\
&= \frac{ \frac{1}{GD(\bm{\alpha}_{post}, \gamma, M)} \prod_{m = 1}^{M}  w_m^{+, \alpha_{m, post} - 1 } |\triangle\bm{w}^+|^{2\gamma} }{\frac{1}{GD(\bm{\alpha}_{post}, \gamma, M)} \prod_{m = 1}^{M}  w_m^{-,\alpha_{m, post} -1 } |\triangle\bm{w}^-|^{2\gamma} }
\frac{\frac{\Gamma(\sum_{m =1}^{M} \alpha_{m, post})}{\prod_{m =1}^{M}\Gamma(\alpha_{m, post})} \prod_{m =1}^{M}{w}_{m}^{-,\alpha_{m, post} - 1}}{\frac{\Gamma(\sum_{m =1}^{M} \alpha_{m, post})}{\prod_{m =1}^{M}\Gamma(\alpha_{m, post})} \prod_{m=1}^{M}{w}_{m}^{+,\alpha_{m, post} - 1}} \nonumber \\
&= \frac{ |\triangle \bm{w}^+|^{2\gamma} }{|\triangle \bm{w}^-|^{2\gamma}} \nonumber
\end{align*}
As we can see, the acceptance probabilities simplify to a ratio of the repulsive terms, implying that a set of weights with greater pairwise differences than the old will always be accepted, assuming $0 < \gamma < 1$.

\subsubsection*{4 Updating the repulsive parameter $\gamma$}
Depending on the desired level of control over the repulsion, this parameter can either be fixed or updated by placing a prior on it. For the latter option we choose the gamma distribution with shape and rate parameters $\alpha_{\gamma, 0}$ and $\beta_{\gamma, 0}$ respectively, denoted by $\Gammad{\alpha_{\gamma, 0}, \beta_{\gamma, 0}}$.
The update is then performed via MH using a log-normal distribution with tuning parameter $\sigma^2_{\gamma}$ as proposal density. The acceptance probabilities $r_{\gamma}$ read as
\begin{align*}
r_{\gamma} &= 
\frac{SDir(\bm{w}\mid  \alpha_0, M, \gamma^+) \Gammad{\gamma^+\mid \alpha_{\gamma, 0}, \beta_{\gamma, 0}}}{SDir(\bm{w} \mid \alpha_0, M, \gamma^-) \Gammad{\gamma^- \mid \alpha_{\gamma, 0}, \beta_{\gamma, 0}}}
\frac{\lognorm{\gamma^- \mid \gamma^+, \sigma^2_{\gamma}}}{\lognorm{\gamma^+ \mid \gamma^-, \sigma^2_{\gamma}}}\\
&=\frac{ \frac{1}{D( \alpha_0, \gamma^+, M)} |\triangle \bm{w} |^{2\gamma^+} \gamma^{+,\alpha_{\gamma, 0}} exp(- \beta_{\gamma,0} \gamma^{+})  }{  \frac{1}{D( \alpha_0, \gamma^-, M)} |\triangle \bm{w} |^{2\gamma^-} \gamma^{-,\alpha_{\gamma, 0}} exp(- \beta_{\gamma,0} \gamma^{-}) }  \nonumber
\end{align*}
If $\zeta$ is defined through the ratio $\rho$, acceptance probabilities are updated as presented below.
\subsubsection*{5 Update the repulsion parameter $\zeta$}
Similarly to $\gamma$, the repulsion parameter on the component locations can either be fixed or learned in a Bayesian way. We present two different options on how to update $\zeta$ in case it is chosen to be updated during the sampling process. The first is an update from its full conditional, similarly to the repulsive parameter $\gamma$. The second requires the definition of the ratio between the repulsive parameters of the Gaussian ensemble priors and the SIP prior.
\subsubsection*{5.a Using the full conditional $p(\zeta \mid \bm \mu, M)$ and an MH step}
Likewise, the repulsion parameters $\zeta$ may either be updated or held constant. Employing the same prior and proposal distribution as for $\gamma$, we obtain the acceptance rates as follows
\begin{align*}
r_{\zeta} &= 
\frac{GE(\bm{\mu}_{1:M, d} \mid M, \zeta^+) \, \Gammad{\zeta^+ \mid \alpha_{\zeta, 0}, \beta_{\zeta, 0}}}
     {GE(\bm{\mu}_{1:M, d} \mid M, \zeta^-) \, \Gammad{\zeta^- \mid \alpha_{\zeta, 0}, \beta_{\zeta, 0}}}
\cdot 
\frac{\lognorm{\zeta^- \mid \zeta^+, \sigma^2_{\gamma}}}
     {\lognorm{\zeta^+ \mid \zeta^-, \sigma^2_{\gamma}}} \\[1ex]
&=
\frac{  
\frac{1}{\mathcal{G}(M, \zeta^+)}  
\prod_{m = 1}^M \exp\left(-\frac{\zeta^+}{2} \mu^2_{m, d}\right) 
|\triangle \bm{\mu}_{1:M,d} |^{\zeta^+} \,
\zeta^{+,\alpha_{\zeta, 0}} \exp(- \beta_{\zeta,0} \zeta^+)  }
{ 
\frac{1}{\mathcal{G}(M, \zeta^-)}  
\prod_{m = 1}^M \exp\left(-\frac{\zeta^-}{2} \mu^2_{m, d}\right) 
|\triangle \bm{\mu}_{1:M,d} |^{\zeta^-} \,
\zeta^{-,\alpha_{\zeta, 0}} \exp(- \beta_{\zeta,0} \zeta^-) }
\nonumber
\end{align*}
\subsubsection*{5.b Updating the repulsive parameter $\zeta$ through the ratio $\rho = \zeta/ \gamma$}
The inverse nature of the relationship between $\gamma$ and $\zeta$ complicates suitable prior elicitation. To capture this relationship, we propose to fix the ratio $\rho = \zeta / \gamma$ and define $\zeta$ deterministically through $\gamma$. This approach alters the acceptance probabilities of $r_{\gamma}$, which now have to take into account the information from the component locations:
\begin{align*}
r_{\gamma} &= 
\frac{ \prod_{d=1}^{D} GE(\bm{\mu}_{1:M,d}\mid M, \zeta^+) SDir(\bm{w}\mid  \alpha_0, M, \gamma^+) \Gammad{\gamma^+ \mid \alpha_{\zeta, 0}, \beta_{\zeta, 0}} }{
\prod_{d=1}^{D} GE(\bm{\mu}_{1:M,d}\mid M, \zeta^-) SDir(\bm{w} \mid \alpha_0, M, \gamma^-) \Gammad{\gamma^- \mid \alpha_{\zeta, 0}, \beta_{\zeta, 0}} }
\frac{\lognorm{\gamma^- \mid \gamma^+, \sigma^2_{\gamma}}}{\lognorm{\gamma^+ \mid \gamma^-, \sigma^2_{\gamma}}} \\
&= \frac{ \prod_{d=1}^{D} \frac{1}{\mathcal{G}(M, \rho \gamma^+)}  \prod_{m = 1}^M \exp\left(-\frac{\rho \gamma^+}{2} \mu_{m,d}^2 \right) |\triangle \bm{\mu}_{1:M,d} |^{\rho \gamma^+} 
\frac{1}{D( {\alpha_0}, \gamma^+, M)} |\triangle \bm{w} |^{2 \gamma^+} \gamma^{+,\alpha_{\zeta, 0}} \exp(- \beta_{\zeta,0} \gamma^{+})  }{ 
\prod_{d=1}^{D} \frac{1}{\mathcal{G}(M, \rho \gamma^-)}  \prod_{m = 1}^M \exp\left(-\frac{\rho \gamma^-}{2} \mu_{m,d}^2 \right) |\triangle \bm{\mu}_{1:M,d} |^{\rho \gamma^-} 
\frac{1}{D( {\alpha_0}, \gamma^-, M)} |\triangle \bm{w} |^{2\gamma^-} \gamma^{-,\alpha_{\zeta, 0}} \exp(- \beta_{\zeta,0} \gamma^-)}
\end{align*}
Alternative parameterizations, such as defining \( \gamma \) deterministically via \( \zeta \) or modeling the product \( \gamma \zeta \), were also explored. However, the formulation presented here yielded the most consistent results in practice.
\subsubsection*{6 Update $M_{na}$, $\bm c$, $\bm w$ and $\bm \mu$ using a birth-and-death step}

Unless the number of non-allocated components $M_{na} = 0$, in which case a birth-move is performed per default, we choose between a birth or death move with respective probabilities $q$ and $1 - q$. Subsequently, we compute the respective acceptance probabilities, which are expressed as the product of the ratios of the full conditionals, birth-and-death probabilities $\frac{q}{1 - q}$ as well as proposal densities, where we denote with $b(\cdot)$ and $d(\cdot)$ the birth and death proposal densities.

Since $\bm w$ is a compositional vector, adding a new component changes the weights of all existing components. For this reason, we denote the $(M + 1)$-dimensional vector containing the novel weight by $\Tilde{\bm w}$, proposed from a Dirichlet distribution with concentration parameters given by $\bm \alpha_{post}$ for the allocated and $\alpha_0$ for the non-allocated components and denoted by $ \Tilde{\bm\alpha}$. Similarly, we denote by $\bm \hat{w}$ the $(M-1)$-dimensional vector excluding the component chosen to be closed in a death move and by $\hat{\bm \alpha}$ the vector of concentration parameters excluding the concentration parameter corresponding to the component chosen to be closed. The cluster allocations are included through the categorical distribution, denoted by $\mathcal{C}$.

In case of a birth move, the birth and death proposals read as
\begin{align*}
    b(\tilde{\bm w}, \bm \mu_{M+1, 1:D}) &= Dir(\tilde{\bm w} \mid \Tilde{\bm\alpha}) \times \mathcal{N}\left(\bm 0, \bm{\Sigma}_{\mu, 0} \right)\\
    d(\bm w) &= \frac{1}{M_{na} + 1} \times Dir(\bm w \mid \bm{\alpha}_{post})
\end{align*}
where \( \bm{0} \) denotes a zero vector of dimension \( D \), \( \bm{\Sigma}_{\mu, 0} \) a diagonal matrix with entries equal to \( 1/\zeta \) and $\frac{1}{M_{na} + 1}$ the uniform probability of choosing a non-allocated component to be closed. In case of a death move, the proposals become
\begin{align*}
b(\bm {w}) &= Dir(\bm {w} \mid \bm \alpha_{post})\\
d(\hat{\bm{w}}) &= \frac{1}{M_{na} - 1} \times Dir(\hat{\bm{w}} \mid \hat{\bm \alpha})
\end{align*}
The acceptance probabilities corresponding to a birth move are $A_b = \min(1, A_b^*)$ with
\begin{align*}
A_b^* &= 
\frac{ Poi_1(M + 1 \mid \lambda) \times
\mathcal{C}( \bm c \mid \Tilde{\bm{w}} ) \times
SDir(\Tilde{\bm{w}} \mid M + 1, \alpha_0, \gamma) 
}{
Poi_1(M \mid \lambda) \times 
\mathcal{C}(\bm c \mid \bm{w} )\times 
SDir(\bm{w}\mid  M , \alpha_0, \gamma) 
}  \nonumber\\
& \times
\frac{\prod_{d = 1}^D GE(\bm{\mu}_{1:M + 1, d} \mid M +1, \zeta)}{
\prod_{d = 1}^D GE(\bm{\mu}_{1:M, d}\mid M, \zeta) } \times \frac{ 1 - q}{q} \times 
\frac{d(\bm w)}{ b( \tilde{\bm w}, \bm \mu_{M+1, 1:D} )} \nonumber \\
&= \frac{\lambda}{M}  \times 
\frac{\frac{ 1}{D(\alpha_0, \gamma, M + 1)}  |\triangle \Tilde{\bm w}|^{2\gamma}}{\frac{ 1}{D(\alpha_0, \gamma, M )} |\triangle \bm w|^{2\gamma}}  \times 
 \frac{\prod_{d = 1}^D \mathcal{G}(M, \zeta) \prod_{m = 1}^{M} \left| {\mu}_{M + 1, d} -{\mu}_{m,d} \right|^{\zeta}}{ \mathcal{G}(M+1, \zeta)} \nonumber \\
& \times \frac{1 - q}{q} \times
\frac{1}{M^{(na) }+ 1}
\frac{ \Gamma(\tilde{\alpha}_{M + 1}) \Gamma(\sum_{m =1}^{M} \alpha_{m, post}) }{\Gamma(\sum_{m =1}^{M + 1} \tilde{\alpha}_{m})} 
(2\pi)^{D/2} |\bm{\Sigma}_{\mu, 0}|^{1/2}
\end{align*}
The corresponding acceptance rates for a Death-move are $A_d = \min(1, A_d^*)$ with
\begin{align*}
    A_d &= 
    \frac{ Poi_1(M - 1 \mid \lambda) \times
    \mathcal{C}( \bm c \mid \hat{\bm{w}}) \times 
    SDir(\hat{\bm{w}} \mid M - 1, \alpha_0, \gamma) 
    }{
    Poi_1(M \mid \lambda) \times 
    \mathcal{C}(\bm c \mid \bm{w}) \times
    SDir(\bm{w}\mid  M, \alpha_{0}, \gamma) 
    }  \\
   & \times
   \frac{\prod_{d = 1}^D  GE(\hat{\bm{\mu}}_{1:(M-1), d} \mid M - 1, \zeta) }{
    \prod_{d = 1}^D GE(\bm{\mu}_{1:M, d}\mid M , \zeta)  } \times
   \frac{ q}{1 - q} \times 
    \frac{b(\bm{w} )}{ d(\hat{\bm{w}})}  \\
 &= \frac{ M - 1  }{ \lambda }  \times 
\frac{\frac{ 1}{D(\alpha_0, \gamma, M - 1)} {|\triangle \hat{\bm w}|^{2\gamma}}}{\frac{ 1}{D(\alpha_0,\gamma, M )} {|\triangle \bm w|^{2\gamma}}}  \times
  \frac{  \mathcal{G}(M, \zeta) 
}{\prod_{d = 1}^D \mathcal{G}(M - 1, \zeta) { \prod_{m = 1}^{M-1} \left|  \mu_{M, d} -  \mu_{m, d} \right|^{\zeta}} }
 \nonumber \\
& \times \frac{q}{1 - q}  \times  \frac{ \Gamma(\sum_{m =1}^{M}  \alpha_{m, post}) }{  \Gamma( \hat{\alpha}_{M}) \Gamma(\sum_{m =1}^{M - 1}  \hat{\alpha}_{ m})} 
M^{(na)}
\end{align*}
The MCMC algorithm for the SIP mixture with a fixed number of components results as a special case of the presented algorithm, with no necessity for step 6 of the algorithm.

\subsection{Proof of Theorem~\ref{thm:ldp}}
\label{supp:ldp}


We denote by $S_M$ the collection $\{ t \in \mathbb{R}^M \, \big| \, 0 \leq t_i \leq 1, \sum_{i=1}^M t_i \leq 1  \}$ and by $\nu_M$ the measure induced on $S_M$ by the $SDir$ distribution, where we omit the $(M+1)$-th component. We define the empirical measure,
\[ E_M = \frac{1}{M} \sum_{i = 1}^M \delta_{w_i} \]
which is a (random) measure on $[0,1]$. More generally, for $t \in S_M$, we define the discrete measure
\[ \kappa_{M,t} = \frac{1}{M}\sum_{i = 1}^M \delta_{t_i} \]
Then, for any Borel subset $\Gamma \subset \mathcal{M}([0,1])$, we define
\[ E_M(\Gamma) := \nu_M \left( \{ t \in S_M : \kappa_t \in \Gamma  \} \right) \]
For convenience, we take $\alpha - 1 = \beta$ and assume that $\frac{M}{\beta} \to a$, and define the functions
\[ F(x,y) := -2a^2\gamma\log|x-y| - a\left( \log x + \log y \right)  \]
and
\[ F_M(x,y) := -2\gamma\frac{M^2}{\beta^2}\log|x-y| - \frac{M}{\beta}\left( \log x + \log y \right) \]
For $R>0$, we also define truncated versions of these functions as
\[ F_R(x,y) = \min\{ F(x,y), R \} \] and \[F_{M,R}(x,y) = \min\{ F_M(x,y), R \}  \]
Note that $F_{M,R}$ converges to $F_R$ uniformly. Moreover, for each $R$, $F_R$ is bounded and continuous. Then, since $F_R \to F$ monotonically, we have that
\[ \iint F(x,y) d\mu(x)d\mu(y) = -2a^2\gamma\iint \log|x-y| d\mu(x)d\mu(y) - 2a\int \log x d\mu(x)  \]
is a well-defined and lower semi-continuous functional on $\mathcal{M}([0,1])$; the same holds for $F_M$. 
Furthermore, define $\mu_0, \mu_M \in \mathcal{M}([0,1])$ so that
\[ \iint F(x,y) d\mu_0(x)d\mu_0(y) = \inf_{\mu \in \mathcal{M}([0,1])} \iint F(x,y) d\mu(x)d\mu(y) \]
and
\[ \iint F_M(x,y) d\mu_M(x)d\mu_M(y) = \inf_{\mu \in \mathcal{M}([0,1])} \iint F_M(x,y) d\mu(x)d\mu(y) \]
The measures $\mu_0$ and $\mu_n$ are guaranteed to exist and have compact support by \cite{totik1994weighted}. Then, we obtain the map $I: \mathcal{M}([0,1]) \to \mathbb{R}$ defined by
\[ I(\mu) = -2a^2\gamma\iint \log|x-y| d\mu(x)d\mu(y) - 2a\int \log x d\mu(x)  + B \]
where
\[ B = \lim_{M \to \infty} \frac{1}{M^2} \log B_M \]
\begin{lemma}
The sequence $\{\mu_M\}$ is tight and
\[ \iint F(x,y) d\mu_0(x)d\mu_0(y) \leq \liminf_{M \to \infty} \iint F_M(x,y) d\mu_M(x)d\mu_M(y) \]
\end{lemma}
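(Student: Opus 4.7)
The tightness claim is immediate: each $\mu_M$ is a probability measure on the compact interval $[0,1]$, so by Prokhorov's theorem every subsequence of $\{\mu_M\}$ admits a further weakly convergent subsequence in $\mathcal{M}([0,1])$. For the liminf inequality, the plan is a standard lower semicontinuity argument through the truncations $F_R$ and $F_{M,R}$ already introduced in the setup, which reduce the problem to integrating bounded continuous test functions against weakly convergent product measures.

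The first step is to pass to a subsequence along which $\iint F_M \, d\mu_M d\mu_M$ converges to $L := \liminf_M \iint F_M \, d\mu_M d\mu_M$, and then, using tightness, to extract a further subsequence $\{\mu_{M_k}\}$ converging weakly to some $\mu^* \in \mathcal{M}([0,1])$. Weak convergence of the product measures $\mu_{M_k} \otimes \mu_{M_k} \rightharpoonup \mu^* \otimes \mu^*$ on $[0,1]^2$ then follows from single-factor convergence by a Stone--Weierstrass density argument on $C([0,1]^2)$.

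Next, for fixed $R > 0$, the truncation $F_R$ is bounded and continuous on $[0,1]^2$, and the uniform convergence $F_{M_k,R} \to F_R$ stated in the setup combines with weak convergence of the product measures to give
\[
\iint F_R \, d\mu^* d\mu^* = \lim_{k \to \infty} \iint F_{M_k,R} \, d\mu_{M_k} d\mu_{M_k}.
\]
Since $F_{M_k,R} \leq F_{M_k}$ pointwise, the right-hand limit is bounded above by $L$, so $\iint F_R \, d\mu^* d\mu^* \leq L$ for every $R$. To remove the truncation, observe that on $[0,1]^2$ one has $-\log|x-y| \geq 0$ and $-\log x, -\log y \geq 0$, so (using $a, \gamma \geq 0$) $F \geq 0$ and $F_R \nearrow F$ as $R \to \infty$. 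The monotone convergence theorem then upgrades the bound to
\[
\iint F \, d\mu^* d\mu^* = \lim_{R \to \infty} \iint F_R \, d\mu^* d\mu^* \leq L,
\]
and the minimality of $\mu_0$ gives $\iint F \, d\mu_0 d\mu_0 \leq \iint F \, d\mu^* d\mu^* \leq L$, which is precisely the desired inequality.

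The main obstacle is the singularity of $F$ on the diagonal $\{x=y\}$ and on the axes $\{x=0\} \cup \{y=0\}$, which prevents a direct application of weak convergence; the truncation device $F_R$ circumvents this by reducing the limit-passage to a bounded continuous setting, and the nonnegativity of $F$ on $[0,1]^2$ is what lets the monotone convergence step cleanly lift the bound from $F_R$ back to $F$. A secondary subtlety is the weak convergence of product measures, which relies on compactness of $[0,1]$ together with the uniform approximation of $C([0,1]^2)$ by sums of product functions.
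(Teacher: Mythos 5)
Your proposal is correct and follows essentially the same route as the paper's proof: tightness from compactness of $[0,1]$, extraction of a weakly convergent subsequence realising the liminf, passage to the limit via the bounded continuous truncations $F_{M,R}\to F_R$, removal of the truncation by monotone convergence, and finally the minimality of $\mu_0$. Your write-up is in fact somewhat more careful than the paper's (you make explicit the weak convergence of the product measures and the pointwise bound $F_{M_k,R}\le F_{M_k}$ that justifies the final inequality), but the underlying argument is the same.
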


\begin{proof}
First, since the two-dimensional simplex is compact, we can immediately conclude that the sequence is tight. Then, there exists a sub-sequence $\{\mu_{M_k}\}$ that converges weakly to some $\hat{\mu} \in \mathcal{M}([0,1])$ and 
\begin{equation*}
\begin{gathered}
    \lim_{k \to \infty} \iint F_{M_k}(x,y) d\mu_{M_k}(x) d\mu_{M_k} (y) = 
    \liminf_{M \to \infty} \iint F_{M}(x,y) d\mu_{M}(x) d\mu_{M}(y)
\end{gathered}
\end{equation*}
It follows that
\begin{align*}
\iint F(x,y) d\mu_0(x) d\mu_0(y) & \leq \iint F(x,y) d\hat{\mu}(x) d\hat{\mu}(y) \\ & = \sup_{R > 0} \iint F_R(x,y) d\hat{\mu}(x) d\hat{\mu}(y) \\ & = \sup_{R > 0} \lim_{M \to \infty} \iint F_{R,M}(x,y) d\mu_{M_i}(x)d\mu_{M_i}(y) \\ & \leq \lim_{M \to \infty} \iint F_{R,M}(x,y) d\mu_{M_i}(x)d\mu_{M_i}(y)
\end{align*}
which yields the desired inequality.
\end{proof}
The following argument establishes only a weak large deviations principle (LDP). To obtain a full LDP in general, one must additionally verify the exponential tightness of the sequence of measures \citep{RAS2015LDPgibbs} However, in our setting, the compactness of the underlying space guarantees this property automatically, so no further conditions are needed.
\begin{lemma}
The joint probability density function can be rewritten as
\[ SDir = B_M^{-1} \exp \left[ -\frac{\beta^2}{M^2} \sum_{1 \leq i < j \leq M} F_M(w_i,w_j)  \right] \exp \left[ \frac{\beta}{M} \sum_{i = 1}^M \log w_i  \right] \]
\end{lemma}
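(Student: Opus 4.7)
The plan is purely algebraic: expand the right-hand side of the claimed identity and match it term-by-term against the Selberg Dirichlet density given in~\eqref{eq:sdir_prior}, with $B_M := D(\alpha,\gamma,M)$ as the normalising constant and $\beta := \alpha - 1$.

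First I would expand the inner sum using the definition of $F_M$:
\begin{equation*}
-\frac{\beta^2}{M^2}\sum_{1\leq i<j\leq M} F_M(w_i,w_j)
= 2\gamma\sum_{1\leq i<j\leq M}\log|w_i-w_j|
+ \frac{\beta}{M}\sum_{1\leq i<j\leq M}\bigl(\log w_i+\log w_j\bigr).
\end{equation*}
Next I would apply the double-counting identity $\sum_{1\leq i<j\leq M}(\log w_i+\log w_j) = (M-1)\sum_{i=1}^M \log w_i$, since each index $i$ participates in exactly $M-1$ unordered pairs. Combining this with the second factor $\exp\bigl[\frac{\beta}{M}\sum_i\log w_i\bigr]$ on the right-hand side merges the log-weight contributions into
\begin{equation*}
\frac{\beta(M-1)}{M}\sum_{i=1}^M\log w_i + \frac{\beta}{M}\sum_{i=1}^M\log w_i
= \beta\sum_{i=1}^M\log w_i.
\end{equation*}

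Then I would exponentiate: the repulsive piece becomes $\exp\!\bigl[2\gamma\sum_{i<j}\log|w_i-w_j|\bigr]=|\triangle\bm w|^{2\gamma}$, and the weight piece becomes $\exp\!\bigl[\beta\sum_i\log w_i\bigr]=\prod_m w_m^{\alpha-1}$. Multiplying by $B_M^{-1}$ reproduces exactly the density in~\eqref{eq:sdir_prior}, establishing the claimed factorisation.

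I do not expect any real obstacle: the manipulation is entirely bookkeeping, with the only substantive input being the pair-counting identity that absorbs the apparent mismatch between the single sum $\sum_i \log w_i$ (which carries a factor $\beta/M$) and the double sum arising from $F_M$ (which carries a factor $\beta(M-1)/M$ after the identity). The exponents $M^2$ and $M$ in the definition of $F_M$ are precisely chosen so that the prefactors $\beta^2/M^2$ and $\beta/M$ cancel them and leave the clean Selberg form; once this cancellation is made explicit, the proof is a one-line check.
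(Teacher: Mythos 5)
Your proposal is correct and is essentially the paper's own proof read in reverse: the paper starts from the unnormalised density, splits $\beta\sum_i\log w_i$ as $\frac{M-1}{M}\beta\sum_i\log w_i+\frac{1}{M}\beta\sum_i\log w_i$, and converts the first piece into $\frac{\beta}{M}\sum_{i<j}(\log w_i+\log w_j)$ using exactly the pair-counting identity you invoke. Expanding the right-hand side rather than massaging the left-hand side is a cosmetic difference only; the single substantive step is the same.
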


\begin{proof}
We ignore the normalizing constant for the moment and compute 
\begin{align*}
\left( \prod_{i = 1}^M w_i^\beta \right)  \left( \prod_{i < j} |w_i - w_j|^{2\gamma} \right) & = \exp\left\{\log\left[  \left( \prod_{i = 1}^M w_i^\beta \right)  \left( \prod_{i < j} |w_i - w_j|^{2\gamma} \right) \right] \right\} \\ & = \exp \left[ \sum_{i = 1}^M \beta\log w_j + 2\gamma \sum_{i < j} \log|w_i - w_j| \right] \\
& = \exp \left[ \frac{M+1-1}{M} \sum_{i = 1}^M \beta\log w_j + 2\gamma \sum_{i < j} \log|w_i - w_j| \right] \\
& = \exp \left[ \frac{M-1}{M} \sum_{i = 1}^M \beta\log w_j + 2\gamma \sum_{i < j} \log|w_i - w_j| + \frac{1}{M} \sum_{i = 1}^M \beta\log w_j \right] \\
& = \exp \left[ \frac{\beta}{M} \sum_{i < j} \log w_i + \log w_j + 2\gamma \sum_{i < j} \log|w_i - w_j| + \frac{\beta}{M} \sum_{i = 1}^M \log w_j \right] \\
& = \exp \left[ \frac{\beta}{M} \sum_{i < j} \left( \log w_i + \log w_j + 2\gamma \log|w_i - w_j| \right) \right]  \exp \left[ \frac{\beta}{M} \sum_{i = 1}^M \log w_j \right] \\
& = \exp \left[ -\frac{\beta^2}{M^2} \sum_{i < j} F_M(w_i,w_j) \right] \exp \left[ \frac{\beta}{M} \sum_{i = 1}^M \log w_j \right]
\end{align*}
\end{proof}

The next two lemmas will be used to establish the large deviation upper bound.
\begin{lemma}
\[ \limsup_{M \to \infty} \frac{1}{M^2} \log B_M \leq - \iint F(x,y) d\mu_0(x)d\mu_0(y) \]
\end{lemma}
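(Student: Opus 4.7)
The plan is to obtain the upper bound through a Laplace-type estimate on the integral representation of $B_M$, using truncation to handle the logarithmic singularity of $F$ on the diagonal. I would start from the representation established in the preceding lemma,
\[
B_M = \int_{S_M}\exp\left[-\frac{\beta^2}{M^2}\sum_{1 \leq i < j \leq M}F_M(w_i, w_j) + \frac{\beta}{M}\sum_{i = 1}^M \log w_i\right]d\bm w,
\]
noting that the second exponential factor will be subleading on the $M^{2}$-scale: its logarithm can be controlled by the log of a standard Dirichlet normalizing constant, which grows only as $O(M\log M)$, so the asymptotics will be governed entirely by the pairwise-interaction sum.

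Next, I would introduce the truncation $F_{M,R} := \min(F_M, R)$, which bounds the exponent from above since $F_{M,R} \leq F_M$ pointwise. The truncated kernel is bounded and continuous, which allows the sum to be expressed through the empirical measure $\kappa_{\bm w} = M^{-1}\sum_i \delta_{w_i}$ via
\[
\sum_{i<j}F_{M,R}(w_i,w_j) = \frac{M^2}{2}\iint F_{M,R}\,d\kappa_{\bm w}\,d\kappa_{\bm w} - \frac{1}{2}\sum_i F_{M,R}(w_i,w_i),
\]
where the diagonal correction is bounded by $MR/2$ and will contribute only $o(M^2)$ to $\log B_M$ after multiplication by $\beta^2/M^2$. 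I would then apply the variational bound $\iint F_{M,R}\,d\kappa\,d\kappa \geq \iint F_{M,R}\,d\mu_{M,R}\,d\mu_{M,R}$ uniformly in $\bm w$, where $\mu_{M,R}$ minimizes the truncated functional (existence is guaranteed by compactness of the space of probability measures on $[0,1]$ together with lower semicontinuity). Integrating over $S_M$, whose volume $1/M!$ contributes only $o(M^2)$ to $\log B_M$, yields a uniform upper bound on $\frac{1}{M^2}\log B_M$ in terms of $\iint F_{M,R}\,d\mu_{M,R}\,d\mu_{M,R}$.

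The limit would then be taken in two stages. First, with $R$ fixed, I would send $M \to \infty$: since $F_{M,R} \to F_R$ uniformly and the underlying space is compact, a $\Gamma$-convergence-style argument analogous to the preceding lemma gives $\iint F_{M,R}\,d\mu_{M,R}\,d\mu_{M,R} \to \iint F_R\,d\mu_R\,d\mu_R$, where $\mu_R$ minimizes $\iint F_R\,d\mu\,d\mu$. Then, letting $R \to \infty$, the monotone convergence $F_R \nearrow F$ together with lower semicontinuity will pass the minimum $\iint F_R\,d\mu_R\,d\mu_R$ to $\iint F\,d\mu_0\,d\mu_0$, yielding the claimed upper bound.

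The hardest part will be the logarithmic divergence of $F$ on the diagonal, which prevents a direct Laplace-type argument applied to the empirical measure; the truncation $F_R$ is what sidesteps this obstacle. The price is that one must carefully interchange the $M \to \infty$ and $R \to \infty$ limits at the end, and verifying the $\Gamma$-convergence (so that minimizers and minima both converge in the intended way, using the boundedness of $F_{M,R}$ and weak compactness of $\mathcal{M}([0,1])$) is the most technically delicate step in the plan.
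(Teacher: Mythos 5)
Your proposal is correct and follows essentially the same route as the paper: start from the exponential representation of the density from the preceding lemma, control the single-particle factor as subleading on the $M^2$ scale, bound the pairwise interaction sum from below by the energy of a minimizing measure, and conclude by passing to the limit via convergence of the minimizers (the paper delegates this last step to its Lemma~1 using the minimizers $\mu_M$ of the untruncated kernels $F_M$, whereas you re-derive it through the truncated functionals $F_{M,R}$ and a second limit $R \to \infty$). If anything, your explicit truncation and diagonal correction make the variational step $\sum_{i<j} F(w_i,w_j) \gtrsim \tfrac{M^2}{2}\inf_{\mu} \iint F \, d\mu \, d\mu$ more careful than the paper's version, which applies the analogous bound directly to the diagonally singular kernel $F_M$ without addressing the excluded diagonal.
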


\begin{proof}
Using Lemma 2, we estimate
\begin{align*}
B_M & = \int \cdots \int \exp \left[ \frac{\beta^2}{M^2} \sum_{i < j} F_M(w_i,w_j) \right] \exp \left[ \frac{\beta}{M} \sum_{i = 1}^M \log w_j \right] dw_1 \dots dw_M \\
& \leq \int \cdots \int \exp \left[ \frac{\beta^2}{M^2} \sum_{i < j} F_M(w_i,w_j) \right] dw_1 \dots dw_M  \left( \int \exp \left[ \frac{\beta}{M} \sum_{i = 1}^M \log w_j \right] dx \right)^M  \\
& \leq \exp \left[ -\beta^2 \iint F_M(x,y) d\mu_M(x)d\mu_M(y) \right]\left( \int \exp \left[ \frac{\beta}{M} \sum_{i = 1}^M \log w_j \right] dx \right)^M
\end{align*}
Then, since
\[ \sup_{M \geq 1} \int \exp \left[ \frac{\beta}{M} \sum_{i = 1}^M \log w_j \right] dx < \infty  \]
we arrive at
\begin{align*} 
\limsup_{M \to \infty} \frac{1}{M^2} \log B_M & \leq -\liminf_{M \to \infty}  \iint F_M(x,y) d\mu_M(x)d\mu_M(y) \\
& \leq - \iint F(x,y) d\mu_0(x)d\mu_0(y)
\end{align*}
\end{proof}

\begin{lemma}
For every $\mu \in \mathcal{M}([0,1])$, we have
\[ \inf_G \left[ \limsup_{M \to \infty} \frac{1}{M^2} \log L_p(G) \right] \leq - \iint F(x,y) d\mu(x)d\mu(y) - \liminf_{M \to \infty} \frac{1}{M^2} \log B_M  \]
where $G$ runs over neighborhoods of $\mu$ in the weak topology.
\end{lemma}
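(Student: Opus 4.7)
The plan is to follow the Coulomb-gas LDP strategy familiar from random matrix theory (Ben Arous--Guionnet), exploiting the Gibbs form of the Selberg Dirichlet density. Setting $\beta = \alpha - 1$ and grouping logarithms in the $SDir$ density, one obtains a representation as $B_M^{-1}$ times $\exp[-(\beta^2/M^2)\sum_{i<j}F_M(w_i,w_j)]$ times a subleading single-site factor, where $F_M \to F$ pointwise and $F(x,y) = -2a^2\gamma\log|x-y| - a(\log x + \log y)$. The joint law is thus an interacting particle system with logarithmic pair-potential, and the claimed rate function $I(\mu) = \iint F\,d\mu\,d\mu + B$ arises as the continuum free energy of this system.

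First I would establish existence of $B = \lim M^{-2}\log B_M$. The upper bound comes from Jensen-type estimates on the truncation $F_{M,R} = \min\{F_M, R\}$: bounding $B_M$ above by an integral involving $F_{M,R}$, extracting a weakly convergent subsequence of minimisers $\mu_M \rightharpoonup \mu_0$ (tightness is automatic since $[0,1]$ is compact), and sending $R \to \infty$ by monotone convergence gives $\limsup M^{-2}\log B_M \leq -\iint F\,d\mu_0\,d\mu_0$. For the matching lower bound I would employ a test-configuration argument: partition $[0,1]$ into $M$ cells of approximately equal $\mu_0$-mass, pick a representative point per cell, and thicken into a small product neighbourhood of width $M^{-c}$ for $c$ large. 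A Riemann-sum estimate against $F_R$ followed by $R \to \infty$ yields $\liminf M^{-2}\log B_M \geq -\iint F\,d\mu_0\,d\mu_0$, establishing both existence and the value of $B$.

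Next I would prove the full LDP on $\mathcal{M}([0,1])$ with rate $I$. The upper bound on closed $\mathcal{F}$ follows from the inequality $(\beta^2/M^2)\sum_{i<j}F_{M,R}(w_i,w_j) \geq \iint F_R\,d\mu_{E_M}\,d\mu_{E_M} + O(1/M)$, weak continuity of $\mu \mapsto \iint F_R\,d\mu\,d\mu$, compactness of $\mathcal{M}([0,1])$ (inherited from $[0,1]$), and a finite-cover argument; sending $R \to \infty$ completes the bound via monotone convergence. The lower bound on open $\mathcal{O}$ repeats the test-configuration argument from the $B_M$ step but centred at an arbitrary $\mu \in \mathcal{O}$ with $I(\mu) < \infty$, producing $\liminf M^{-2}\log \mathbb{P}(\mu_{E_M} \in \mathcal{O}) \geq -I(\mu)$ for each such $\mu$, after which one takes the supremum over $\mu \in \mathcal{O}$. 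Exponential tightness is automatic from compactness of $\mathcal{M}([0,1])$, upgrading the weak LDP to the full LDP.

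Finally, uniqueness of the minimiser $\mu_0$ follows from strict convexity of the logarithmic-energy functional $\mu \mapsto -\iint \log|x-y|\,d\mu(x)\,d\mu(y)$ on probability measures — a classical result of potential theory, derivable either from the Fourier representation of the logarithmic kernel or from the strict positivity of the logarithmic energy of a signed measure of zero total mass (cf.\ Saff--Totik). Linearity of $-2a\int \log x\,d\mu$ preserves strict convexity, so $I$ attains its infimum at a unique $\mu_0$, which by the very definition of $B$ satisfies $I(\mu_0) = 0$. The main technical obstacle throughout is the logarithmic singularity of $F$ at the diagonal, which rules out straightforward bounded approximations and forces every estimate to go through the truncation $F_R$ before being passed to the limit by monotone convergence; in the lower-bound test configurations one must further argue that near-coincident points contribute negligibly. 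This is the standard sticking point in Coulomb-gas LDPs, and managing it cleanly via the $F_R$-truncation framework already in place is where the most care is required.
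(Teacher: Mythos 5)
Your plan for the upper bound is essentially the paper's proof of this lemma: both rewrite the density in Gibbs form, replace the pairwise sum by the double integral of the truncated kernel $F_{M,R}$ against the empirical measure (with the diagonal contributing $O(R/M)$), invoke weak continuity of $\sigma \mapsto \iint F_R\,d\sigma(x)\,d\sigma(y)$ to localise the infimum over neighbourhoods at $\mu$, and send $R\to\infty$ by monotone convergence. The only cosmetic difference is that you phrase the bound for closed sets via compactness and a finite cover, whereas the paper proves the equivalent pointwise (weak-LDP) form over neighbourhoods of $\mu$.
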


\begin{proof}
Let $\mu \in \mathcal{M}([0,1])$ and let $G$ be a neighbourhood of $\mu$. For $t \in S_M$, define $D_{t,M}$ to be the $M \times M$ diagonal matrix whose entries are given by $t$. Define $\tilde{G} := \{ t \in S_M : \mu_{D_{t,M}} \in G \}$. Then, letting $\nu$ denote the measure corresponding to the density $SDir$, we have
\[ \mu_{E_M}(G) = \nu(\tilde{G}) \]
Now, note that $\mu_{E_M} \otimes \mu_{E_M}\left( \{x=y\} \right) = \frac{1}{M}$, from which we see
\begin{equation*}
    \begin{gathered}
        \iint f_{R,N}(x,y) d\mu_{E_M}(x) d\mu_{E_M}(y) =
        \iint\limits_{x \not = y} f_{R,N}(x,y) d\mu_{E_M}(x) d\mu_{E_M}(y) + \frac{R}{M}
    \end{gathered}
\end{equation*}
Then, rewriting the density as before, we obtain
\begin{align*} 
&\mu_{E_M}(G) = \nu(\tilde{G}) \leq B_M^{-1} \left( \int \exp \left[ \frac{\beta}{M} \sum_{i = 1}^M \log w_j \right] dx \right)^M \exp \left( - M^2 \inf_{\sigma \in G} \iint\limits_{x \not= y} F_{R,M} (x,y) d\sigma(x) d\sigma(y)  + MR \right)
\end{align*}
for any $R > 0$. Moreover, we know that
\begin{equation*}
    \begin{gathered}
        \lim_{M \to \infty} \left( \inf_{\sigma \in G} \iint F_{R,M}(x,y) d\sigma(x)d\sigma(y) \right) =
        \inf_{\sigma \in G} \iint F_R(x,y) d\sigma(x) d\sigma(y)
    \end{gathered}
\end{equation*}
Hence
\begin{equation*}
    \begin{gathered}
        \limsup_{M \to \infty} \frac{1}{M^2} \log \mu_{E_M}(G) \leq
        - \inf_{\sigma \in G} \iint F_R(x,y) d\sigma(x) d\sigma(y) - \liminf_{M \to \infty} \frac{1}{M^2} B^{-1}_M
    \end{gathered}
\end{equation*}
Since $F_R(x,y)$ is bounded and continuous, it defines a continuous functional so that
\begin{equation*}
    \begin{gathered}
        \inf_{G} \left( \limsup_{M \to \infty} \frac{1}{M^2} \log (\mu_{E_M}(G)) \right) \leq  -
        \iint F_R(x,y) d\mu(x) d\mu(y) - \liminf_{M \to \infty} \frac{1}{M^2} \log B_M
    \end{gathered}
\end{equation*}
Taking the limit as $R \to \infty$ and applying the monotone convergence theorem yield the desired result.
\end{proof}

Finally, we establish the large deviation lower bound.

\begin{lemma}
\[ \liminf_{M \to \infty} \frac{1}{n^2} \log B_M \geq - \iint F(x,y) d\mu_0(x)d\mu_0(y) \]
and, for every $\mu \in \mathcal{M}(\mathbb{R}^+)$, 
\[ \inf_G \left[ \liminf_{M \to \infty} \frac{1}{n^2} \log L_p(G) \right] \geq - \iint F(x,y) d\mu(x)d\mu(y) - \limsup_{M \to \infty} \frac{1}{M^2} \log B_M  \]
where $G$ runs over neighborhoods of $\mu$ in the weak topology.
\end{lemma}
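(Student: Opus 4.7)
The plan is to establish the lower bound using the standard ``density of states'' construction that is dual to the upper bound proved in Lemma 4. For any $\mu \in \mathcal{M}([0,1])$, I would construct explicit configurations whose empirical measures approximate $\mu$ in the weak topology, then lower-bound the density on a small product-neighborhood around each such configuration.

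First I would tackle the bound on $B_M$ alone. Using the factorisation from Lemma 2, I would restrict the domain of integration in the definition of $B_M$ to a small product-neighborhood $V_M$ around an $M$-point approximation $(w_1^*, \ldots, w_M^*)$ of $\mu_0$. A convenient choice is the quantile configuration $w_i^* = F_{\mu_0}^{-1}(i/(M+1))$, which is automatically ordered and comes with an explicit minimum separation determined by the density of $\mu_0$. On $V_M$ I would use continuity to compare $\frac{\beta^2}{M^2}\sum_{i<j} F_M(w_i,w_j) + \frac{\beta}{M}\sum_i \log w_i$ with $M^2 \iint F(x,y)\, d\mu_0(x)\, d\mu_0(y)$ up to controllable error. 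Since the volume of $V_M$ contributes only a term of order $M \log M$, which is negligible at the $M^{-2}$ scale, taking $\liminf$ would yield the desired lower bound on $B_M$.

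For the conditional statement on $L_p(G) = \mu_{E_M}(G)$, I would run the same construction with the reference configuration built from quantiles of the target measure $\mu$ rather than $\mu_0$. The key observation is that for $M$ large enough, the empirical measure of any point in $V_M$ lies in any prescribed weak neighborhood $G$ of $\mu$, so $V_M \subset \tilde{G} := \{t \in S_M : \mu_{D_{t,M}} \in G\}$. Dividing by $B_M$ to convert the unnormalised weight into the probability $\mu_{E_M}(G)$, taking $\liminf$, then the infimum over $G$, and finally the limit as the truncation level $R \to \infty$ via monotone convergence, yields the desired inequality
\[
\inf_G \left[\liminf_{M \to \infty} \frac{1}{M^2} \log L_p(G)\right] \geq -\iint F(x,y)\, d\mu(x)\, d\mu(y) - \limsup_{M \to \infty} \frac{1}{M^2} \log B_M.
\]

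The main obstacle will be the handling of the logarithmic singularities of $F$. Since $\log|x-y|$ diverges on the diagonal and $\log x$ diverges at the boundary, a naive lower bound on the integrand fails whenever sample points get close together or close to zero. The standard resolution is to first work with the truncated bounded continuous $F_R$, for which the estimate on $V_M$ follows from uniform continuity, and then recover the full $F$ by monotone convergence. A secondary difficulty is that the quantile construction demands enough regularity of the reference measure; I would first prove the bound for absolutely continuous $\mu$ supported in $[\delta, 1-\delta]$, and then extend to a general $\mu \in \mathcal{M}([0,1])$ by a standard approximation argument exploiting the lower semi-continuity of the rate functional $I$.
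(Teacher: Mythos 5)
Your proposal is correct and follows essentially the same route as the paper: a reduction to measures with continuous densities bounded above and below, a quantile-based product box around an $M$-point discretization of the target measure (the paper's $D_N$), explicit separation estimates to control the logarithmic terms, and Riemann-sum convergence to the energy functional, with the $B_M$ bound obtained by specializing to the minimizer $\mu_0$. The only cosmetic difference is that the paper handles the singularities of $F$ directly via the minimum gap $d_{i,j,N}\gtrsim \epsilon/M$ between consecutive quantile boxes rather than through the truncation $F_R$, which is needed only for the upper bound.
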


\begin{proof}
Without loss of generality, we may assume that $\mu$ has a continuous density $f$ on $[0,1]$. Then, there exists a $\epsilon > 0$ such that $\epsilon \leq f(x) \leq \frac{1}{\epsilon}$ for $x\in[0,1]$. Next, for each $N$, define constants 
\[ 0 = s_{0,N} < r_{1,N} < s_{1,N} < \cdots < r_{M,N} < s_{M,N} = 1 \]
such that
\[ \int\limits_0^{r_{i,N}} f(x) dx = \frac{i - 1/2}{M} \hspace{1pc} \mbox{and} \hspace{1pc} \int\limits_0^{s_{i,N}} f(x) dx = \frac{i}{M} \]
Then we have
\[ \frac{\epsilon}{2M} \leq s_{i,N} - r_{i,M} \leq \frac{1}{2M\epsilon} \]
Now, define
\[ D_N = \left\{ (t_1, \dots, t_M) \in \mathbb{R}^M : r_{i,N} \leq t_i \leq s_{i,N} \right\} \]
Using the same notation from the previous lemma, for any neighbourhood $G$ of $\mu$ we can choose $N$ large enough so that $D_N \subset \tilde{G}$. Thus,
\begin{align*} 
\mu_{E_M}(G) &= \nu_M(\tilde{G}) \geq \nu_M(D_N)  \\
&= B_M^{-1} \int\cdots\int\limits_{D_N} \left( \prod_{k = 1}^M  t_k^\beta \right) \prod_{i < j} (t_i - t_j)^{2\gamma} dt_1\dots dt_M \\
& \geq B_{M}^{-1} \left( \frac{\gamma}{2M} \right)^M \prod_{i = 1}^M r_{i,N}^{\beta} \prod_{i < j} 	d_{i,j,N}
\end{align*}
where $d_{i,j,N} = \min\{ |x-y| : r_{i,N} \leq x \leq s_{i,N}, r_{j,N} \leq y \leq s_{j,N}\}$. Now, we observe that
\begin{equation*}
    \begin{gathered}
        \lim_{M \to \infty} \sum_{i = 1}^M \frac{M}{\beta} \log r_{i,N} = a \int \log x d\mu(x)
    \end{gathered}
\end{equation*}
and
\begin{equation*}
    \lim_{M \to \infty} \frac{1}{M^2} \sum_{i < j} \log(r_{j,N} - s_{i,N}) = 2a^2\gamma \int \log|x-y| d\mu(x)d\mu(y)
\end{equation*}
Thus,
\begin{equation*}
    \begin{gathered}
        \limsup_{M \to \infty} \frac{1}{M^2} \log \mu_{E_M}(G) \geq 
        - \iint F(x,y) d\mu(x)d\mu(y) - \liminf_{M \to \infty} \frac{1}{M^2} \log B_M
    \end{gathered}
\end{equation*}
After taking the infimum over $\mu$, this implies
\[ \liminf_{M \to \infty} \frac{1}{M^2} \log B_M \geq - \iint F(x,y) d\mu_0(x)d\mu_0(y) \]
Moreover, we have
\begin{equation*}
    \begin{gathered}
        \liminf_{N \to \infty} \log \mu_{E_M}(G) \geq 
        - \iint F(x,y)d\mu(x)d\mu(y) - \limsup_{M \to \infty} \frac{1}{M^2} \log B_M
    \end{gathered}
\end{equation*}
\end{proof}

\subsection{Cluster Differences in Maternal Characteristics}
\label{supp:plots}

To further characterise the clusters identified by minimising the Binder loss function in the application to the GUSTO data, we investigate the distribution of maternal pre-pregnancy BMI and educational attainment, both recorded in the GUSTO cohort study \citep{soh_2013_gusto}. Figure~\ref{fig:maternal_bmi_boxplot} shows the boxplots of maternal BMI before pregnancy, grouped by estimated cluster. While Cluster 1 (red) shows the highest variability, Cluster 2 (blue) shows a higher median. Cluster 3 (green) has the lowest median and variability. 

\begin{figure}[t!]
\centering
\includegraphics[width=.9\linewidth]{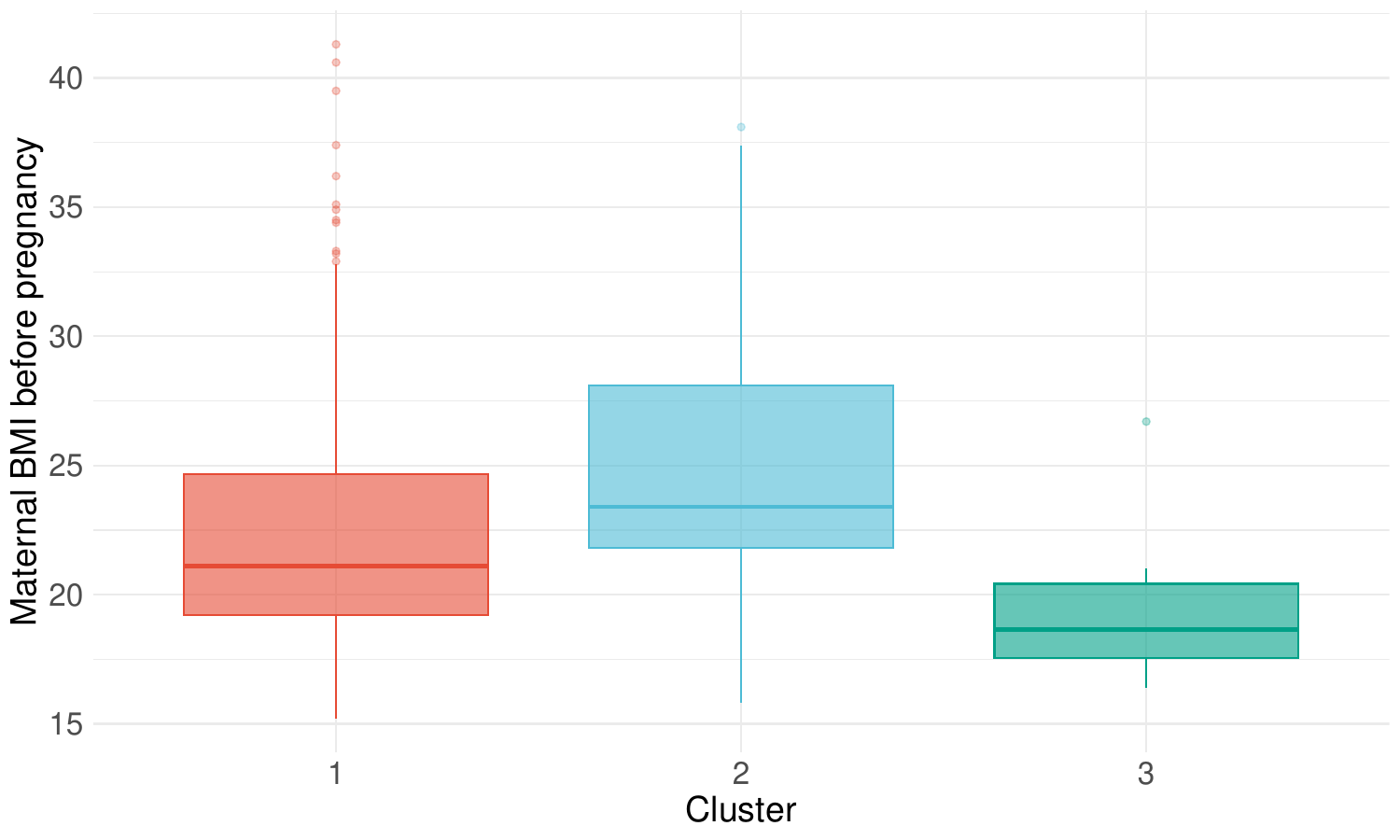}
\caption{GUSTO. Boxplots of pre-pregnancy maternal BMI across the three clusters estimated by minimising the Binder loss function.}
\label{fig:maternal_bmi_boxplot}
\end{figure}

\begin{figure}[t!]
\centering
\includegraphics[width=.9\linewidth]{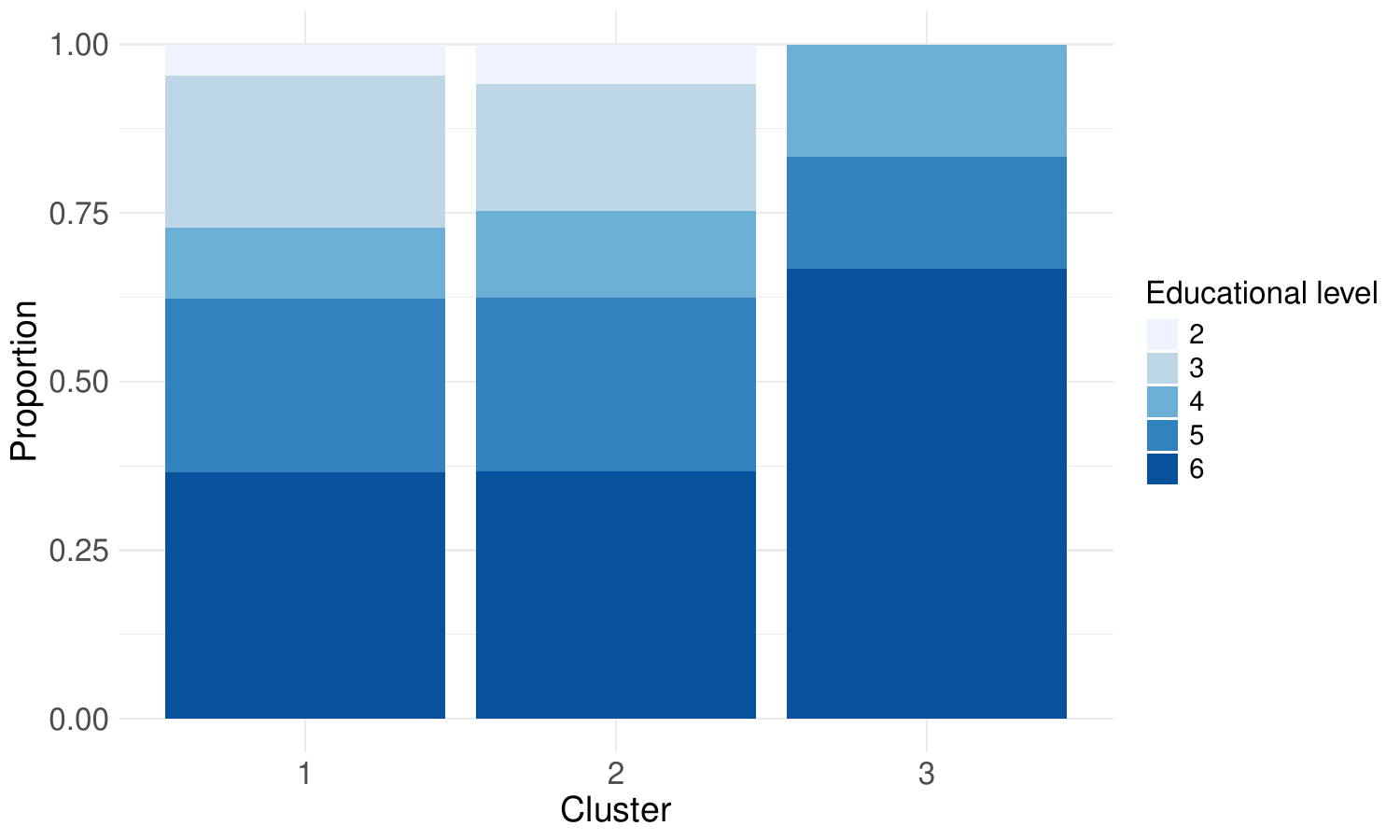}
\caption{GUSTO. Distribution of maternal education levels across the three clusters estimated by minimising the Binder loss function.}
\label{fig:maternal_education_stacked_barplot}
\end{figure}

Maternal education is measured on a 6-point ordinal scale, ranging from 1 (No education) to 6 (University degree, including Bachelors, Masters, or PhD). Intermediate categories reflect key stages of the Singaporean education system: Primary (PSLE), Secondary (GCE O/N Levels), vocational education (ITE/NITEC), and pre-university qualifications (GCE A Levels, Polytechnic, or Diploma). The distribution of educational levels across clusters is illustrated in Figure~\ref{fig:maternal_education_stacked_barplot}. Note that the level ``No education'' is not encountered in this dataset. The distribution of educational attainment in the first two clusters is nearly identical. The third cluster shows a greater proportion of mothers with the highest education level and no representation at levels 2 and 3; however, due to its small sample size the overall analysis indicates no significant differences in maternal education among the three clusters.


\end{document}